\documentclass[12pt, draftclsnofoot,onecolumn]{IEEEtran}
\usepackage[utf8]{inputenc} 
\usepackage[T1]{fontenc}    
\usepackage{hyperref}       
\usepackage{url}            
\usepackage{booktabs}       
\usepackage{amsfonts}       
\usepackage{nicefrac}       
\usepackage{microtype}      
\usepackage{xcolor}         
\usepackage{tikz,pgfplots,graphicx}
\usepackage{amstext,cite}
\usetikzlibrary{plotmarks,spy,backgrounds}
\pgfplotsset{compat=newest}
\usepackage{caption}
\usepackage{subcaption}
\usepackage{pstricks}
\usepackage{bbm}
\usepackage{amsmath,amssymb,amsthm}
\usepackage{enumitem}
\usepackage{algorithm}
\usepackage{algorithmic}

\usepackage{cleveref}

\definecolor{mycolor1}{rgb}{0.85098,0.32549,0.09804}%
\definecolor{mycolor2}{rgb}{0.92941,0.69412,0.12549}%
\definecolor{mycolor3}{rgb}{0.00000,0.44706,0.74118}%
\definecolor{mycolor4}{rgb}{0.30196,0.74510,0.93333}%
\definecolor{mycolor5}{rgb}{0.46667,0.67451,0.18824}%
\definecolor{mycolor6}{rgb}{0.4940,0.1840,0.5560} %

\theoremstyle{plain}

\theoremstyle{definition}

\theoremstyle{remark}
\newtheorem{remark}{Remark}


\newcommand{\T}{{\sf T}}

\newcommand{\RR}{{\mathbb{R}}}

\newcommand{\EE}{{\mathbb{E}}}

\newcommand{\C}{\mathbf{C}}

\newcommand{\X}{\mathbf{X}}

\newcommand{\x}{\mathbf{x}}

\newcommand{\bt}{\mathbf{t}}

\newcommand{\one}{\mathbf{1}}
\newcommand{\I}{\mathbf{I}}

\newcommand{\bmu}{\boldsymbol{\mu}}

\newcommand{\bbeta}{\boldsymbol{\beta}}

\definecolor{RED}{rgb}{0.7,0,0}
\definecolor{BLUE}{rgb}{0,0,0.69}
\definecolor{GREEN}{rgb}{0,0.6,0}
\definecolor{PURPLE}{rgb}{0.69,0,0.8}



\newtheorem{Theorem}{Theorem}

\newtheorem{Proposition}{Proposition}

\setlength{\columnsep}{0.21 in}
\begin{document}

\title{{An Achievable and Analytic Solution to Information Bottleneck for Gaussian Mixtures}}

\author{Yi~Song,~\IEEEmembership{Student~Member~IEEE,}
Kai~Wan,~\IEEEmembership{Member~IEEE,} 
Zhenyu~Liao,~\IEEEmembership{Member~IEEE,} 
Hao~Xu,~\IEEEmembership{Member~IEEE,} 
Giuseppe~Caire,~\IEEEmembership{Fellow~IEEE, }
Shlomo~Shamai~(Shitz),~\IEEEmembership{Fellow~IEEE, }
\thanks{
A short version of this paper   was accepted  by   the 2024 IEEE International Symposium on Information Theory. 
}
\thanks{The work of Y. Song and G. Caire have been supported by DFG Gottfried Wilhelm Leibniz-Preis. 
The work of K.~Wan is partially supported by the National Natural Science Foundation of China (via fund NSFC-12141107) and the Interdisciplinary Research Program of HUST (2023JCYJ012).
The work of Z.~Liao is partially supported by the National Natural Science Foundation of China (via fund NSFC-62206101), the Guangdong Key Lab of Mathematical Foundations for Artificial Intelligence Open Fund (OFA00003), the Fundamental Research Funds for the Central Universities of China (2021XXJS110), and Key Research and Development Program of Guangxi (GuiKe-AB21196034).
The work of H. Xu has been  supported in part by the European Union's Horizon 2020 Research and Innovation Programme under Marie Skłodowska-Curie Grant No. 101024636 and the Alexander von Humboldt Foundation. The work of S. Shamai has been supported by the
European Union's Horizon 2020 Research and Innovation Programme with grant agreement No. 694630.}
\thanks{Y. Song and G. Caire are with Faculty of Electrical Engineering and Computer Science, Technical University of Berlin, Berlin, Germany (email: yi.song@tuberlin.de, caire@tu-berlin.de). }
\thanks{K. Wan and Z. Liao are with School of Electronic Information and Communications, Huazhong University of Science and Technology, Wuhan, China (email: kai\_wan@hust.edu.cn, zhenyu\_liao@hust.edu.cn). }
\thanks{H. Xu is with the Department of Electronic and Electrical Engineering, University College London, London WC1E7JE, UK (e-mail: hao.xu@ucl.ac.uk).}
\thanks{S. Shamai (Shitz) is with the Viterbi Electrical Engineering Department, Technion–Israel Institute of Technology, Haifa 32000, Israel (e-mail: sshlomo@ee.technion.ac.il)}
}


\maketitle

\begin{abstract}
In this paper, we study a remote source coding scenario in which binary phase shift keying (BPSK) modulation sources are corrupted by additive white Gaussian noise (AWGN). 
An intermediate node, such as a relay, receives these observations and performs additional compression to balance complexity and relevance. 
This problem can be further formulated  as an information bottleneck (IB) problem with Bernoulli sources and Gaussian mixture observations. However,  no  closed-form solution exists for this IB problem. 
To address this challenge, we propose a unified achievable scheme that employs three different compression/quantization strategies for intermediate node processing by using two-level quantization, multi-level deterministic quantization, and soft quantization with the hyperbolic tangent ($\tanh$) function, respectively. 
In addition, we extend our analysis to the vector mixture Gaussian observation problem and explore its application in machine learning for binary classification with information leakage. 
Numerical evaluations show that the proposed scheme has a near-optimal performance over various signal-to-noise ratios (SNRs), compared to  the Blahut-Arimoto (BA) algorithm, and has better performance than some existing numerical methods such as the information dropout approach.
Furthermore, experiments conducted on the realistic MNIST dataset also validate the superior classification accuracy of our method compared to the information dropout approach.
\end{abstract}

\begin{IEEEkeywords}
Information bottleneck, Gaussian mixture, Blahut-Arimoto algorithm, remote source coding, binary classification with information leakage.
\end{IEEEkeywords}

\IEEEpeerreviewmaketitle
\section{Introduction}
\label{sec:intro}
\subsection{Introduction of IB and its applications in communications}
The information bottleneck (IB) serves as a fundamental framework widely used in both machine learning and information theory to understand and regulate the flow of information within a data processing system. Introduced by Tishby et {\it al.}~\cite{tishby2000information}, the IB problem can be formulated as extracting information from a target random variable $Y$ through an observation $X$ that is correlated with $Y$. This is achieved by establishing the Markov chain $Y \longrightarrow X \longrightarrow T$, where $T$ extracts the information from the observation $X$. The core idea of the IB is to wisely balance the tradeoff between two competing objectives in constructing $T$:
\begin{itemize}
    \item \textbf{Complexity} (or compression) that measures the information required to represent the observation $X$, so that $T$ is a \emph{compact} representation of the observation.
    \item \textbf{Relevance} (or prediction) that measures the information retained in the compressed representation to make accurate predictions about the target variable $Y$, so that $T$ is an informative representation of $Y$.
\end{itemize}
These objectives are typically evaluated by the mutual information between the observation and the compressed representation $I(X; T)$, as well as between the compressed representation and the target variable $I(Y; T)$. The IB problem seeks the optimal conditional probability $P_{T|X}$ by maximizing the relevance $I(Y; T)$ with constrained complexity $I(X; T)$.

Due to its mathematical complexity, the optimal solution for the IB problem was only derived in closed-form for binary symmetric or Gaussian  sources~\cite{zaidientropy2020}, i.e., $X$ and $Y$ are both binary or both Gaussian. 
In the general case, however, the solution of the IB problem relies exclusively on numerical algorithms. For example, a numerically optimal solution can be achieved using the Blahut-Arimoto (BA) algorithm for the IB problem~\cite{tishby2000information}. Extending the BA algorithm, \cite{Hassanpour2017overview} presents several alternative iterative algorithms based on clustering techniques or deterministic quantization methods.
Furthermore, an alternative approach proposed in \cite{alemi2017deep} involves the use of neural networks to establish a lower bound for the Lagrangian IB problem based on samples of $(X, Y)$ pairs.

The IB problem has also found widespread applications in various fields such as communications and machine learning (refer to~\cite{zaidientropy2020, goldfeld2020information} for more details on the application of IB).  
It has been proven in~\cite{dobrushin1962information, witsenhausen1975conditional, courtade2013multiterminal} that 
the IB problem is essentially equivalent to the remote source coding problem with logarithm loss distortion measure~\cite{courtade2013multiterminal}. The authors in~\cite{sanderovich2008communication} have established the connection between operational meaning of the IB problem and relay networks, where the relay with oblivious processing could not directly decode messages from the received signals. This work was then extended to 
 scenarios with multiple sources and relays for cloud radio access networks (C-RANs)~\cite{aguerri2019capacity}. Other studies~\cite{xu2021information, xuhao2021information, xu2022distributed, song2023distributed} have explored similar relay-based setups, specifically under Rayleigh fading channels. These scenarios require relays to consider channel state information when forwarding signals due to the coupling between received signals and channels. The IB problem provides crucial insights and techniques for optimizing data compression in such distributed communication environments.

\subsection{Applications of IB in machine learning}
The IB approach has been widely used in supervised, unsupervised, as well as representation machine learning (ML) tasks (such as inference, prediction, classification, and clustering)~\cite{bengio2013representation, achille2018information} to characterize or explain how relevant information/representations $T$ can be extracted from observations $X$ about a target $Y$, where the two mutual information $I(Y;T)$ and $I(X;T)$ in the IB approach represent the empirical relevance and complexity, respectively. 
Thus, solving the IB problem in a ML context naturally leads to a good tradeoff between fitting the training data and generalizing to unseen test data, which is the ultimate goal of ML~\cite{bishop2006pattern}.
It has been believed, for example, that IB is an efficient way to control generalization error in deep neural networks (DNNs), and that IB provides insights in understanding how neural networks learn to extract relevant features from data and to regularize models for better generalization~\cite{shwartz2017opening, kim2021distilling, rishby2015deep, saxe2019information}. 
In addition, the IB framework can be directly used a metric for constructing more efficient DNN models, by minimizing redundancy between adjacent layers, measured by mutual information, rather than through traditional strategies such as pruning, quantization, and knowledge distillation~\cite{dai2018compressing}.
Nonetheless, from a ML theoretical perspective, much less is known about the optimal IB solution, nor its impact on the generalization performance of the ML model, even for the most fundamental Gaussian mixture model (GMM).
In this paper, we reveal an interesting connection between the IB approach and the binary GMM classification problem with information leakage, in which case IB aims to discover a compressed yet informative representation of the GMM input, so as to achieve the minimal misclassification rate  under limited privacy leakage.

\subsection{Main contributions}
\begin{figure}[ht!]
\centering
\begin{tikzpicture}[node distance = 0.03\textwidth]
\tikzstyle{neuron} = [circle, draw=black, fill=white, minimum height=0.05\textwidth, inner sep=0pt]
\tikzstyle{rect} = [rectangle, rounded corners, minimum width=0.1\textwidth, minimum height=0.05\textwidth,text centered, draw=black, fill=white]
    \node [neuron] (neuron1) {{$Y^n$}};
     \node [right of=neuron1, xshift=0.15\textwidth, neuron] (Y) {$X^n$};
    \node [right of=Y, xshift=0.15\textwidth, rect] (Enc) {Encoder};
    \node [right of=Enc, xshift=0.3\textwidth, rect] (Dec) {Decoder};
     \node [right of=Dec, xshift=0.15\textwidth, neuron] (recon) {$\widehat{Y}^n$};
     \draw [->,line width=1pt] (neuron1) -- (Y);
      \draw [->,line width=1pt] (Y) -- (Enc);
     \draw [->,line width=1pt] (Enc) -- node[above] {\small{$f_n(X^n) \in \{1, 2, ..., 2^{nR}\}$}}(Dec);
     \draw [->,line width=1pt] (Dec) -- (recon);
\end{tikzpicture}
\caption{The system diagram of the remote source coding theory.}  
\label{RSC}
\end{figure}
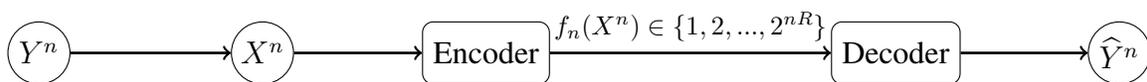
In this paper, we first consider a remote source coding problem with i.i.d.\@ Binary Phase Shift Keying (BPSK) modulation inputs, as illustrated in Fig.~\ref{RSC}. The modulated signal is sent through a  Gaussian additive noise (AWGN) channel. An intermediate node, such as a relay, receives the observation and performs further compression to achieve the optimal tradeoff between complexity and relevance. When the  distortion measure is log-loss, 
to characterize the rate-distortion region for this remote source coding problem is equivalent to solve the IB problem with a Bernoulli source and a Gaussian mixture observation. 
The main contribution of this paper is to 
provide  achievable and   analytic solutions  for this IB problem.  More precisely, 
\begin{itemize}
    \item
  To address the challenge of finding a closed-form solution to the mixture Gaussian IB problem, we propose three analytically achievable schemes that employ different compression/quantization  strategies: two-level quantization, multi-level deterministic quantization, and soft quantization with the $\tanh$ function. Each approach excels in a different region of the tradeoff curve, providing insight into their performance characteristics.
In numerical evaluations,     we compare the proposed  schemes with the numerical solution using the Blahut-Arimoto (BA) algorithm, which can be seen as the approximate optimal solution. Extensive   numerical results under different signal-to-noise ratio (SNR) show that the gap to the BA  algorithm is limited. Furthermore, our proposed  schemes outperform the numerical information dropout approach \cite{achille2018information}.
    \item We extend our proposed achievable schemes to tackle the vector mixture Gaussian observation IB problem, thereby broadening the applicability of our framework to more complex scenarios.
    \item Finally, we investigate the connection between the IB framework and the binary classification problem with information leakage, where the IB serves to extract a maximally compressed yet informative feature for the classification task, under the constraint of limited privacy leakage.  
   We  extend 
   the proposed schemes for the vector mixture Gaussian observation IB problem to this learning application. Experiments on     the MNIST dataset also show the advantage in performance provided by our schemes compared to the   information dropout method. 
\end{itemize}

\subsection{Notations and organization of the paper}
We denote the upper-case letters as random variables, and lower-case letters as their realizations. For a random variable $X$, calligraphic symbol  $\mathcal{X}$ represents the support of $X$;  we denote 
$\EE[X]$, $H(X)$ and $h(X)$ the expectation, the entropy, and  the differential entropy of $X$, respectively.   
For two random variables $X$ and $Y$, we use $I(X; Y)$ to denote their mutual information. We take the base of the logarithm as $e$.
 We also denote $P_X$ as the probability mass function of $X$, while $p_X$ denotes the probability density function of $X$.  Moreover, $\mathbb{P}(X \in \mathcal{A})$ is denoted as the probability of the event $X \in \mathcal{A}$. 
We use $\mathcal{N}(\mu,\sigma^2)$ for Gaussian distribution with mean $\mu$ and variance $\sigma^2$.
The operator $\lceil\cdot \rceil$ denotes the ceiling function, and $\oplus$ denotes the inclusive `or' operation. 
$\mathbbm{1}_{\{\mathcal{A}\}}$ denotes the indicator function of the condition $\mathcal{A}$, i.e., it gives $1$ when $\mathcal{A}$ is satisfied, and $0$ otherwise.

This paper is organized as follows. 
The system model of the considered IB problem and some preliminary results are introduced in Section~\ref{sec:system}. 
Our main technical results on an achievable closed-form solution to the IB problem is given in Section~\ref{sec: main_results}. 
Extension on the vector mixture Gaussian observation is presented in Section \ref{sec: application}.
Section \ref{sec:connection_ML} discusses the   application of the proposed schemes in machine learning. 
Numerical results are provided in Section~\ref{sec: simulation} to validate the proposed IB scheme, on both synthetic and real-world datasets.
Finally, the conclusion is placed in Section~\ref{sec: conclusion}.

\section{System Model and Preliminary Results}
\label{sec:system}
\subsection{Formulation of the IB Problem}
\label{sub:IB formulation}
In this paper, we consider the remote source coding problem, where the sequences of i.i.d.\@ output from the Binary Phase Shift Keying (BPSK) flow through an additive white Gaussian noise (AWGN) channel.
The intermediate node receives the noisy observations, and performs further compression, e.g., by solving an IB problem, to achieve the optimal tradeoff between the complexity and relevance, for the decoder to estimate the source sequences.

Assume the source $Y^n =(Y_1, Y_2,\ldots,Y_n)$ is drawn i.i.d.\@ from  a symmetric Bernoulli distribution (that is, $Y_i = \pm 1$ with $\mathbb{P}(Y_i=-1) = \mathbb{P}(Y_i=1) = 1/2$ for each $i\in \{1,2,\ldots,n\}$), and the observation $X^n=(X_1,X_2,\ldots,X_n) \in\RR^n$ follows a Gaussian mixture where
\begin{equation}\label{eq:def_model_scalar}
  X_i = \beta Y_i + {\epsilon_i}, \ \forall i\in \{1,2,\ldots,n\}, 
\end{equation}
for some \emph{deterministic} scalar $\beta \in \RR^{+}$ (without loss of generality, we assume that $\beta$ is non-negative) and i.i.d.\@ AWGN ${\epsilon_i} $.
The intermediate node applies an encoding function  $f_{\rm enc}^n(\cdot)$:
\begin{align}
    f_{\rm enc}^n \colon \mathcal{X}^n \longrightarrow \{1, 2, \ldots, 2^{nR}\},
\end{align}
where $R$ represents the coding rate. After receiving  $f^n_{\rm enc}(X^n)$
 the decoder reconstructs ${T}^n$ with alphabet ${\mathcal{T}}^n$ through a decoding function
\begin{align}
    f_{\rm dec}^n\colon \{1, 2, \ldots, 2^{nR} \} \longrightarrow {\mathcal{T}}^n.
\end{align}
Given a distortion requirement $D$, the decoder aims to achieve 
\begin{align}\label{eq:constraint_RSC}
    \mathbb{E}[d_n({T}^n, Y^n)] \leq D,
\end{align}
where $d_n({T}^n, Y^n) = \frac{1}{n} \sum_{i=1}^n d(Y_i, {T}_i)$, under some distortion measure 
$d : {\mathcal{T}} \times \mathcal{Y} \rightarrow \mathbb{R}_+$.

With large enough block length, i.e., $n \rightarrow \infty$,
the infimum of the rate to encode the observations given distortion requirement $D$ is given by ~\cite{hanineq}
\begin{align}\label{eq:RD_function}
    R(D) = \min_{P_{T|X}: \mathbb{E}[d(Y, T)] \leq D} &I(X; T),
\end{align}
where $X|Y \sim \mathcal N( \beta Y, 1)$ with $Y = \pm 1$, $\mathbb{P}(Y=-1) = \mathbb{P}(Y=1) = 1/2 $, and $P_{X,Y,T}=P_{X,Y}P_{T|X}$.
\begin{figure}[htb]
\centering
\begin{tikzpicture}[node distance = 0.03\textwidth]
\tikzstyle{neuron} = [circle, draw=black, fill=white, minimum height=0.05\textwidth, inner sep=0pt]
\tikzstyle{rect} = [rectangle, rounded corners, minimum width=0.1\textwidth, minimum height=0.05\textwidth,text centered, draw=black, fill=white]
    \node [neuron] (neuron1) {{$Y$}};
     \node [right of=neuron1, xshift=0.1\textwidth, neuron] (Y) {$X$};
    \node [right of=Y, xshift=0.1\textwidth, rect] (Enc) {Encoder};
    \node [right of=Enc, xshift=0.1\textwidth, neuron] (U) {$T$};
     \draw [->,line width=1pt] (neuron1) -- (Y);
      \draw [->,line width=1pt] (Y) -- (Enc);
         \draw [->,line width=1pt] (Enc) -- (U);
\end{tikzpicture}
\caption{\small{Diagram of the information bottleneck problem.}}  
\label{IB}
\end{figure}
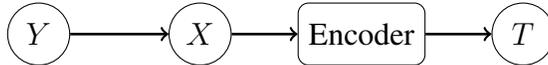

Next, we consider the case where the decoder produces a ``soft'' reconstruction of $Y^n$, i.e., 
the representation variable $T$ is a probability vector over $\mathcal{Y}$. The fidelity of a soft estimate is measured through the log-loss distortion \cite{courtade2013multiterminal}, given as 
\begin{align}
d(t, y) = \log \frac{1}{t(y)},
\end{align}
where $t(y)$ denotes the probability of $T$ evaluated at $T=y$ when given $Y = y$.
In this case, the distortion constraint in \eqref{eq:RD_function} given as $ \mathbb{E}[d(Y, T)] \leq D$ can reduce to $H(Y|T) \leq D$. By noticing that $I(Y;T) = H(Y) - H(Y|T)$, $H(Y)$ is fixed by $P_Y$ (one bit in our case) and therefore minimizing $H(Y|T)$ is equivalent to maximizing $I(Y;T)$ . 
Therefore, the solutions $(R,D)$ of \eqref{eq:RD_function} coincide with that of the IB problem \cite{courtade2013multiterminal} (as illustrated in Fig. \ref{IB}).
 \begin{subequations}
     \begin{align}
    \max_{P_{T|X}} \quad & I(Y; T)  \\
    \textrm{s.t.} \quad &I(X; T) \leq R, \label{eq:obj_constr} 
\end{align}\label{eq:obj_func_1}%
\end{subequations}
where $X|Y \sim \mathcal N( \beta Y, 1)$ with $Y = \pm 1$, $\mathbb{P}(Y=-1) = \mathbb{P}(Y=1) = 1/2 $, and $P_{X,Y,T}=P_{X,Y}P_{T|X}$.
In other words, we are interested in designing the conditional probability $P_{T|X}$ to construct an intermediate representation $T$ of $X$ so that:
\begin{itemize}
  \item[(i)] $T$ contains sufficiently rich information (in the sense that $I(Y; T)$ is large) on the source $Y$, and
  \item[(ii)] the bottleneck constraint is satisfied (with $I(X; T) \leq R$).
\end{itemize}

\subsection{Approximately numerically optimal scheme: Blahut-Arimoto (BA) algorithm}
\label{sec: BA}
A closed-form solution to the IB problem in \eqref{eq:obj_func_1},  beyond the case of jointly Gaussian and symmetric Bernoulli $(X, Y)$, to the best of our knowledge, remains an open problem \cite{goldfeld2020information}.
The Lagrangian form of \eqref{eq:obj_func_1} over the conditional probability $P_{T|X}$, is given by
\begin{align}
    L(\lambda) &= \min_{P_{T|X}} I(X; T) - \lambda 
 I(Y; T), \label{eq:obj_func_2} 
\end{align}
where, according to \cite{blahut1972computation}, ${\lambda}^{-1}$ can be defined as the slope of the curve of $I(Y; T)$ versus $R$, i.e., ${\lambda}^{-1} \overset{\Delta}{=} \frac{\partial I(Y; T)}{\partial R}$.
Thus $L(\lambda)$ can represent   the tradeoff between the mutual information $I(Y; T)$ and $I(X; T)$. 

Following the computation on rate-distortion function by the well-known Blahut-Arimoto (BA) algorithm \cite{blahut1972computation},  Tishby \textit{et. al.\@} in \cite{tishby2000information} proposed to apply an iterative algorithm  to solve the IB problem \eqref{eq:obj_func_2} numerically by initializing $P_{T|X}(t|x)$ with the randomly generated normalized probability $P_{T|X}^{\rm init}(t|x)$ and the algorithm updates three probabilities iteratively:
\begin{subequations}\label{eq:numerical_solution}
\begin{align}
  P_T(t) &= \sum_{x \in \mathcal{X}} P_{T|X}(t|x) P_X(x), \\
    P_{Y|T}(y|t) &= \frac{\sum_{x \in \mathcal{X}} P_{X|Y}(x|y) P_{T|X}(t|x) P_Y(y)}{P_T(t)} , \\
     P_{T|X}(t|x) &= \frac{P_T(t)}{Z(x, \lambda)}  \exp\left(-\lambda \sum_{y \in\{-1, 1\}} P_{Y|X}(y| x) \ln \left(\frac{P_{Y|X}(y|x)}{P_{Y|T}(y|t)}\right)\right),
\end{align}    
\end{subequations}
where $Z(x, \lambda)$ is the normalization factor which ensures that $\sum_{t \in \mathcal{T}} P_{T|X}(t|x)$ is equal to $1$.
Note that if $(X,Y)$ is with continuous probability distribution, the BA algorithm is used after the discretization on $X$ and $Y$; thus the resulting distribution $P_{T|X}$ is also discretized. 
However, the BA algorithm does not provide a closed-form solution on the IB problem and its computational complexity is high, in particular for the continuous case. So using the BA algorithm to find the  solution for the IB problem is generally hard. 
In the following section, we will derive several analytically achievable schemes to the problem \eqref{eq:obj_func_1}, and we can identify the performance of our derived solutions by comparing them with the BA algorithm in the simulations in Section~\ref{sec: simulation}.

{
\subsection{State-of-the-art scheme: information dropout method}
\label{sub:ID method}
}

As a state-of-the-art scheme, the information dropout method applies a multiplicative noise as a regularizer to extract essence information under limited capacity \cite{achille2018information}. 
Here, the intermediate representation $T$ takes a structured form, defined as
\begin{align}\label{eq:t_prod}
 T = f_1(X) \odot \eta,   
\end{align}
where $f_1(X)$ is the output of a deep neural network (DNN) with input $X$, and the multiplicative noise $\eta$ follows a log-normal distribution, i.e., $\eta \sim \log \mathcal{N}(0, f_2^2(X))$, with the variance parameter $f_2(X)$ determined by another DNN with input $X$. The parameters of the networks are updated by the optimization problem \eqref{eq:obj_func_2}. In the simulation, the information dropout method is used as a benchmark for comparison.

\section{Achievable bounds for Binary-Gaussian IB problem}
\label{sec: main_results}
With the goal of developing closed-form achievable bounds for \eqref{eq:obj_func_1}, we consider the following generic form
\begin{equation}\label{eq:t}
    T = f_{\text{non-linear}} (X) + N,
\end{equation}
where $f_{\text{non-linear}} \colon \RR \rightarrow \RR$ is a non-linear function, and $N$ is a random variable independent of $X$. 
 Note that  the operation field of the sum in~\eqref{eq:t}  could be real number or binary. 
In the rest of this section, we present achievable bounds for three choices of \eqref{eq:t}, namely, one-bit quantization in Section~\ref{subsec:one-bit}, deterministic quantization in Section~\ref{sec:det_Q}, and soft quantization with $\tanh$ function in Section~\ref{subsec:tanh}.
Under the form of~\eqref{eq:t}, the objective mutual information $I(Y; T)$ writes 
    \begin{align}\label{eq:I_y_t}%
    I(Y; T) &= h(T) - h(T|Y),   \nonumber\\
     &=\! \!-\!\int_{\!-\!\infty}^{\infty} \!\frac{p_{T|Y}(t |1) \!+\! p_{T|Y}(t |\!-\!\!1)}{2} \!\ln\!\frac{p_{T|Y}(t |1) \!+\! p_{T|Y}(t |\!-\!\!1)}{2} \!dt \nonumber \\
    &+ \int_{-\infty}^{\infty} \frac{p_{T|Y}(t |1)}{2}\ln p_{T|Y}(t |1) dt \nonumber \\
    &\quad + \int_{-\infty}^{\infty} \!\frac{p_{T|Y}(t |\!-\!\!1)}{2}\ln\! p_{T|Y}(t|\!-\!\!1 ) dt,
\end{align}
with two conditional probability densities $p_{T|Y}(t|y=1)$ and $p_{T|Y}(t|y=-1)$ given by 
\begin{align}\label{eq:conditional proba}%
    &p_{T|Y}(t |\pm 1) = {\int_{-\infty}^{\infty} p_{X|Y}(x|\pm 1) ~p_{T|X}(t|x) ~ dx} \nonumber \\
    &= \int_{-\infty}^{\infty} \frac{1}{\sqrt{2\pi}} e^{\left(-\frac{(x \mp {\beta})^2}{2} \right)} p_N(t \!-\! f_{\text{non-linear}}(x)) dx,
\end{align}
where $p_N(\cdot)$ denotes the probability density function of the random variable $N$ in \eqref{eq:t}. 

\subsection{An achievable IB solution via two-level random quantization}
\label{subsec:one-bit}
Given a Gaussian mixture observation $X$, 
we first employ the two-level quantization by taking $\overline{X} = f_{\text{non-linear}}(X) =\mathbbm{1}_{X \geq 0}$, where the function is defined in the notation. 
This results in a Markov chain  $Y \rightarrow X \rightarrow \overline{X} \rightarrow T$.  
By the data processing inequality, we have $I(\overline{X}; T) \geq I(X; T)$, 
and therefore a lower (i.e., achievable) bound to the original IB in \eqref{eq:obj_func_1} as
\begin{subequations} \label{eq:IB_HQ_LB}
\begin{align}
    \max_{p_{T| \overline{X}}} \quad &I(Y; T)  \\
    \text{s.t.} \quad &I(\overline{X}; T) \leq R.
\end{align}
\end{subequations}

It is important to note here that both $\overline{X}$ and source $Y$ follow a Bernoulli distribution with equal probability, i.e., $\text{Bern}(1/2)$.  
This scenario is known as doubly symmetric binary sources (DSBS) and has been thoroughly investigated in information theory, see~\cite{goldfeld2020information}. 
Hence, the optimal design is $T=\overline{X} \oplus \overline{N}$, where $\overline{N} \in \{0, 1\}$ follows a Bernoulli distribution with parameter $q$, i.e., $\text{Bern}(q)$. 
This leads to the following result. 

\begin{Proposition}[An achievable IB solution via two-level quantization]\label{prop:one-bit} 
For the IB problem in~\eqref{eq:IB_HQ_LB} with symmetric Bernoulli $Y$ and $X|Y \sim \mathcal N(y \beta, 1)$ as in \eqref{eq:def_model_scalar}, 
then for $0 \leq R \leq \ln 2$, the optimal rate $I^{\star}(Y;T)$ is lower bounded by $I_1(q)$, given by
\begin{align}\label{eq:I_1(q)}
I_1(q) = \ln 2 - H(p(1-q) + q(1-p)), 
\end{align}
where $p = P_{\overline{X}|Y} (\overline{x}= 1|y =-1) = P_{\overline{X}|Y} (\overline{x}= 0|y =1) =   \int_{0}^{\infty} \frac{1}{\sqrt{2 \pi}} \exp(-(x + \beta)^2/2 ) dx$, and where $q$ is the solution to \footnote{$q$ represents the conditional probability $P_{T|\overline{X}}(t=0|\overline{x}=1)$ or $P_{T|\overline{X}}(t=1|\overline{x}=0)$ }
\begin{align}\label{eq:q}
 \ln 2 - H(q) =R,
\end{align}
with $H(q) = -q\ln(q) - (1-q)\ln(1-q)$, and . 
\end{Proposition}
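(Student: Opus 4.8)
The plan is to reduce the problem to the classical doubly symmetric binary source (DSBS) rate-distortion / information bottleneck computation, which is already available in the information theory literature (e.g.\ \cite{goldfeld2020information}), and then simply track how the two binary "flip" probabilities compose. The key observations to assemble are: (i) the hard-threshold map $\overline{X}=\mathbbm{1}_{X\geq 0}$ turns the Gaussian mixture into a binary symmetric pair $(\overline{X},Y)$; (ii) by symmetry of the BPSK model, $\overline{X}$ is $\mathrm{Bern}(1/2)$ and the crossover probability between $Y$ and $\overline{X}$ is exactly $p=\int_0^\infty \tfrac{1}{\sqrt{2\pi}}e^{-(x+\beta)^2/2}\,dx=Q(\beta)$; (iii) for a DSBS with the bottleneck variable constrained through a binary symmetric test channel $T=\overline{X}\oplus\overline{N}$, $\overline{N}\sim\mathrm{Bern}(q)$, everything is governed by binary entropies.

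First I would establish that for two $\mathrm{Bern}(1/2)$ sources linked by crossover $p$, and with $T=\overline{X}\oplus\overline{N}$ where $\overline{N}\sim\mathrm{Bern}(q)$ independent of $(\overline X,Y)$, the pair $(Y,T)$ is again a binary symmetric pair whose crossover is the composition $p\ast q := p(1-q)+q(1-p)$ (this is just the mod-2 convolution of the two independent Bernoulli "noises"). Hence $T\sim\mathrm{Bern}(1/2)$ as well, $H(Y|T)=H(p\ast q)$, so $I(Y;T)=\ln 2 - H(p(1-q)+q(1-p))$, which is the claimed $I_1(q)$. Similarly $I(\overline X;T)=H(T)-H(T|\overline X)=\ln 2 - H(q)$, since $T$ is $\mathrm{Bern}(1/2)$ and given $\overline X$ it is $\mathrm{Bern}(q)$. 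Setting the rate constraint $I(\overline X;T)\leq R$ tight gives $\ln 2 - H(q)=R$, i.e.\ \eqref{eq:q}, and since $H$ is monotincreasing on $[0,1/2]$ this has a unique solution $q\in[0,1/2]$ for each $R\in[0,\ln 2]$.

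The last ingredient is optimality within the restricted class, i.e.\ that among all $p_{T|\overline X}$ meeting $I(\overline X;T)\le R$ the binary symmetric channel is the best for maximizing $I(Y;T)$; I would invoke the known converse for the DSBS information bottleneck (Mrs.\ Gerber's Lemma, or directly the solved binary IB in \cite{goldfeld2020information,zaidientropy2020}), which states precisely that the optimal test channel is the BSC and yields $I^\star(Y;T)$ achieving $\ln2-H(p\ast q)$ with $H(q)=\ln2-R$. Combined with the data-processing chain $Y\to X\to\overline X\to T$ giving $I(\overline X;T)\ge I(X;T)$ — so that any point feasible for \eqref{eq:IB_HQ_LB} is feasible for the original \eqref{eq:obj_func_1} — this shows $I_1(q)$ is an achievable (lower) bound on $I^\star(Y;T)$ of the original problem, which is the assertion of the proposition.

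The main obstacle I anticipate is not the composition-of-Bernoulli-noises computation (routine) but cleanly citing/justifying the converse that the BSC test channel is optimal for the binary-symmetric bottleneck on the interval $R\in[0,\ln2]$; one must make sure the statement is only claimed as a lower bound on the original $I^\star(Y;T)$ (as the proposition carefully does), so that the DSBS optimality is used only to pin down the value $I_1(q)$ under the two-level-quantization restriction, not to claim optimality for the full Gaussian-mixture IB. A minor secondary point is verifying $p=Q(\beta)<1/2$ so that $p\ast q$ stays in $[0,1/2]$ and all entropy manipulations are on the correct monotone branch.
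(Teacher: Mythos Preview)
Your proposal is correct and follows essentially the same approach as the paper's proof: both reduce to the doubly symmetric binary source IB by invoking the known result (cited as \cite{zaidientropy2020}) that the optimal test channel is the BSC $T=\overline{X}\oplus\overline{N}$, then compute $I(\overline{X};T)=\ln 2-H(q)$ and $I(Y;T)=\ln 2-H(p\ast q)$ with $p=Q(\beta)$. Your write-up is in fact more detailed than the paper's short appendix---the explicit Bernoulli-composition calculation, the monotonicity argument for uniqueness of $q$, and the check that $p<1/2$ are all helpful elaborations the paper leaves implicit.
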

\begin{proof}[Proof of Proposition~\ref{prop:one-bit}]
    See Appendix \ref{sec:proof_one_bit}.
\end{proof}
\noindent
Note that, the IB solution in Proposition~\ref{prop:one-bit} is limited in that it only holds for $0 \leq R \leq \ln 2$; if $R>\ln 2$, $H(q)$ in~\eqref{eq:q} is negative and thus $q$ does not exist.
\begin{remark}[IB solution with two-level quantization for $R\in [0, \ln2)$]\label{rem:one-bit}
When $R=0$ nats, according to the definition of $q$ in \eqref{eq:q}, we have $q = 1/2$, leading to an optimal $I(Y; T)$ of $0$ based on \eqref{eq:I_1(q)}. Similarly, for $R= \ln 2$ nats, the optimal value of $q$ that satisfies \eqref{eq:q} can be either $0$ or $1$. From \eqref{eq:I_1(q)}, we obtain $I(Y; T) = 1 - H(p)$ in this case.
\end{remark}

\subsection{An achievable IB solution via multi-level deterministic quantization}
\label{sec:det_Q} 
In our second approach, we set random noise $N=0$ in \eqref{eq:obj_func_1} and employ an $L$-level deterministic quantizer $\widehat{Q}(\cdot)$ to map the observation $X$ into $L$ bins, with the intermediate representation $T$ given by
\begin{align}
    T &= f_{\text{non-linear}}(X) \overset{\Delta}{=} \widehat{Q}(X).
\end{align}
Here, the quantization points are denoted as ${\{q_i\}}_{i=1}^{L-1}$, with $q_0 = -\infty$ and $q_L = \infty$, and $T$ is quantized as $t_j$ (the center of the quantization region) for $X \in [q_{j-1}, q_j]$, $\forall~ j \in {1, \cdots, L}$.
Consequently, the conditional probability in \eqref{eq:conditional proba} becomes
\begin{align}\label{eq:det_p_t_y}
&\mathbb{P}(T = t_j = \frac{q_{j-1} + q_j}{2}|Y) = \mathbb{P}(q_{j-1} \leq X \leq q_j|Y) \nonumber \\
&= Q(q_{j-1} - \beta Y) - Q(q_j - \beta Y),  \forall j \in {1, \cdots, L},
\end{align}
with $Q(t) = \int_{t}^{\infty} \frac{1}{\sqrt{2\pi}}\exp(-x^2/2) dx$ is the Gaussian Q-function.

Since the mapping from $X$ to $T$ is deterministic, the mutual information $I(X; T)$ becomes the entropy of $T$, i.e., $I(X; T) = H(T)$. We obtain a lower bound to the original IB in \eqref{eq:obj_func_1} by solving the following problem
\begin{subequations} \label{eq:IB_DQ_LB}
\begin{align}
    \max_{\{q_i\}_{i=1}^{L-1}} \quad &I(Y; T)  \\
    \text{s.t.} \quad &H(T) \leq R. \label{eq:constriat_detfunc}
    \end{align}
\end{subequations}

To solve the problem \eqref{eq:IB_DQ_LB} analytically, we can obtain a lower bound by setting the quantization level $L$ as $\lceil e^R \rceil$ and the probability of quantized $T$ space as 
\begin{align}\label{eq:prob_t}
\mathbb{P}(T = t_j) = 
    \begin{cases}
        \frac{1}{\lceil e^R \rceil} - \Delta, & \text{if } j = 1,\\
        \frac{1}{\lceil e^R \rceil} +  \frac{\Delta}{\lceil e^R \rceil-1}, & \text{if } j \neq 1,
    \end{cases}
\end{align}
where the shift value $\Delta$ is determined to satisfy constraint \eqref{eq:constriat_detfunc} as
\begin{align}\label{eq:Delta}
    &H(T) =-\left( \frac{1}{\lceil e^R \rceil} - \Delta\right) \log \left( \frac{1}{\lceil e^R \rceil} - \Delta\right) \nonumber \\
    &- \sum_{j=2}^{L} \left( \frac{1}{\lceil e^R \rceil} +  \frac{\Delta}{\lceil e^R \rceil-1}\right) \log\left( \frac{1}{\lceil e^R \rceil} +  \frac{\Delta}{\lceil e^R \rceil-1}\right) \nonumber \\
    & \overset{\Delta}{=} R.
\end{align}
Therefore, according to \eqref{eq:det_p_t_y}, quantization points $\{q_j\}_{j=1}^{L-1}$ can also be obtained by
\begin{subequations}
    \begin{align}
\mathbb{P}(q_{j-1} \leq X \leq q_j) &= \mathbb{P}(Y=1) \mathbb{P}(q_{j-1} \leq X \leq q_j|Y=1)  \\
   & \quad + \mathbb{P}(Y=-1) \mathbb{P}(q_{j-1} \leq X \leq q_j|Y =-1)  \\
   &= 1/2\left( Q(q_{j-1} - \beta) - Q(q_j - \beta) \right) \nonumber \\
   &\quad +1/2 \left( Q(q_{j-1} + \beta) - Q(q_j + \beta)\right) \nonumber \\
   &\overset{\Delta}{=} \mathbb{P}(T=t_j),
\end{align}
\end{subequations}
where $\mathbb{P}(T=t_j)$ is defined in \eqref{eq:prob_t}.

Note that if $R\leq \ln2$, the quantization level in this scheme is set as $L=2$, similar to the two-level quantization scheme. The deterministic quantization approach outlined above leads to the following proposition.

\begin{Proposition}[An achievable solution to IB via deterministic quantization]\label{prop:derterm_Q} 
For the IB problem in~\eqref{eq:IB_DQ_LB} with symmetric Bernoulli $Y$ and $X|Y \sim \mathcal N(y \beta, 1)$ as in \eqref{eq:def_model_scalar}, then, the optimal rate $I^{\star}(Y; T)$ is lower bounded by $I_2(\Delta)$, the mutual information $I(Y; T)$ given $\Delta$, with $\Delta$ solution to \eqref{eq:Delta}, and the quantization points $\{q_j\}_{j=1}^{\lceil e^R \rceil}$ can be obtained as
\begin{equation}\label{eq:Q_points}
{\mathbb{P}}(q_{j-1} \leq X \leq q_j)=
    \begin{cases}
        \frac{1}{\lceil e^R \rceil} - \Delta  & \text{if } j=1, \\
        \frac{1}{\lceil e^R \rceil} +  \frac{\Delta}{\lceil e^R \rceil-1} & \text{otherwise}.
    \end{cases}
\end{equation}
\end{Proposition}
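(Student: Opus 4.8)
The plan is to verify that the construction given in the statement is well-defined and feasible, so that the resulting $I(Y;T)$ constitutes a genuine achievable (lower) bound for the original IB problem \eqref{eq:obj_func_1}. First I would record the Markov chain $Y \to X \to T$ induced by the deterministic quantizer $T = \widehat Q(X)$, so that by the data processing inequality any feasible choice of quantization points gives a valid lower bound to \eqref{eq:obj_func_1} through \eqref{eq:IB_DQ_LB}; since the map $X \mapsto T$ is deterministic, $I(X;T)=H(T)$, which is why the bottleneck constraint \eqref{eq:obj_constr} reduces to $H(T)\le R$ in \eqref{eq:constriat_detfunc}. The substance of the proof is then to show three things: (i) the level count $L=\lceil e^R\rceil$ together with the prescribed pmf \eqref{eq:prob_t} is a valid probability distribution for a suitable shift $\Delta$; (ii) there exists $\Delta$ solving \eqref{eq:Delta}, i.e., making $H(T)=R$ exactly so the constraint is met with equality; and (iii) the quantization thresholds $\{q_j\}$ realizing this pmf on the Gaussian mixture marginal of $X$ exist and are obtained by \eqref{eq:Q_points}.

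For step (i), I would check that the entries of \eqref{eq:prob_t} sum to one: the first bin contributes $\tfrac{1}{L}-\Delta$ and the remaining $L-1$ bins contribute $(L-1)\big(\tfrac{1}{L}+\tfrac{\Delta}{L-1}\big)=\tfrac{L-1}{L}+\Delta$, whose sum is $1$ regardless of $\Delta$; nonnegativity of all entries then forces $0\le \Delta\le \tfrac{1}{L}$. For step (ii), define $g(\Delta)$ to be the right-hand side of the entropy expression in \eqref{eq:Delta} as a function of $\Delta$ on $[0,\tfrac1L]$. At $\Delta=0$ the pmf is uniform over $L=\lceil e^R\rceil$ levels, so $g(0)=\ln\lceil e^R\rceil \ge \ln e^R = R$, with equality only when $e^R$ is an integer; at the other extreme the distribution becomes more concentrated and $g$ decreases, so one shows $g$ is continuous and (strictly) decreasing in $\Delta$ on this interval with $g$ dropping below $R$, whence the intermediate value theorem yields a unique $\Delta$ with $g(\Delta)=R$. (When $e^R$ is an integer the solution is simply $\Delta=0$; when $R\le \ln 2$ one has $L=2$ and the scheme collapses to the two-level case, consistent with the remark preceding the proposition.) For step (iii), since the mixture CDF $x\mapsto \tfrac12\big(1-Q(x-\beta)\big)+\tfrac12\big(1-Q(x+\beta)\big)$ is continuous and strictly increasing from $0$ to $1$, the prescribed cumulative bin masses determine the thresholds $q_1<\cdots<q_{L-1}$ uniquely via \eqref{eq:Q_points}, and $t_j$ is set to the midpoint of $[q_{j-1},q_j]$; plugging these into \eqref{eq:det_p_t_y} and then into \eqref{eq:I_y_t} (discrete version) gives the claimed value $I_2(\Delta)$ of $I(Y;T)$.

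The main obstacle I anticipate is step (ii): establishing that $g(\Delta)$ is monotone (or at least that a root exists and the chosen root is the relevant one) requires differentiating the entropy-type function and tracking the signs of the logarithmic terms as $\Delta$ ranges over $[0,\tfrac1L]$, since the first term's argument decreases while the other $L-1$ arguments increase. A clean way around a delicate monotonicity argument is to appeal to continuity plus the boundary values $g(0)=\ln\lceil e^R\rceil\ge R$ and $g(\tfrac1L)<R$ (the latter because one bin then has probability $\tfrac1L - \tfrac1L=0$ wait—more carefully, at $\Delta=\tfrac1L$ the first bin has zero mass and the remaining $L-1$ bins are uniform with entropy $\ln(L-1)$, plus the residual term, which one checks is $<R$ for the regime $L=\lceil e^R\rceil$), so that a valid $\Delta$ always exists; uniqueness, if desired, then follows from strict concavity of entropy along the one-parameter family. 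The remaining computations — summation to one, threshold existence, and evaluating \eqref{eq:I_y_t} — are routine and I would not belabor them.
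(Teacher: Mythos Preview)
Your proposal is correct and follows the same constructive route as the paper. In fact, the paper does not give a separate formal proof of this proposition: the statement is presented as a direct summary of the scheme laid out in the paragraphs immediately preceding it (choose $L=\lceil e^R\rceil$, fix the pmf \eqref{eq:prob_t}, solve \eqref{eq:Delta} for $\Delta$, then recover the thresholds from the mixture CDF), with no further justification. Your steps (i)--(iii) supply the details the paper leaves implicit; in particular, your existence argument for $\Delta$ via the intermediate value theorem, using $g(0)=\ln\lceil e^R\rceil\ge R$ and $g(1/L)=\ln(L-1)<R$, is exactly the right way to close that gap, and the monotonicity you worry about follows from the one-line computation $g'(\Delta)=\ln\!\big((1/L-\Delta)/(1/L+\Delta/(L-1))\big)<0$ for $\Delta>0$.
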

\noindent
\begin{remark}[IB solution with deterministic quantization for $R \in [0, \infty)$]
For $R=0$ nats, the quantization function $\widehat{Q}(X)$ in Proposition~\ref{prop:derterm_Q} reduces to a single quantization point, resulting in $I(Y; T) = 0$. As $R$ tends to infinity, the quantization becomes finer, ideally leading to $T \approx X$, thereby ensuring that the quantized $T$ closely approximates the observation $X$. In this case, the optimal $I(Y; T)$ converges to $I(X; Y)$.
\end{remark}

\subsection{An achievable IB solution via soft quantization}
\label{subsec:tanh}
Here, we propose to solve the IB problem by \emph{jointly} tuning the non-linear function \emph{and} the noise $N$.
We first use the hyperbolic tangent $\tanh$ function to the observations $X$, which can be viewed as a ``soft'' quantization to obtain the value between $-1$ and $1$, instead of binary values $\pm 1$, from the mixture Gaussian observation $X$. The core idea of applying $\tanh$ function is inspired from that the Minimum Mean Square Error
(MMSE) estimation of the binary source $Y$ given the Gaussian mixture $X$ is
$\tanh{(\beta X)}$ \cite{guo2005mutual}. 
After the $\tanh$ non-linearity, Gaussian noise is then added to the intermediate representation $T$ as
\begin{align}
    T &= f_{\text{non-linear}}(X) + \widetilde{N} \nonumber \\
    &= \tanh(\beta X) + \widetilde{N}, 
\end{align} 
with $\widetilde{N} \sim \mathcal{N}(0, \alpha^{-2})$. 
In terms of mutual information $I(X; T)$ or $I(Y; T)$, this is equivalent to
\begin{align}\label{eq:t_MMSE}
        T =  \alpha \tanh(\beta X) + \widehat{N}, 
\end{align}
with $\widehat{N} \sim \mathcal{N}(0, 1)$, and let $\widehat{X} \overset{\Delta}{=} \tanh{\beta X} $. Since the $\tanh$ function is a one-to-one mapping, we have $I(\widehat{X}; T)=I(X; T)$ and thus  the IB problem becomes
\begin{subequations}
\begin{align}
    \max_{\alpha \geq 0} \quad & I(Y; T)  \\
    \textrm{s.t.} \quad &I(\widehat{X}; T) \leq R, \label{eq:variation IB constr 1}\\
    &T|\widehat{X} \sim \mathcal{N} \left(\alpha \widehat{X}, 1\right),  \label{eq:variation IB constr 2}
\end{align}
 \label{eq:V}%
\end{subequations}
where $I(\widehat{X}; T)$ can be computed as follows, 
\begin{align}\label{eq:I_Xhat_T}
    I(\widehat{X}; T) &= h(T) - h(T| \widehat{X}) \nonumber \\
    &=-\int p_T(t) \ln(p_T(t)) dt -\frac{1}{2} \ln( 2\pi e).
\end{align}
Since it is still complicated to compute
$\alpha$ in closed-form satisfying $I(\widehat{X}; T) \overset{\Delta}{=} R$. We further derive a lower bound on $-\int p_T(t) \ln(p_T(t)) dt$ 
 by introducing a variational distribution of $T$ (denoted by $q_T(\cdot)$) and 
 by using the information inequality~\cite[Theorem 2.6.3]{hanineq}, we have 
\begin{align}
     I(\widehat{X}; T)
    &\leq -\int p_T(t) \ln(q_T(t)) dt -\frac{1}{2} \ln( 2\pi e).\label{eq:upper bound of I hat based on q}
\end{align}
Then we need to find out a reasonable  variational distribution $q_T(\cdot)$. 
Since $\widehat{X}$ is the MMSE estimation of $Y$, we can design the variational distribution of $\widehat{X}$ as Bernoulli distribution, i.e., $q_{\widehat{X}}(\widehat{X}=-1) = q_{\widehat{X}}(\widehat{X}=1) = \frac{1}{2}$ to simplify the computation of $\ln q_T(t)$. Intuitively speaking, the less the noise power of $X$ is, the closer the variational distribution $q_{\widehat{X}}$ gets to the true distribution $p_{\widehat{X}}$. Hence, the variational distribution of $T$ is given by
\begin{subequations}
   \begin{align}
    q_T(t) &= \int_{-1}^1 p_{T|\widehat{X}}(t|\widehat{x}) q_{\widehat{X}}(\widehat{x}) d\widehat{x} \\
    &= \frac{1}{\sqrt{2 \pi}} \exp(-\frac{t^2 + {\alpha}^2}{2}) (\cosh{(\alpha t)}). \label{eq:qt}
\end{align} 
\end{subequations}
To simplify notations, we denote 
\begin{subequations}
    \begin{align}
    f(\beta) &\overset{\Delta}{=}  \int_{-1}^1 p_{\widehat{X}} (\widehat{x}) \widehat{x}^2 d \widehat{x}, \\
    g(\beta)  &\overset{\Delta}{=} \int_{-1}^1 p_{\widehat{X}} (\widehat{x}) |\widehat{x}| d \widehat{x} = 2\int_{-1}^0 p_{\widehat{X}} (\widehat{x}) (-\widehat{x}) d \widehat{x}, \label{eq: g_beta_2}
\end{align}\label{eq:f_g_beta}
\end{subequations}
where \eqref{eq: g_beta_2} holds since $p_{\widehat{X}} (\widehat{x})$ in \eqref{eq:p_yhat} is an even function.

By taking \eqref{eq:qt} into \eqref{eq:upper bound of I hat based on q}, an upper bound to $I(\widehat{X}; T)$ based on variational distribution is derived as 
\begin{align}
    I(\widehat{X}; T) &\leq \frac{{\alpha}^2}{2}(1 + f(\beta) ) \underbrace{-\int_{-\infty}^{\infty} \left(\int_{-1}^{1} p_{T|\widehat{X}} (t |\widehat{x}) p_{\widehat{X}} (\widehat{x}) d\widehat{x}\right) \ln(\cosh(\alpha t)) dt}_{(d)}.\label{eq:UB}
\end{align}
Next we propose two upper bounds on \eqref{eq:UB} by deriving lower bounds on $\ln(\cosh{\alpha t})$: 
    \begin{itemize}
        \item[(i)]  The first bound  is based on the inequality   $\ln(\cosh(x)) \geq \sqrt{1 + x^2} -1$, and hence an upper bound to $(d)$ in \eqref{eq:UB} is derived as 
\begin{subequations}
    \begin{align}
    &\!-\!\int_{-1}^1 p_{\widehat{X}} (\widehat{x}) \! \int_{\!-\infty}^{\infty} \! \frac{1}{\sqrt{2\pi}} \exp(-\frac{(t\!-\!\alpha \widehat{x})^2}{2})\ln(\cosh(\alpha t)) dt d\widehat{x}  \\
    &\leq \!-\!\int_{-1}^1 p_{\widehat{X}} (\widehat{x}) \int_{-\infty}^{\infty} \frac{1}{\sqrt{2\pi}} \exp(-\frac{(t-\alpha \widehat{x})^2}{2})  \left[\sqrt{1 + \alpha^2t^2} -1\right] dt d\widehat{x} ,  \nonumber \\
    &\leq \!-\!\int_{-1}^1 p_{\widehat{X}} (\widehat{x}) \left[\sqrt{1 + \alpha^4 \widehat{x}^2}\right] d\widehat{x} + 1 \label{eq:by convexity of f} \\
    &= -\int_{-1}^0 2p_{\widehat{X}} (\widehat{x}) \left[\sqrt{1 + \alpha^4 \widehat{x}^2}\right] d\widehat{x}  + 1 ,\label{eq:LB_1_part1}
\end{align}
\end{subequations}
where~\eqref{eq:by convexity of f} comes from  the convexity of function $f(t) = \sqrt{1 + \alpha^2 t^2}$, i.e., $\mathbb{E}\left[\sqrt{1 + \alpha^2 t^2}\right] \geq \sqrt{1 + \alpha^2 (\mathbb{E}[t])^2}$, and~\eqref{eq:LB_1_part1} follows  
  since $p_{\widehat{X}} (\widehat{x})$ and $\sqrt{1 + \alpha^4 {\widehat{x}}^2}$ are both even functions regarding to $\widehat{x}$.
Based on the Jensen's inequality, $\int_{-1}^0 2p_{\widehat{X}} (\widehat{x}) d \widehat{x} = 1$, and notation for $g(\beta)$, an upper bound of the RHS of~\eqref{eq:LB_1_part1} is given by
\begin{subequations}
\begin{align}
    -\int_{-1}^0 2p_{\widehat{X}} (\widehat{x}) \left[\sqrt{1 + {\alpha}^4 \widehat{x}^2}\right] d\widehat{x}  + 1 &\leq \sqrt{1 + {\alpha}^4  \left(\int_{-1}^0 2 p_{\widehat{X}} (\widehat{x}) \widehat{x} d \widehat{x}\right)^2} + 1   \\
    &= -\sqrt{1 + {\alpha}^4  \left(g(\beta)\right)^2} + 1.\label{eq:UB_1}
\end{align}
\end{subequations}
By taking~\eqref{eq:UB_1} and~\eqref{eq:LB_1_part1} into~\eqref{eq:UB}, we obtain the following upper bound of $I(\widehat{X}; T)$,
\begin{align}
    I(\widehat{X}; T) \leq \frac{{\alpha}^2}{2} (1 + f(\beta)) -\sqrt{1 + {\alpha}^4  (g(\beta))^2} + 1,
    \label{eq:first upper bound of Ixt}
\end{align}

\item[(ii)]  The second bound is based on $\ln(\cosh(x)) \geq x - \ln2, ~\forall~ x \geq 0$, which is tighter than the first lower bound on $\ln(\cosh{x})$ for relatively large $x$, resulting in a tighter upper bound on $I(\widehat{X};  T)$. However, the second bound only holds for $R \geq \ln2$. Hence, by separating the negative part and positive part of $t$ and introducing an auxiliary variable $s$ defined as $ s = t - \alpha \widehat{x}$,  the second upper bound to $(d)$ in \eqref{eq:UB} is derived as
\begin{subequations}
\begin{align}
    &-\int_{-\infty}^{\infty} \left(\int_{-1}^{1} p_{T|\widehat{X}} (t |\widehat{x}) p_{\widehat{X}} (\widehat{x}) d\widehat{x}\right) \ln(\cosh(\alpha t)) dt  \nonumber \\
    &\leq -  \int_{-1}^{1} p_{\widehat{X}} (\widehat{x}) \left( \int_{-\infty}^{0} p_{T|\widehat{X}} (t |\widehat{x}) \left[-\alpha t - \ln2\right] dt \right) d\widehat{x} \nonumber \\
    & \quad - \int_{-1}^{1} p_{\widehat{X}} (\widehat{x}) \left( \int_{0}^{\infty} p_{T|\widehat{X}} (t |\widehat{x}) \left[\alpha t - \ln2\right] dt \right) d\widehat{x}, \label{eq:LB_2_original_bound} \\
    &=-\int_{-1}^{1} 
    p_{\widehat{X}} (\widehat{x}) \left( \int_{-\infty}^{-\alpha \widehat{x}} \frac{1}{\sqrt{2 \pi}} \exp(-\frac{s^2}{2}) \left[-\alpha (s + \alpha \widehat{x})\right] ds \right) d\widehat{x} \nonumber \\
    &\quad -\int_{-1}^{1} p_{\widehat{X}} (\widehat{x}) \left( \int_{-\alpha \widehat{x}}^{\infty}  \frac{1}{\sqrt{2 \pi}} \exp(-\frac{s^2}{2}) \left[\alpha (s + \alpha \widehat{x})\right] ds \right) d\widehat{x} + \ln2. \label{eq:LB_2_derivation}
\end{align}
\end{subequations}
Moreover, using that fact that $\int (-s) \exp(-\frac{s^2}{2})  ds = \exp(-\frac{s^2}{2})$, and separating the negative part and positive part of $\widehat{x}$, \eqref{eq:LB_2_derivation} is further developed as 
\begin{subequations}
\begin{align}
    & \ln2 - \frac{2 \alpha}{\sqrt{2\pi}}\int_{-1}^{1} p_{\widehat{X}} (\widehat{x}) \exp(-\frac{{\alpha}^2 \widehat{x}^2}{2}) d\widehat{x} \nonumber \\
    &\quad - \int_{-1}^{1} p_{\widehat{X}} (\widehat{x}) \left[-\alpha^2  \widehat{x}\right]  \int_{-\infty}^{-\alpha \widehat{x}} \frac{1}{\sqrt{2 \pi}} \exp(-\frac{s^2}{2}) ds  d\widehat{x}  \nonumber \\
    &\quad - \int_{-1}^{1} p_{\widehat{X}} (\widehat{x}) \left[\alpha^2  \widehat{x}\right] \int_{-\alpha \widehat{x}}^{\infty} \frac{1}{\sqrt{2 \pi}} \exp(-\frac{s^2}{2}) ds d\widehat{x}  \nonumber \\
    &= \ln2 - \frac{2 \alpha}{\sqrt{2\pi}}\int_{-1}^{1} p_{\widehat{X}} (\widehat{x}) \exp(-\frac{{\alpha}^2 \widehat{x}^2}{2}) d\widehat{x} \nonumber \\
    &\quad + \frac{\alpha^2  }{\sqrt{2 \pi}} \int_{-1}^{0} \widehat{x} p_{\widehat{X}} (\widehat{x})  \left[\int_{\alpha \widehat{x}}^{-\alpha \widehat{x}}  \exp(-\frac{s^2}{2}) ds \right]  d\widehat{x} \nonumber \\
    &\quad + \frac{\alpha^2  }{\sqrt{2 \pi}} \int_{0}^{1} \widehat{x} p_{\widehat{X}} (\widehat{x})  \left[-\int_{-\alpha \widehat{x}}^{\alpha \widehat{x}}  \exp(-\frac{s^2}{2}) ds \right]  d\widehat{x}  \\
    &= \ln2 - \frac{2 \alpha}{\sqrt{2\pi}}\int_{-1}^{1} p_{\widehat{X}} (\widehat{x}) \exp(-\frac{{\alpha}^2 \widehat{x}^2}{2}) d\widehat{x} \nonumber \\
    &\quad +  \underbrace{2\alpha^2  \int_{-1}^{0} \widehat{x} p_{\widehat{X}} (\widehat{x})  \left[\int_{\alpha \widehat{x}}^{-\alpha \widehat{x}} \frac{1}{\sqrt{2 \pi}} \exp(-\frac{s^2}{2}) ds \right]  d\widehat{x}}_{(f)}.\label{eq:LB_2_derivation_2}
\end{align}
\end{subequations}
For any non-negative real number $\alpha$ and negative real number $\widehat{x} < 0$, an upper bound to the Gaussian $Q$ function $Q(-\alpha \widehat{x})$ is derived as 
\begin{subequations}
   \begin{align}
    Q(-\alpha \widehat{x}) &= \int_{-\alpha \widehat{x}}^{\infty} \frac{1}{\sqrt{2\pi}} \exp(-\frac{s^2}{2}) ds  \\
    &\leq \int_{-\alpha \widehat{x}}^{\infty} \frac{1}{\sqrt{2\pi}} \frac{s}{-\alpha\widehat{x}} \exp(-\frac{s^2}{2}) ds \label{eq:x_exp_a} \\
    &= \frac{1}{\alpha \widehat{x} \sqrt{2 \pi}} \left(- \exp\left(-\frac{\alpha^2 \widehat{x}^2}{2}\right)\right),
\end{align}\label{eq:x_exp_bound} 
\end{subequations}
where \eqref{eq:x_exp_a} holds since $\frac{s}{-\alpha\widehat{x}}$ is always larger than $1$ in the integral region.
Therefore, also note that $\widehat{x}$ in the $(f)$ of \eqref{eq:LB_2_derivation_2} in the integral region is always non-positive, based on the inequality \eqref{eq:x_exp_bound}, we can derive an upper bound on $(f)$ in \eqref{eq:LB_2_derivation_2} as  
\begin{subequations}
    \begin{align}
    (f) &= 2{\alpha}^2 \int_{-1}^{0} p_{\widehat{X}} (\widehat{x}) \widehat{x} \left[1 - 2Q(-a \widehat{x})\right] d \widehat{x}, \\
    &\leq 2\alpha^2 \int_{-1}^{0} p_{\widehat{X}} (\widehat{x}) \widehat{x} \left[1 - \frac{2}{\alpha\widehat{x} \sqrt{2 \pi}} \left(- \exp\left(-\frac{\alpha^2 \widehat{x}^2}{2}\right)\right)\right] d \widehat{x} \label{eq:random_gaussian_ineq}
    \end{align}
\end{subequations}
Hence, by taking \eqref{eq:random_gaussian_ineq} into \eqref{eq:LB_2_derivation_2} and combining \eqref{eq:UB}, we can further relax the constraint and obtain the following upper bound on $I(\widehat{X}; T)$ 
\begin{subequations}
    \begin{align}
    I(\widehat{X}; T) &\leq \ln2 + 2\alpha^2 \int_{-1}^{0} p_{\widehat{X}} (\widehat{x}) \widehat{x} d \widehat{x} + \frac{\alpha^2}{2}(1 + f(\beta)) \\
    &= \alpha^2\left[\frac{1}{2} + \frac{f(\beta)}{2} -g(\beta)\right]  + \ln2. \label{eq:UB_2_constraint_UB}
\end{align}
\end{subequations}
    \end{itemize} 
    Next, we solve $\alpha$  analytically satisfying that $R$ is equal to each upper bound of $ I(\widehat{X}; T)$ in the RHS of \eqref{eq:first upper bound of Ixt} and \eqref{eq:UB_2_constraint_UB}, and the obtained solution is also an achievable solution for the IB problem in~\eqref{eq:variation IB constr 2}. 
Finally, the value of the mutual information $I(Y; T)$ is obtained for the corresponding value of $\alpha$. 
The above is the intuitive proof  for the following result, whose detailed proof is given in Appendix~\ref{sec:proof_main}.
\begin{Proposition}[An achievable solution to IB via soft quantization]\label{theo:main-results} 
For the IB problem defined in~\eqref{eq:V} with symmetric Bernoulli $Y$ and $X|Y \sim \mathcal N(\beta Y, 1)$, the optimal rate $I^{\star}(Y; T)$ is lower bounded by 
$\max\{I_3(\alpha_{\text{lb}_1}) , I_4(\alpha_{\text{lb}_2}) \}$
if $R\geq \ln 2$, and lower bounded by $I_3(\alpha_{\text{lb}_1})$ otherwise, with
\begin{subequations}
    \begin{align}
    &\alpha_{\text{lb}_1} =\! {\sqrt{\!\frac{ (R\!-\!1)(1 \!\!+\!\! f(\beta)) \!+\!\! \sqrt{((1 \!\!+ \!\!f(\beta))^2  \!\!+ \!4g\!^2(\beta)(R^2 \!\!-\!\! 2R))}}{((1 \!+\! f(\beta))^2 \! -\! 4 g^2(\beta))/2}}}, \label{eq:LB_1}\\
    &\alpha_{\text{lb}_2} \!=\! \sqrt{\frac{R - \ln2}{\frac{1}{2} + \frac{f(\beta)}{2} - g(\beta)}}, ~~~\text{if}~ R \geq \ln2,  \label{eq:LB_2}
\end{align}\label{eq:alpha}%
\end{subequations}
where, for the ease of presentation, we define $I_3(\alpha_{\text{lb1}}))$ and $I_4(\alpha_{\text{lb2}}))$ as the mutual information $I(Y; T)$ given $\alpha_{\text{lb1}}$ and $\alpha_{\text{lb2}}$ respectively, $\widehat{X}:=\tanh{(\beta X)}$, and\footnote{\label{foot:const}Note that $f(\beta)$ and $g(\beta)$ are deterministic functions of $\beta$.} $f(\beta)$ and $g(\beta)$ are defined in \eqref{eq:f_g_beta}. 
\end{Proposition}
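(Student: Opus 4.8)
The plan is to produce the achievable point by fixing the structured test channel $T=\alpha\tanh(\beta X)+\widehat N$ with $\widehat N\sim\mathcal N(0,1)$ and optimizing the single scalar $\alpha\ge 0$ in closed form under the surrogate problem \eqref{eq:t_MMSE}--\eqref{eq:V}. Writing $\widehat X=\tanh(\beta X)$, the map $x\mapsto\tanh(\beta x)$ is a strictly increasing bijection, so $I(X;T)=I(\widehat X;T)$ and it suffices to certify $I(\widehat X;T)\le R$. The first block of the appendix proof just makes rigorous the chain already sketched in the main text: (i) bound $I(\widehat X;T)$ by a cross-entropy using the two-point variational law $q_{\widehat X}=\mathrm{Bern}(1/2)$ on $\{-1,1\}$ and the information inequality \eqref{eq:upper bound of I hat based on q}, with induced density $q_T$ given by \eqref{eq:qt}, yielding \eqref{eq:UB}; (ii) lower-bound $\ln\cosh(x)$ either by $\sqrt{1+x^2}-1$ (then use convexity of $t\mapsto\sqrt{1+\alpha^2t^2}$, Jensen, and evenness of $p_{\widehat X}$) to obtain $U_1(\alpha)\overset{\Delta}{=}\tfrac{\alpha^2}{2}(1+f(\beta))-\sqrt{1+\alpha^4g^2(\beta)}+1$ as in \eqref{eq:first upper bound of Ixt}, or by $x-\ln 2$ for $x\ge 0$ (then split the $t$- and $\widehat x$-integrals by sign, substitute $s=t-\alpha\widehat x$, and apply the one-sided Gaussian-$Q$ tail bound \eqref{eq:x_exp_bound}) to obtain $U_2(\alpha)\overset{\Delta}{=}\alpha^2\bigl(\tfrac12+\tfrac{f(\beta)}{2}-g(\beta)\bigr)+\ln 2$ as in \eqref{eq:UB_2_constraint_UB}. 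Both $U_1$ and $U_2$ upper-bound $I(\widehat X;T)$, and $U_2$ is informative only when $R\ge\ln 2$ since $U_2(\alpha)\ge\ln 2$.

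Next I would solve $U_i(\alpha)=R$ for $\alpha$. With $u=\alpha^2$, isolating the square root in $U_1(\alpha)=R$ and squaring turns it into the quadratic $\tfrac{(1+f)^2-4g^2}{4}\,u^2+(1+f)(1-R)\,u+(R^2-2R)=0$, whose discriminant collapses to $(1+f)^2+4g^2(R^2-2R)$; taking the $+$ root recovers $\alpha_{\mathrm{lb}_1}^2$ exactly as in \eqref{eq:LB_1}, while the linear equation $U_2(\alpha)=R$ gives $\alpha_{\mathrm{lb}_2}^2$ as in \eqref{eq:LB_2}. Here I would insert the short structural facts about the two functionals of $\beta$: by Jensen, $f(\beta)=\mathbb E[\widehat X^2]\ge(\mathbb E|\widehat X|)^2=g^2(\beta)$, and $|\widehat X|\le1$ gives $g(\beta)\le1$; hence $1+f-2g\ge(1-g)^2\ge0$, so the leading coefficient $(1+f)^2-4g^2=(1+f-2g)(1+f+2g)$ is nonnegative (strictly, for finite $\beta$), the coefficient $\tfrac12+\tfrac{f}{2}-g\ge\tfrac12(1-g)^2$ is nonnegative, and, using $2R-R^2\le1$, the discriminant $(1+f)^2+4g^2(R^2-2R)\ge(1+f)^2-4g^2\ge0$. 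Together with the monotonicity of $U_1$ (one checks $U_1'(\alpha)>0$ for $\alpha>0$ because $\tfrac{2\alpha^2g^2}{\sqrt{1+\alpha^4g^2}}<2g\le 1+f$), these bounds show that $\alpha_{\mathrm{lb}_1}$ and (for $R\ge\ln 2$) $\alpha_{\mathrm{lb}_2}$ are the unique nonnegative solutions.

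The conclusion is then the feasibility argument. Since $\alpha_{\mathrm{lb}_i}$ was chosen so that the \emph{upper bound} satisfies $U_i(\alpha_{\mathrm{lb}_i})=R$, we get $I(\widehat X;T)\big|_{\alpha=\alpha_{\mathrm{lb}_i}}\le U_i(\alpha_{\mathrm{lb}_i})=R$, so each $\alpha_{\mathrm{lb}_i}$ is feasible for \eqref{eq:V}; plugging $f_{\text{non-linear}}(x)=\alpha_{\mathrm{lb}_i}\tanh(\beta x)$ and $N\sim\mathcal N(0,1)$ into \eqref{eq:conditional proba} and then \eqref{eq:I_y_t} produces an achievable value of $I(Y;T)$, which we name $I_3(\alpha_{\mathrm{lb}_1})$ and $I_4(\alpha_{\mathrm{lb}_2})$. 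Since any achievable value lower-bounds the optimum, $I^\star(Y;T)\ge I_3(\alpha_{\mathrm{lb}_1})$ always, and $I^\star(Y;T)\ge\max\{I_3(\alpha_{\mathrm{lb}_1}),I_4(\alpha_{\mathrm{lb}_2})\}$ when $R\ge\ln 2$, which is the claim.

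The main obstacle I expect is technical bookkeeping rather than anything conceptual: carefully justifying every inequality in the longer chain that produces $U_2$ — the sign split of the double integral, the substitution $s=t-\alpha\widehat x$, and especially the one-sided $Q$-function bound, which is valid only because $\widehat x\le0$ on the region where it is invoked — and, on the algebra side, confirming that the root extracted from the $U_1=R$ quadratic is the genuine nonnegative preimage, i.e., that $\tfrac u2(1+f)+1-R\ge0$ there so the squaring step is reversible (this follows from monotonicity of $U_1$ and the sign bookkeeping above). A minor secondary point is verifying that $q_T$ in \eqref{eq:qt} is a legitimate density and that all interchanges of integration from \eqref{eq:UB} onward are justified, which is immediate from Gaussian tail domination.
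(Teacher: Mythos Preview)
Your proposal is correct and follows essentially the same architecture as the paper's proof: the variational cross-entropy bound with $q_{\widehat X}=\mathrm{Bern}(1/2)$, the two $\ln\cosh$ lower bounds producing $U_1$ and $U_2$, the quadratic/linear solve for $\alpha_{\mathrm{lb}_1},\alpha_{\mathrm{lb}_2}$, and the feasibility conclusion. Your positivity arguments are in fact slightly cleaner than the paper's---you show $1+f-2g\ge(1-g)^2$ via Jensen ($f\ge g^2$, $g\le1$) where the paper instead expands $1+f-2g=1+2\int_{-1}^0 p_{\widehat X}(\widehat x)[(\widehat x+1)^2-1]\,d\widehat x>0$, and you use monotonicity of $U_1$ to select the root where the paper does a case split on $R\le1$ versus $R>1$---but these are cosmetic improvements within the same proof.
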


\begin{remark}[IB solution with soft quantization for $R \in [0, \infty)$]\label{rem:limiting_cases_soft}
First, for $R=0$, $\alpha_{\text{lb}1}= 0$ according to its definition in \eqref{eq:LB_1}, which means that $I(Y; T) = 0$.
Next, as $R \rightarrow \infty$, both $\alpha_{\text{lb}1}$ and $\alpha_{\text{lb}_2}$ tend to infinity according to \eqref{eq:alpha}.
With the intermediate representation design in \eqref{eq:t_MMSE}, as $\alpha \rightarrow \infty$, $T$ converges to $\widehat{X}$, allowing the objective mutual information $I(Y; T)$ to approach the optimal value $I(X; Y)$. These results are confirmed by simulations in the appendix~\ref{ch:extreme_points}.
\end{remark}

\subsection{A unified achievable scheme to IB}
By combining the three proposed achievable schemes in Proposition~\ref{prop:one-bit}--\ref{theo:main-results}, 
we obtain the analytic achievable scheme to IB in Theorem~\ref{theo:main-results-2} as follows.

\begin{Theorem}[An analytic and achievable scheme to IB under Gaussian mixtures]\label{theo:main-results-2} 
For the IB problem in \eqref{eq:obj_func_1}, the optimal rate $I^*(Y; T)$ is lower bounded by $\max\{I_1(q), I_2(\Delta), I_3(\alpha_{\text{lb1}}), I_4(\alpha_{\text{lb2}})\}$, for $I_1(q)$, $I_2(\Delta), I_3(\alpha_{\text{lb1}})$, $I_4(\alpha_{\text{lb2}})$ defined in Proposition~\ref{prop:one-bit}--\ref{theo:main-results}.
\end{Theorem}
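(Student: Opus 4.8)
The plan is to observe that Theorem~\ref{theo:main-results-2} is essentially a bookkeeping consolidation of the three preceding propositions, so the proof consists of assembling them rather than proving anything genuinely new. First I would note that all three schemes --- two-level random quantization (Proposition~\ref{prop:one-bit}), multi-level deterministic quantization (Proposition~\ref{prop:derterm_Q}), and soft $\tanh$ quantization (Proposition~\ref{theo:main-results}) --- are \emph{special instances} of the generic intermediate representation in \eqref{eq:t}, namely $T = f_{\text{non-linear}}(X) + N$ with a particular choice of non-linearity and noise. In each case the Markov chain $Y \to X \to T$ holds by construction (the processing of $X$ into $T$ is independent of $Y$ given $X$), so the data processing inequality gives $I(X;T) \le I(f_{\text{non-linear}}(X); T)$ in the quantization cases, or $I(X;T) = I(\widehat X; T)$ in the invertible $\tanh$ case, and hence the feasible set of each restricted problem \eqref{eq:IB_HQ_LB}, \eqref{eq:IB_DQ_LB}, \eqref{eq:V} is a subset of the feasible set of the original IB problem \eqref{eq:obj_func_1}.

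Next I would make the monotonicity argument explicit: since each restricted problem optimizes the same objective $I(Y;T)$ over a \emph{smaller} (or equal) feasible region than \eqref{eq:obj_func_1}, its optimal value is a lower bound on $I^\star(Y;T)$. Concretely, Proposition~\ref{prop:one-bit} gives $I^\star(Y;T) \ge I_1(q)$ for $0 \le R \le \ln 2$ with $q$ solving \eqref{eq:q}; Proposition~\ref{prop:derterm_Q} gives $I^\star(Y;T) \ge I_2(\Delta)$ for all $R \ge 0$ with $\Delta$ solving \eqref{eq:Delta}; and Proposition~\ref{theo:main-results} gives $I^\star(Y;T) \ge \max\{I_3(\alpha_{\text{lb}_1}), I_4(\alpha_{\text{lb}_2})\}$ for $R \ge \ln 2$ and $I^\star(Y;T) \ge I_3(\alpha_{\text{lb}_1})$ otherwise. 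Taking the pointwise maximum of any collection of valid lower bounds is again a valid lower bound, which yields $I^\star(Y;T) \ge \max\{I_1(q), I_2(\Delta), I_3(\alpha_{\text{lb}_1}), I_4(\alpha_{\text{lb}_2})\}$, with the convention that $I_1$ is dropped (or set to $-\infty$) when $R > \ln 2$ and $I_4$ is dropped when $R < \ln 2$, consistent with Remark~\ref{rem:one-bit} and the domain restrictions in Proposition~\ref{theo:main-results}. This closes the argument.

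The only subtlety --- and the closest thing to an obstacle --- is the handling of the ranges of $R$ where not all four bounds are defined, so I would state carefully that the claimed inequality holds on the intersection of validity regions and that outside those regions the corresponding terms are simply omitted from the maximum; since $I_2(\Delta)$ is valid for every $R \ge 0$, the maximum is always over a nonempty set and the bound is never vacuous. I would also remark (optionally, for rigor) that each $I_i$ is indeed \emph{achieved} by an explicit $P_{T|X}$ --- the $\text{Bern}(q)$ channel on $\overline X$, the deterministic $L$-level quantizer with $L = \lceil e^R \rceil$, or the Gaussian channel $T|\widehat X \sim \mathcal N(\alpha \widehat X, 1)$ --- so the bound is not merely an infimum but genuinely achievable, matching the paper's "achievable scheme" terminology. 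No hard calculation is needed; the content was already discharged in Propositions~\ref{prop:one-bit}--\ref{theo:main-results}.
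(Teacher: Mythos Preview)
Your proposal is correct and matches the paper's approach exactly. The paper itself gives no separate proof of Theorem~\ref{theo:main-results-2}; it simply states that the result follows ``by combining the three proposed achievable schemes in Proposition~\ref{prop:one-bit}--\ref{theo:main-results},'' and your write-up is a faithful (and more explicit) unpacking of that one-line justification, including the feasible-set containment argument and the domain-of-validity bookkeeping for $R$.
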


\begin{remark}[Extension to QPSK setting]
    Our proposed achievable schemes can be easily extended to the case of i.i.d.\@ output from the Quadrature Phase Shift Keying (QPSK). These sequences can be viewed as two parallel sets of i.i.d.\@ sequences of BPSK. Our proposed achievable schemes can be effectively applied to each of these sequences to address the IB problem.
\end{remark}

\section{{Extension to vector mixture Gaussian problem}}
\label{sec: application}

In this section, we extend the achievable analytic IB scheme proposed in Section~\ref{sec: main_results} to multivariate mixture Gaussian model.  
For label $Y$ drawn from  a symmetric Bernoulli distribution (that is, $Y = \pm 1$ with $P(Y=-1) = P(Y=1) = 1/2$), the data vector $\x=(x_1,\ldots,x_{d_0}) \in\RR^{d_0}$ follows a GMM and depends on the label $Y$ in such as way that
\begin{equation}\label{eq:def_model1}
  \x = \bbeta \cdot Y + \boldsymbol{\epsilon},
\end{equation}
for some \emph{deterministic} vector $\bbeta=[\beta_1,\ldots,\beta_{d_0}]^{\rm T} \in \RR^{d_0}$ and Gaussian random noise $\boldsymbol{\epsilon}=[\epsilon_1,\ldots,\epsilon_{d_0}]^{\rm T} \sim \mathcal N (\mathbf{0}, \I_{d_0})$.
In the context of IB, we are interested in constructing an intermediate representation $\bt=[t_1,\ldots,t_d]^{\rm T} \in \RR^d$ of $\x$ to solve the IB problem 
 \begin{subequations}
     \begin{align}
    \max_{p(\bt|\x)} \quad & I(Y; \bt)  \\
    \textrm{s.t.} \quad &I(\x; \bt) \leq R, \label{eq:security constr} 
\end{align}\label{eq:obj_func_11}%
\end{subequations}
for some given $R \geq 0$. 
Here we focus on the setting of $d = d_0$ and, 
for each $i\in \{ 1,\ldots, d_0 \}$, optimize the conditional distribution $p(t_i|x_i)$ by solving the {following IB problem}, 
\begin{subequations}
\begin{align}
    \max_{\{p(t_i|x_i)\}_{i=1}^{d_0}} \quad & I(Y; \bt)  \\
    \textrm{s.t.} \quad~~~~ &I(x_i; t_i) \leq R_i, ~\forall i \in \{1, \cdots, d_0\}\label{eq:variation IB constr 111}
\end{align}
\label{eq:obj_func_entry}%
\end{subequations}
for some $R_i\geq 0$ such that
\begin{align}
    R_1+\cdots+R_{d_0}=R. \label{eq:rate allocation}
\end{align}
{In Appendix~\ref{sec:achievability proof}, we prove that any achievable solution of~\eqref{eq:obj_func_entry}  
is also an achievable solution of the problem in~\eqref{eq:obj_func_11}; i.e., any $\{p(t_i|x_i):i \in \{ 1,\ldots, d_0 \} \}$ satisfying the constraints~\eqref{eq:variation IB constr 111} also leads a distribution $p(\bt|\x)$  satisfying the constraint in~\eqref{eq:security constr}}.

\section{Application to scalar Gaussian mixture classification }
\label{sec:connection_ML}

The IB problem for Gaussian mixture observations has direct implications for the fundamental problem of binary GMM classification with information leakage, where mutual information serves as the privacy metric. 
With the IB framework, we can extract a maximally compressed yet informative feature for the GMM classification task. 
The misclassification error rate, based on the design of the intermediate representation $T$ in \eqref{eq:t}, is given by:
\begin{align}
    {\rm Pr}(Y \neq \widehat{Y}) &= \frac{1}{2} {\rm Pr} (\widehat{Y}=1|Y =-1) + \frac{1}{2} {\rm Pr} (\widehat{Y}=-1|Y =1),
\end{align}
where $\widehat{Y}$ is the estimate of $Y$ based on the intermediate representation $T$.
In the following, the misclassification error rates of the three achievable schemes are given, providing the fundamental tradeoff between GMM classification performance and the information leakage $I(X;T)$ under the IB formulation.

\subsection{Two-level random quantization scheme}

\begin{Proposition}[Classification error via two-level quantization]\label{prop:one-bit_2} 
For the IB problem in~\eqref{eq:IB_HQ_LB} with symmetric Bernoulli $Y$ and $X|Y \sim \mathcal N(y \beta, 1)$ as in \eqref{eq:def_model_scalar}, based on the formulated  Markov chain, $Y \rightarrow \overline{X} \rightarrow T$, where $\overline{X} = \mathbbm{1}_{X \geq 0}$ and $T = \overline{X} \oplus \overline{N}$, and given the estimator as 
\begin{equation}
\widehat{Y}=
    \begin{cases}
        1 & \text{if } T = 1,\\
        -1 & \text{if } T = 0,
    \end{cases}
\end{equation}
the misclassification error rate of the this scheme is given by
    \begin{align}
         {\rm Pr}(Y \neq \widehat{Y}) = (1- p)q  + p (1- q).
    \end{align}
    Using $I^*(q)$ in \eqref{eq:I_1(q)}, we have $I^*(q) = \ln 2 - H({\rm Pr}(\widehat{Y} \neq Y))$. 
\end{Proposition}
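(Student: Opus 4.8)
The plan is to reduce the entire statement to an elementary computation with the binary‑symmetric channel induced by the cascade $Y \to \overline X \to T$. First I would record the two relevant channels: by \eqref{eq:def_model_scalar} and the symmetry of the Gaussian, $Y \to \overline X = \mathbbm{1}_{X\ge 0}$ is a binary‑symmetric channel with crossover probability $p$, i.e. ${\rm Pr}(\overline X = 1 \mid Y = -1) = {\rm Pr}(\overline X = 0 \mid Y = 1) = p$ with $p$ as in Proposition~\ref{prop:one-bit}; and $\overline X \to T = \overline X \oplus \overline N$, with $\overline N \sim \text{Bern}(q)$ independent of $(X,Y)$, is a binary‑symmetric channel with crossover probability $q$. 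Since $\overline N$ is independent of everything else, the Markov chain $Y \to \overline X \to T$ is a valid factorization.

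Next I would compute the two conditional error probabilities directly. With the stated estimator ($\widehat Y = 1$ iff $T = 1$), we have ${\rm Pr}(\widehat Y = 1 \mid Y = -1) = {\rm Pr}(T = 1 \mid Y = -1)$; conditioning on $\overline X$ and using the independence of $\overline N$ gives ${\rm Pr}(T=1 \mid Y=-1) = p(1-q) + (1-p)q$. By the same argument, exchanging the roles of $0$ and $1$ and of $Y=1$ and $Y=-1$ (which the symmetry of the channel permits), one gets ${\rm Pr}(\widehat Y = -1 \mid Y = 1) = {\rm Pr}(T = 0 \mid Y = 1) = p(1-q) + (1-p)q$. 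Averaging these with weights $1/2$ yields ${\rm Pr}(Y \neq \widehat Y) = (1-p)q + p(1-q)$, which is the first claim. Conceptually, this is just the statement that the cascade of the two binary‑symmetric channels is itself a binary‑symmetric channel with crossover probability $p \star q := p(1-q) + (1-p)q$, and that $\{\widehat Y \neq Y\}$ is exactly the event that this composite channel flips its input.

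Finally, I would invoke Proposition~\ref{prop:one-bit}: the objective value attained by this scheme at rate $R$ is $I_1(q) = \ln 2 - H(p(1-q) + q(1-p))$ with $q$ the solution of \eqref{eq:q}. Substituting the identity ${\rm Pr}(Y \neq \widehat Y) = p(1-q) + q(1-p)$ just established into that expression immediately gives $I^*(q) = \ln 2 - H({\rm Pr}(\widehat Y \neq Y))$, which is the second claim.

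There is no genuine obstacle here; the only care needed is the bookkeeping in the conditional probabilities and the observation that $Y \to \overline X$ is genuinely symmetric, so that the two conditional error terms coincide and the averaging is clean. One may additionally note — though it is not required for the statement as phrased — that this hard‑decision rule is the MAP estimator precisely when $p \star q \le 1/2$, which holds whenever $\beta > 0$ (so $p < 1/2$) and $R > 0$ (so $q < 1/2$).
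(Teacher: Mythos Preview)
Your proposal is correct and follows essentially the same approach as the paper: both condition on $\overline X$ and use the Markov factorization $P_{T|\overline X,Y} = P_{T|\overline X}$ to compute ${\rm Pr}(T=1\mid Y=-1)$ and ${\rm Pr}(T=0\mid Y=1)$, then average with weights $1/2$. Your framing in terms of cascaded binary-symmetric channels and your explicit derivation of the second claim from Proposition~\ref{prop:one-bit} are slightly more detailed than the paper's appendix, but the substance is identical.
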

\begin{proof}[Proof of Proposition~\ref{prop:one-bit_2}]
    See Appendix \ref{sec:proof_one_bit_2}.
\end{proof}

\subsection{Multi-level deterministic quantization scheme}

\begin{Proposition}[Classification error to IB via deterministic quantization]\label{prop:derterm_Q_2} 
For the IB problem in~\eqref{eq:IB_DQ_LB} with symmetric Bernoulli $Y$ and $X|Y \sim \mathcal N(y \beta, 1)$ as in \eqref{eq:def_model_scalar}, based on the Markov chain $Y \rightarrow {X} \rightarrow T$, where $T = \widehat{Q}(X)$, and given the estimator as 
\begin{equation}
\widehat{Y}=
    \begin{cases}
        1 & \text{if } T \geq 0,\\
        -1 & \text{if } T < 0,
    \end{cases}
\end{equation}
the misclassification error rate of the multi-level deterministic quantization is given by
\begin{align}
     {\rm Pr}(Y \neq \widehat{Y}) &= \frac12 \left(Q(-q_s + \beta) + Q(q_s + \beta)\right),
\end{align}
where assuming that the quantization points for $T$ are $t_1 \leq t_2  \cdots \leq t_{L}$, the index $s$ indicates the subscript of the quantization point which satisfies $t_s <0$ and $t_{s+1} \geq 0$.
\end{Proposition}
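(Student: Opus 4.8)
The plan is to show that, under the deterministic quantizer $T=\widehat{Q}(X)$, the rule that outputs $\widehat{Y}=1$ iff $T\ge 0$ collapses to a single-threshold test on the raw observation $X$; once this is established, the misclassification probability is a direct Gaussian-tail computation.

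First I would exploit the structure of $\widehat{Q}$ from Section~\ref{sec:det_Q}: the bins $[q_{j-1},q_j]$ are ordered with $q_0=-\infty<q_1<\cdots<q_{L-1}<q_L=\infty$ (strict monotonicity because each bin carries positive probability by the construction~\eqref{eq:prob_t}), so the representative $t_j=(q_{j-1}+q_j)/2$ is strictly increasing in $j$. Hence $\{j:t_j\ge 0\}=\{s+1,\ldots,L\}$, where $s$ is the unique index with $t_s<0\le t_{s+1}$; equivalently $t_j<0\iff j\le s$. Since $X\in(q_{j-1},q_j]$ forces $T=t_j$, this yields the equivalence $T\ge 0\iff X\ge q_s$ up to the probability-zero boundary event $X=q_s$. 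Therefore $\widehat{Y}=1\iff X\ge q_s$ and $\widehat{Y}=-1\iff X<q_s$, i.e., the classifier is exactly the threshold test on $X$ at level $q_s$.

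Next I would evaluate the two conditional error events using $X\,|\,Y\sim\mathcal N(\beta Y,1)$. Under $Y=-1$ we have $X\sim\mathcal N(-\beta,1)$, so ${\rm Pr}(\widehat{Y}=1\,|\,Y=-1)={\rm Pr}(X\ge q_s\,|\,Y=-1)=Q(q_s+\beta)$. Under $Y=1$ we have $X\sim\mathcal N(\beta,1)$, so ${\rm Pr}(\widehat{Y}=-1\,|\,Y=1)={\rm Pr}(X<q_s\,|\,Y=1)=1-Q(q_s-\beta)=Q(-q_s+\beta)$, using $Q(-u)=1-Q(u)$. Averaging over the equiprobable priors gives ${\rm Pr}(Y\neq\widehat{Y})=\frac{1}{2}\big(Q(-q_s+\beta)+Q(q_s+\beta)\big)$, as claimed.

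The only real obstacle is the reduction to a threshold test: one must confirm that the sign of $T$ is a monotone function of the bin index — which is precisely monotonicity of the $t_j$'s — and that the index $s$ with $t_s<0\le t_{s+1}$ is well defined; this holds whenever the quantization places at least one bin center on each half-line, which is the generic situation for the symmetric construction used here, while degenerate cases such as $R=0$ (equivalently $L=1$) make $T$ constant and are handled as in the $R=0$ remarks. Everything after the reduction is the textbook binary-hypothesis error calculation and introduces no new ideas.
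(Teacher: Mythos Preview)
Your proof is correct and essentially the same as the paper's: both identify the index $s$ with $t_s<0\le t_{s+1}$, reduce the event $\{T\ge 0\}$ to $\{X\ge q_s\}$, and evaluate the two Gaussian tails using $Q(-u)=1-Q(u)$. The only cosmetic difference is that the paper reaches ${\rm Pr}(X\ge q_s\mid Y)$ by summing the bin probabilities from~\eqref{eq:det_p_t_y} and telescoping, whereas you make the threshold reduction explicit at the outset; the computations are identical.
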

\begin{proof}[Proof of Proposition~\ref{prop:derterm_Q_2}]
    See Appendix \ref{sec:proof_determ_Q_2}.
\end{proof}

\subsection{Soft quantization scheme}

\begin{Proposition}[Classification error via soft quantization]\label{prop:soft_quantization} 
For the IB problem defined in~\eqref{eq:V} with symmetric Bernoulli $Y$ and $X|Y \sim \mathcal N(\beta Y, 1)$, based on the Markov chain, $Y \rightarrow {X} \rightarrow T$, where $T = \alpha \widehat{X} + \widehat{N} =\alpha \tanh{(\beta X)} + \widehat{N}$, and given the estimator as 
\begin{equation}
\widehat{Y}=
    \begin{cases}
        1 & \text{if } T \geq 0,\\
        -1 & \text{if } T < 0,
    \end{cases}
\end{equation}
the misclassification error rate of the this scheme is given by
    \begin{align}
        {\rm Pr}(Y \neq \widehat{Y}) = \frac{1}{2 \pi} \int_{0}^{\infty} \int_{-\infty}^{\infty} e^{-\frac{t^2 + \alpha^2 \tanh^2{(\beta x)} + x^2 +\beta^2}{2}} \cosh(t \alpha \tanh{(\beta x)} - \beta x)  dxdt. 
    \end{align}
\end{Proposition}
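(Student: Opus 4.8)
The plan is to expand ${\rm Pr}(Y\neq\widehat{Y})$ into a Gaussian double integral by conditioning successively on $Y$ and on $X$, and then to recognize the $\cosh$ as the sum of the two conditional‑error contributions. First I would use $P_Y(1)=P_Y(-1)=1/2$ to write
\begin{align}
{\rm Pr}(Y\neq\widehat{Y}) = \tfrac12\,{\rm Pr}(\widehat{Y}=1\mid Y=-1) + \tfrac12\,{\rm Pr}(\widehat{Y}=-1\mid Y=1).
\end{align}
Since the estimator thresholds $T$ at $0$, these are ${\rm Pr}(T\ge 0\mid Y=-1)$ and ${\rm Pr}(T<0\mid Y=1)$. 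Conditioning further on $X=x$: because $\widehat{N}\sim\mathcal{N}(0,1)$ is independent of $X$ and $T=\alpha\tanh(\beta X)+\widehat{N}$, we have $T\mid\{X=x\}\sim\mathcal{N}(\alpha\tanh(\beta x),1)$, while $X\mid\{Y=y\}\sim\mathcal{N}(\beta y,1)$. Hence, for instance,
\begin{align}
{\rm Pr}(\widehat{Y}=1\mid Y=-1) = \frac{1}{2\pi}\int_{-\infty}^{\infty} e^{-\frac{(x+\beta)^2}{2}}\int_{0}^{\infty} e^{-\frac{(t-\alpha\tanh(\beta x))^2}{2}}\,dt\,dx,
\end{align}
and similarly ${\rm Pr}(\widehat{Y}=-1\mid Y=1)$ with $e^{-(x-\beta)^2/2}$ and the inner integral over $(-\infty,0)$; Fubini applies since all integrands are nonnegative.

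Next I would align the two inner integration ranges. In ${\rm Pr}(\widehat{Y}=-1\mid Y=1)$, substituting $t\mapsto -t$ and using that the Gaussian kernel is even turns $\int_{-\infty}^{0} e^{-(t-\alpha\tanh(\beta x))^2/2}\,dt$ into $\int_{0}^{\infty} e^{-(t+\alpha\tanh(\beta x))^2/2}\,dt$; no substitution in $x$ is needed because $\tanh$ is odd. Now I would expand the quadratics $-(x\pm\beta)^2/2-(t\mp\alpha\tanh(\beta x))^2/2$. In both terms the ``square'' part equals $-(t^2+\alpha^2\tanh^2(\beta x)+x^2+\beta^2)/2$, which is common, and the surviving cross terms are exactly $\pm\big(t\,\alpha\tanh(\beta x)-\beta x\big)$, with the two signs corresponding to the two conditional‑error contributions. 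Adding the two integrands, $e^{+(t\alpha\tanh(\beta x)-\beta x)}+e^{-(t\alpha\tanh(\beta x)-\beta x)} = 2\cosh(t\alpha\tanh(\beta x)-\beta x)$; the $\tfrac12$ from the decomposition cancels this factor $2$, and collecting the $\tfrac{1}{2\pi}$ yields precisely the claimed expression.

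This is a bookkeeping computation rather than a conceptual one, so I do not expect a genuine obstacle. The only points that need care are the $t\mapsto -t$ substitution (so that both inner integrals run over $(0,\infty)$) and the sign tracking in the cross terms, so that the two contributions assemble into a single $\cosh$ rather than two separate exponentials; one should also keep in mind that the oddness of $\tanh$ is what makes the $x$‑dependent pieces line up without an additional change of variable.
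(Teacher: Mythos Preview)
Your proposal is correct and follows essentially the same route as the paper: split the error into the two conditional probabilities, write each as a double Gaussian integral via $T\mid X\sim\mathcal N(\alpha\tanh(\beta x),1)$ and $X\mid Y\sim\mathcal N(\beta y,1)$, flip $t\mapsto -t$ in the second term to align the inner ranges over $(0,\infty)$, and combine the resulting $\pm(t\alpha\tanh(\beta x)-\beta x)$ cross terms into a $\cosh$. One small remark: the oddness of $\tanh$ is not actually needed anywhere---after the $t$-substitution the cross terms already pair as $\pm$ of the same quantity purely by algebra---so you can drop that caveat.
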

\begin{proof}[Proof of Proposition~\ref{prop:soft_quantization}]
    See Appendix \ref{sec:proof_soft_quantization}.
\end{proof}

\section{Simulation Results}
\label{sec: simulation}
\subsection{Evaluation of the BA algorithm}
In this section, we present three baseline iterative algorithms for evaluation.
\subsubsection{Three baseline algorithms on the IB problem}
\paragraph{{Agglomerative Information Bottleneck (Agg-IB)}}
\label{sec: agg_IB}
Inspired by the iterative algorithm in Section \ref{sec: BA}, this algorithm aims to introduce a hard partition on the observation $X$ into $m$ disjoint subsets to maximize the objective function in \eqref{eq:obj_func_1} \cite{slonim1999Agglomerative}.
For notional simplicity, we define $T_{\ell}$ as the merged space of $T$ based on $\ell$ partitions. 
First, we discretize the space of $X$ into $d_{X}$ clusters, and duplicate the discrete space $X$ as the $T$ space, i.e., $X, T_{d_X} \in \{t_1, t_2, ..., t_{d_X}\}$, leading to $I(T_{d_X}; Y) = I(X; Y)$.
Furthermore, we reduce the cardinality of $T$ by iteratively merging the two clusters of $T$ in such a way that the objective function is maximized until the desired number of subsets $m$ is reached. Thus, the iteration forms a Markov chain, $T_{d_X} \rightarrow T_{d_X-1} \rightarrow \cdots \rightarrow T_{m}$. 
The selection of two clusters to merge into $\ell$ subsets depends on the difference of the objection function, denoted as $\Delta L(\cdot, \cdot)$. Considering merging two clusters $t_i, t_j$ with the probabilities $P_T(t_i), P_T(t_j)$ respectively, the difference of the objection function
can be defined as 
\begin{align}\label{eq:delta_L}
    \Delta L (t_i, t_j) &=  I(T_{\ell+1}; Y) - I(T_{\ell}; Y)\nonumber \\
    &= (P_T(t_i) + P_T(t_j))  D_{\rm JS}^{\Pi}(P_{Y|T}(y|t_i) \| P_{Y|T}(y|t_j)), 
\end{align}
where $D_{\rm JS}^{\Pi}(\cdot\|\cdot)$ denotes the Jensen-Shannon divergence. The indices of the merging clusters can be determined by 
\begin{align}
  ({\rm idx}_i, {\rm idx}_j) = \arg \min_{i, j \in [\ell +1], i \neq j}   \Delta L (t_i, t_j),
\end{align}
ensuring that the objective function $ I(T_{\ell}; Y)$ is maximized when $t_{{\rm idx}_i}$ and $t_{{\rm idx}_j}$ emerge from all available fusion possibilities at this iteration.

\paragraph{Sequential Information Bottleneck (Seq-IB)} 
\label{sec:Seq-IB}
The Seq-IB algorithm is a response to resolving the computational complexity issue in the Agg-IB algorithm \cite{Hassanpour2017overview}. Instead of duplicating the space of $X$ as the space of $T$, the Seq-IB algorithm initializes with a random partition of $X$ with $m$ clusters forming the space of $T$, i.e., $T \in \{t_1, t_2, ...,t_m\}$. At each iteration, a new point $x^{\rm new}$ distinct from the cluster points is randomly drawn as a new cluster. The agglomerative clustering algorithm detailed in Section \ref{sec: agg_IB} is then employed to merge this new cluster into the existing clusters, maximizing the objective function $ I(T_{\ell}; Y)$ \cite{slonim2002unsupervised}. The merging decision is determined by 
\begin{align}
    t^{\rm new} = \arg \min_{t \in \{t_1, ..., t_m\}} \Delta L(t, x^{\rm new}),
\end{align}
where $\Delta L(\cdot, \cdot)$ is defined in \eqref{eq:delta_L}. The probability of the new cluster point is then updated as the sum of the probabilities of the two merged clusters. This iterative process continues until the convergence criterion is met. To mitigate the risk of converging to local minima, \cite{slonim2002unsupervised} recommends running the algorithm with various initializations.  

\paragraph{Deterministic Information Bottleneck (Det-IB)}
\label{sec:det_IB}
The Det-IB algorithm is inspired by the solution of the generalized IB problem as
\begin{align}\label{eq:generalized_IB}
    \widetilde{L} = \min_{f_{T|X}(t|x)} H(T) - \gamma H(T|X) - \lambda I(T; Y), 
\end{align}
where $\gamma \in [0,1]$. In some special cases, for instance, when $\gamma=1$, it aligns with the original problem formulated in \eqref{eq:obj_func_2}, while $\gamma=0$ corresponds to the deterministic quantization scheme in \eqref{eq:IB_DQ_LB}. Using the Blahut-Arimoto algorithm to address \eqref{eq:generalized_IB}, it iterates over probabilities as described below \cite{strouse2017deterministic}
\begin{align}
        P^{\gamma}_{T|X}(t|x) &= \frac{1}{Z(x, \gamma, \lambda)} \exp\left(\frac{1}{\gamma} \left( \log  P^{\gamma}_{T}(t) - \lambda D_{\rm KL} (P_{Y|X}(y|x)\| P^{\gamma}_{Y|T} (y|t))\right)\right), \label{eq:det_IB_a}\\
        P^{\gamma}_{T}(t) &= \sum_{x \in \mathcal{X}}  P^{\gamma}_{T|X}(t|x)  P_X(x), \label{eq:det_IB_b}\\
    P^{\gamma}_{Y|T}(y|t) &= \frac{1}{P^{\gamma}_T(t)} \sum_{x \in \mathcal{X}} P_{Y|X}(y|x)  P_X(x) P^{\gamma}_{T|X}(t|x), \label{eq:det_IB_c}
\end{align}
where $Z(x, \gamma, \lambda)$ denotes a normalization factor ensuring $\sum_{t \in \mathcal{T}} P^{\gamma}_{T|X}(t|x)$ equals $1$.
The Det-IB algorithm aims to solve the problem \eqref{eq:generalized_IB} specifically for $\gamma = 0$. This simplifies \eqref{eq:det_IB_a} as follows
\begin{align}\label{eq:det_IB_aa}
    \lim_{\gamma \rightarrow 0} P^{\gamma}_{T|X}(t|x) = \delta\left(\arg \max_t  \left(\log P^{\gamma}_T(t) - \lambda D_{\rm KL} (P_{Y|X}(y|x)\| P^{\gamma}_{Y|T} (y|t)))\right)\right),
\end{align}
where $\delta(\cdot)$ is defined as the Dirac delta distribution. The Det-IB algorithm begins with a random deterministic quantization $P^{\gamma}_{T|X}(t|x)$ and iterates through the equations \eqref{eq:det_IB_b}, \eqref{eq:det_IB_c} and \eqref{eq:det_IB_aa} until the convergence criterion is satisfied.

\subsubsection{Simulation on the evaluation of the BA algorithm}
\begin{figure}[htb]
\centering
        \begin{subfigure}[b]{0.32\textwidth}
        \begin{tikzpicture}[scale=.5]
        \pgfplotsset{every major grid/.append style={densely dashed}}
    \begin{axis}[%
    width=3.528in,
    height=3.029in,
    at={(1.011in,0.767in)},
    scale only axis,
    xmin=0,
    xmax=2.5,
    xlabel style={font=\color{white!15!black}},
    xlabel={$I(X; T)$ (bits)},
    ymin=0,
    ymax=0.22,
    ylabel style={font=\color{white!15!black}},
    ylabel={$I(Y; T)$ (bits)},
    axis background/.style={fill=white},
    xmajorgrids,
    ymajorgrids,
    legend style={at={(0.64,0.0)}, anchor=south west, legend cell align=left, align=left, draw=white!15!black}
    ]
    \addplot [color=mycolor1, line width=2.5pt, mark size=3.0pt, mark=+, mark options={solid, mycolor1}]
      table[row sep=crcr]{%
        0.0000    0.0000\\
        0.1149    0.0301\\
        0.3405    0.0810\\
        0.8077    0.1482\\
        1.0976    0.1720\\
        1.3077    0.1842\\
        1.4732    0.1916\\
        1.6078    0.1966\\
        1.7214    0.2001\\
        1.8202    0.2027\\
        1.9070    0.2048\\
        1.9848    0.2064\\
        2.0550    0.2077\\
        2.1191    0.2089\\
        2.1780    0.2098\\
        2.2325    0.2106\\
        2.2832    0.2113\\
        2.3306    0.2119\\
        2.3751    0.2125\\
        2.4170    0.2129\\
        2.4566    0.2134\\
        2.4942    0.2137\\
    };
    \addlegendentry{BA Alg.}
    \addplot [color=mycolor6, line width=2.5pt, mark size=3.0pt, mark=otimes, mark options={solid, mycolor6}]
          table[row sep=crcr]{%
        0.0000    0.0000\\
    0.9801    0.1489\\
    1.5087    0.1821\\
    1.9230    0.1979\\
    2.3104    0.2070\\
        };
        \addlegendentry{Agg-IB Alg.}
    \addplot [color=mycolor2, line width=2.5pt, mark size=3.0pt, mark=diamond*, mark options={solid, mycolor2}]
          table[row sep=crcr]{%
        0.0000    0.0000\\
    0.9992    0.1523\\
    1.5835    0.1873\\
    1.9900    0.1996\\
    2.3207    0.2069\\
        };
        \addlegendentry{Seq-IB Alg.}
    \addplot [color=mycolor5, line width=2.5pt, mark size=3.0pt, mark=square*, mark options={solid, rotate=180, mycolor5}]
      table[row sep=crcr]{%
    0.0000    0.0000\\
    0.9928    0.1512\\
    0.9992    0.1523\\
    1.9537    0.2002\\
    2.3427    0.2037\\
    };
     \addlegendentry{Det-IB Alg.}
    \end{axis}
    \end{tikzpicture}%
        \caption{$\beta=0.6$.}
        \label{fig:beta_06}
        \end{subfigure}
        \quad
        \begin{subfigure}[b]{0.3\textwidth}
    \begin{tikzpicture}[scale=0.5]
    \pgfplotsset{every major grid/.append style={densely dashed}}
\begin{axis}[%
width=3.528in,
height=3.029in,
at={(1.011in,0.767in)},
scale only axis,
xmin=0,
xmax=3,
xlabel style={font=\color{white!15!black}},
xlabel={$I(X; T)$ (bits)},
ymin=0,
ymax=0.5,
ylabel style={font=\color{white!15!black}},
ylabel={$I(Y; T)$ (bits)},
axis background/.style={fill=white},
xmajorgrids,
ymajorgrids,
legend style={at={(0.64,0.0)}, anchor=south west, legend cell align=left, align=left, draw=white!15!black}
]
\addplot [color=mycolor1, line width=2.5pt, mark size=3.0pt, mark=+, mark options={solid, mycolor1}]
  table[row sep=crcr]{%
    0.0000    0.0000\\
    0.2115    0.1113\\
    0.6565    0.3010\\
    0.7863    0.3392\\
    0.8457    0.3526\\
    1.2974    0.4196\\
    1.5531    0.4407\\
    1.7396    0.4515\\
    1.8876    0.4582\\
    2.0089    0.4626\\
    2.1128    0.4658\\
    2.2038    0.4682\\
    2.2844    0.4701\\
    2.3572    0.4717\\
    2.4232    0.4729\\
    2.4837    0.4740\\
    2.5396    0.4749\\
    2.5915    0.4757\\
    2.6400    0.4763\\
    2.6855    0.4769\\
    2.7282    0.4774\\
    2.7687    0.4779\\
    2.8070    0.4783\\
    2.8434    0.4787\\
};
\addlegendentry{BA Alg.}
\addplot [color=mycolor6, line width=2.5pt, mark size=3.0pt, mark=otimes, mark options={solid, mycolor6}]
          table[row sep=crcr]{%
        0.0000         0\\
    0.9996    0.3685\\
    1.5027    0.4159\\
    1.9745    0.4550\\
    2.1886    0.4613\\
    2.4816    0.4670\\
    2.7747    0.4728\\
    2.9667    0.4772\\
        };
    \addlegendentry{Agg-IB Alg.}
    \addplot [color=mycolor2, line width=2.5pt, mark size=3.0pt, mark=diamond*, mark options={solid, mycolor2}]
          table[row sep=crcr]{%
         0.0000         0\\
    0.9889    0.3519\\
    1.5696    0.4325\\
    1.9630    0.4531\\
    2.2766    0.4639\\
    2.4741    0.4660\\
    2.8914    0.4766\\
        };
        \addlegendentry{Seq-IB Alg.}
  \addplot [color=mycolor5, line width=2.5pt, mark size=3.0pt, mark=square*, mark options={solid, rotate=180, mycolor5}]
      table[row sep=crcr]{%
    0.0000   0.0000\\
    0.9996    0.3685\\
    1.9425    0.4540\\
    1.9857    0.4550\\
    2.3245    0.4627\\
    2.7445    0.4579\\
    3.0086    0.4734\\
    };
     \addlegendentry{Det-IB Alg.}
\end{axis}
\end{tikzpicture}%
        \caption{$\beta=1$.}
        \label{fig:beta_1}
        \end{subfigure}
        \quad
        \begin{subfigure}[b]{0.3\textwidth}
        \begin{tikzpicture}[scale=0.5]
        \pgfplotsset{every major grid/.append style={densely dashed}}
        \begin{axis}[%
        width=3.528in,
        height=3.029in,
        at={(1.011in,0.767in)},
        scale only axis,
        xmin=0,
        xmax=2.5,
        xlabel style={font=\color{white!15!black}},
        xlabel={$I(X;  T)$ (bits)},
        ymin=0,
        ymax=0.71,
        ylabel style={font=\color{white!15!black}},
        ylabel={$I(Y; T)$ (bits)},
        axis background/.style={fill=white},
        xmajorgrids,
        ymajorgrids,
        legend style={at={(0.64,0.0)}, anchor=south west, legend cell align=left, align=left, draw=white!15!black}
        ]
        \addplot [color=mycolor1, line width=2.5pt, mark size=3.0pt, mark=+, mark options={solid, mycolor1}]
          table[row sep=crcr]{%
        0.0000    0.0000\\
            0.4778    0.3465\\
            0.7902    0.5329\\
            0.8445    0.5587\\
            0.8777    0.5727\\
            0.8995    0.5810\\
            0.9145    0.5862\\
            0.9253    0.5897\\
            0.9333    0.5921\\
            0.9396    0.5938\\
            0.9549    0.5977\\
            1.0000    0.6087\\
            1.0380    0.6174\\
            1.0702    0.6243\\
            1.0980    0.6300\\
            1.3868    0.6714\\
            1.5669    0.6861\\
            1.7019    0.6939\\
            1.8107    0.6988\\
            1.9032    0.7022\\
            1.9829    0.7047\\
            2.0533    0.7065\\
            2.1167    0.7080\\
            2.1737    0.7092\\
        };
        \addlegendentry{BA Alg.}
        \addplot [color=mycolor6, line width=2.5pt, mark size=3.0pt, mark=otimes, mark options={solid, mycolor6}]
          table[row sep=crcr]{%
        0.0000    0.0000\\
    0.9994    0.6005\\
    1.4597    0.6556\\
    1.8679    0.6918\\
    2.0261    0.7008\\
    2.1566    0.7054\\
    2.5000    0.7091\\
        };
    \addlegendentry{Agg-IB Alg.}
    \addplot [color=mycolor2, line width=2.5pt, mark size=3.0pt, mark=diamond*, mark options={solid, mycolor2}]
          table[row sep=crcr]{%
         0.0000    0.0000\\
    0.9997    0.6025\\
    1.4493    0.6496\\
    1.8249    0.6621\\
    2.1577    0.6794\\
    2.3958    0.7067\\
        };
        \addlegendentry{Seq-IB Alg.}
        \addplot [color=mycolor5, line width=2.5pt, mark size=3.0pt, mark=square*, mark options={solid, rotate=180, mycolor5}]
      table[row sep=crcr]{%
    0.0000    0.0000\\
    1.0000    0.6026\\
    1.6134    0.6832\\
    1.7578    0.6924\\
    2.0748    0.7019\\
    2.2689    0.7063\\
    2.5000   0.7069\\
    };
     \addlegendentry{Det-IB Alg.}
        \end{axis}
        \end{tikzpicture}%
        \caption{$\beta=\sqrt{2}$.}
        \label{fig:beta_sqrt2}
        \end{subfigure}
 \caption{The three baseline algorithms compared with the BA algorithm in terms of the objective mutual information $I(Y; T)$ and the constraint $I(X; T)$ for Bernoulli source and univariate mixture Gaussian observation when $\beta=\{0.6, 1, \sqrt{2}\}$.}
     \label{fig:beta_one_dim}
\end{figure}
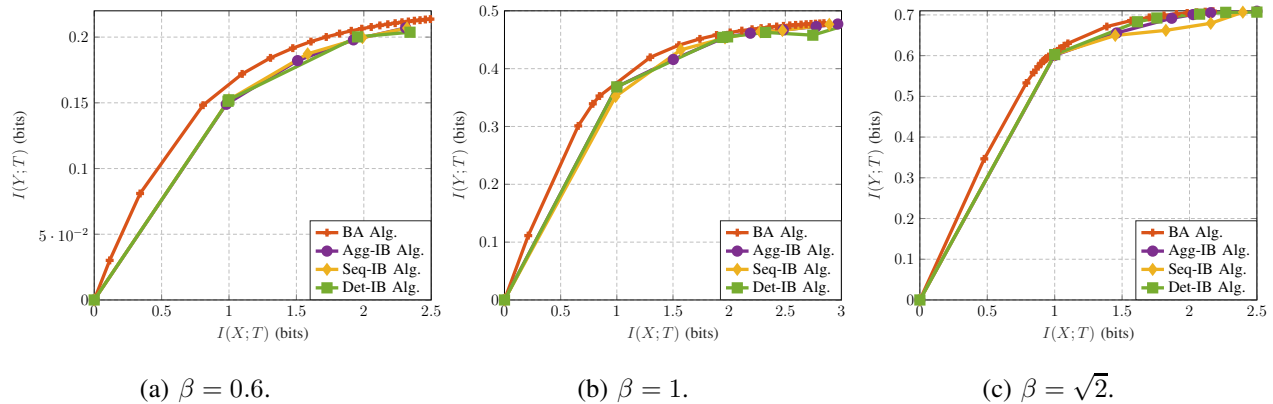

In this section, we perform numerical experiments to validate the optimal bound of the IB problem using the BA algorithm. We compare it to three baseline algorithms: Agg-IB (section \ref{sec: agg_IB}), Seq-IB (section \ref{sec:Seq-IB}), and Det-IB (section \ref{sec:det_IB}), considering the Bernoulli source labels and \emph{univariate} Gaussian mixture observations with different values of $\beta \in \{0.6, 1, \sqrt{2}\}$.
As illustrated in Fig. \ref{fig:beta_one_dim}, it shows that the performance of the three baseline algorithms is comparable, while the BA algorithm exhibits superior performance. This observation underscores the validity of the numerical optimal bound obtained with the BA algorithm.

\subsection{Simulations on the univariate mixture Gaussian IB problem}
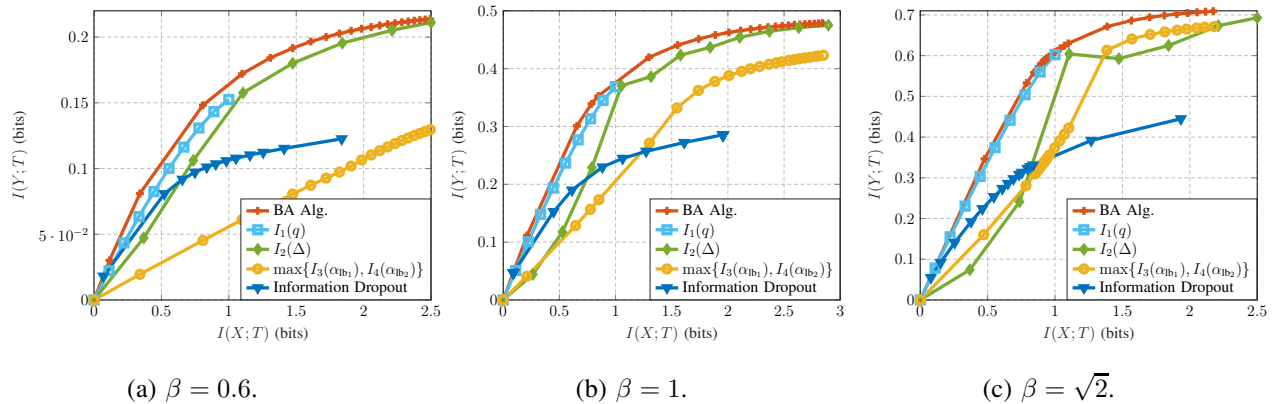
\begin{figure}[htb]
\centering
        \begin{subfigure}[b]{0.3\textwidth}
            \begin{tikzpicture}[scale=0.5]
            \pgfplotsset{every major grid/.append style={densely dashed}}
        \begin{axis}[%
        width=3.528in,
        height=3.029in,
        at={(1.011in,0.767in)},
        scale only axis,
        xmin=0,
        xmax=2.5,
        xlabel style={font=\color{white!15!black}},
        xlabel={$I(X; T)$ (bits)},
        ymin=0,
        ymax=0.22,
        ylabel style={font=\color{white!15!black}},
        ylabel={$I(Y; T)$ (bits)},
        axis background/.style={fill=white},
        xmajorgrids,
        ymajorgrids,
        legend style={at={(0.435, 0.00)}, anchor=south west, legend cell align=left, align=left, draw=white!15!black}
        ]
        \addplot [color=mycolor1, line width=2.5pt, mark size=3.0pt, mark=+, mark options={solid, mycolor1}]
          table[row sep=crcr]{%
           0.0000    0.0000\\
            0.1149    0.0301\\
            0.3405    0.0810\\
            0.8077    0.1482\\
            1.0976    0.1720\\
            1.3077    0.1842\\
            1.4732    0.1916\\
            1.6078    0.1966\\
            1.7214    0.2001\\
            1.8202    0.2027\\
            1.9070    0.2048\\
            1.9848    0.2064\\
            2.0550    0.2077\\
            2.1191    0.2089\\
            2.1780    0.2098\\
            2.2325    0.2106\\
            2.2832    0.2113\\
            2.3306    0.2119\\
            2.3751    0.2125\\
            2.4170    0.2129\\
            2.4566    0.2134\\
            2.4942    0.2137\\
        };
        \addlegendentry{BA Alg.}
        \addplot [color=mycolor4, line width=2.5pt, mark size=3.0pt, mark=square, mark options={solid, rotate=180, mycolor4}]
          table[row sep=crcr]{%
        0    0.0000\\
            0.1111    0.0222\\
            0.2222    0.0434\\
            0.3333    0.0635\\
            0.4444    0.0824\\
            0.5556    0.1001\\
            0.6667    0.1163\\
            0.7778    0.1308\\
            0.8889    0.1433\\
            1.0000    0.1525\\
        };
        \addlegendentry{$I_1(q)$}
        \addplot [color=mycolor5, line width=2.5pt, mark size=3.0pt, mark=diamond, mark options={solid, rotate=180, mycolor5}]
          table[row sep=crcr]{%
         0         0\\
            0.3684    0.0472\\
            0.7368    0.1062\\
            1.1053    0.1576\\
            1.4737    0.1801\\
            1.8421    0.1955\\
            2.2105    0.2052\\
            2.500    0.2111\\
        };
        \addlegendentry{$I_2(\Delta)$}
        
        \addplot [color=mycolor2, line width=2.5pt, mark size=3.0pt, mark=o, mark options={solid, rotate=180, mycolor2}]
          table[row sep=crcr]{%
        0         0
            0.1149    0.0067\\
            0.3405    0.0195\\
            0.8077    0.0453\\
            1.0976    0.0610\\
            1.3077    0.0721\\
            1.4732    0.0806\\
            1.6078    0.0873\\
            1.7214    0.0927\\
            1.8202    0.0973\\
            1.9070    0.1019\\
            1.9848    0.1065\\
            2.0550    0.1103\\
            2.1191    0.1136\\
            2.1780    0.1165\\
            2.2325    0.1190\\
            2.2832    0.1213\\
            2.3306    0.1233\\
            2.3751    0.1251\\
            2.4170    0.1267\\
            2.4566    0.1282\\
            2.4942    0.1296\\
        };
        \addlegendentry{$\max\{I_3(\alpha_{\text{lb}_1}) , I_4(\alpha_{\text{lb}_2}) \}$}
        \addplot [color=mycolor3, line width=2.5pt, mark size=3.0pt, mark=triangle, mark options={solid, rotate=180, mycolor3}]
          table[row sep=crcr]{%
        0.0689    0.0181\\
            0.5224    0.0808\\
            0.6556    0.0918\\
            0.7487    0.0972\\
            0.8362    0.1011\\
            0.9036    0.1035\\
            0.9817    0.1059\\
            1.0532    0.1078\\
            1.1566    0.1103\\
            1.2547    0.1123\\
            1.4102    0.1153\\
            1.8370    0.1226\\
        };
        \addlegendentry{Information Dropout}
        \end{axis}
        \end{tikzpicture}%
        \caption{$\beta=0.6$.}
        \label{fig:all_beta_06}
        \end{subfigure}
        ~~~~
        \begin{subfigure}[b]{0.3\textwidth}
            \begin{tikzpicture}[scale=0.5]
    \pgfplotsset{every major grid/.append style={densely dashed}}
\begin{axis}[%
        width=3.528in,
        height=3.029in,
        at={(1.011in,0.767in)},
        scale only axis,
        xmin=0,
        xmax=3,
        xlabel style={font=\color{white!15!black}},
        xlabel={$I(X; T)$ (bits)},
        ymin=0,
        ymax=0.5,
        ylabel style={font=\color{white!15!black}},
        ylabel={$I(Y; T)$ (bits)},
        axis background/.style={fill=white},
        xmajorgrids,
        ymajorgrids,
        legend style={at={(0.435, 0.00)}, anchor=south west, legend cell align=left, align=left, draw=white!15!black}
]
\addplot [color=mycolor1, line width=2.5pt, mark size=3.0pt, mark=+, mark options={solid, mycolor1}]
  table[row sep=crcr]{%
    0.0000    0.0000\\
    0.2115    0.1113\\
    0.6565    0.3010\\
    0.7863    0.3392\\
    0.8457    0.3526\\
    1.2974    0.4196\\
    1.5531    0.4407\\
    1.7396    0.4515\\
    1.8876    0.4582\\
    2.0089    0.4626\\
    2.1128    0.4658\\
    2.2038    0.4682\\
    2.2844    0.4701\\
    2.3572    0.4717\\
    2.4232    0.4729\\
    2.4837    0.4740\\
    2.5396    0.4749\\
    2.5915    0.4757\\
    2.6400    0.4763\\
    2.6855    0.4769\\
    2.7282    0.4774\\
    2.7687    0.4779\\
    2.8070    0.4783\\
    2.8434    0.4787\\
};
\addlegendentry{BA Alg.}
\addplot [color=mycolor4, line width=2.5pt, mark size=3.0pt, mark=square, mark options={solid, rotate=180, mycolor4}]
  table[row sep=crcr]{%
0    0.0000\\
    0.1111    0.0510\\
    0.2222    0.1005\\
    0.3333    0.1481\\
    0.4444    0.1936\\
    0.5556    0.2367\\
    0.6667    0.2768\\
    0.7778    0.3133\\
    0.8889    0.3452\\
    1.0000    0.3689\\
};
\addlegendentry{$I_1(q)$}
\addplot [color=mycolor5, line width=2.5pt, mark size=3.0pt, mark=diamond, mark options={solid, rotate=180, mycolor5}]
  table[row sep=crcr]{%
 0         0\\
    0.2632    0.0442\\
    0.5263    0.1176\\
    0.7895    0.2290\\
    1.0526    0.3706\\
    1.3158    0.3862\\
    1.5789    0.4234\\
    1.8421    0.4368\\
    2.1053    0.4542\\
    2.3684    0.4646\\
    2.6316    0.4710\\
    2.8947    0.4752\\
};
\addlegendentry{$I_2(\Delta)$}
\addplot [color=mycolor2, line width=2.5pt, mark size=3.0pt, mark=o, mark options={solid, rotate=180, mycolor2}]
  table[row sep=crcr]{%
 0    0.0000\\
    0.2103    0.0410\\
    0.6442    0.1286\\
    0.7760    0.1567\\
    0.8519    0.1734\\
    1.3017    0.2712\\
    1.5473    0.3322\\
    1.7371    0.3622\\
    1.8842    0.3779\\
    2.0094    0.3880\\
    2.1223    0.3954\\
    2.2011    0.3998\\
    2.2825    0.4038\\
    2.3665    0.4076\\
    2.4312    0.4102\\
    2.4758    0.4118\\
    2.5416    0.4142\\
    2.5883    0.4157\\
    2.6347    0.4171\\
    2.6809    0.4185\\
    2.7300    0.4199\\
    2.7769    0.4212\\
    2.8032    0.4218\\
    2.8510    0.4231\\
};
\addlegendentry{$\max\{I_3(\alpha_{\text{lb}_1}) , I_4(\alpha_{\text{lb}_2}) \}$}
\addplot [color=mycolor3, line width=2.5pt, mark size=3.0pt, mark=triangle, mark options={solid, rotate=180, mycolor3}]
  table[row sep=crcr]{%
0.0906    0.0468\\
    0.0906    0.0468\\
    0.0906    0.0468\\
    0.0906    0.0468\\
    0.4483    0.1527\\
    0.6124    0.1900\\
    0.8832    0.2299\\
    1.0633    0.2448\\
    1.2724    0.2570\\
    1.6110    0.2722\\
    1.9567    0.2856\\
    1.9567    0.2856\\
    1.9567    0.2856\\
    1.9567    0.2856\\
    1.9567    0.2856\\
    1.9567    0.2856\\
    1.9567    0.2856\\
    1.9567    0.2856\\
    1.9567    0.2856\\
    1.9567    0.2856\\
    1.9567    0.2856\\
    1.9567    0.2856\\
    1.9567    0.2856\\
    1.9567    0.2856\\
    1.9567    0.2856\\
};
\addlegendentry{Information Dropout}
\end{axis}
\end{tikzpicture}%
        \caption{$\beta=1$.}
        \label{fig:all_beta_1}
        \end{subfigure}
        ~~
        \begin{subfigure}[b]{0.3\textwidth}
        \begin{tikzpicture}[scale=0.5]
        \pgfplotsset{every major grid/.append style={densely dashed}}
        \begin{axis}[%
         width=3.528in,
        height=3.029in,
        at={(1.011in,0.767in)},
        scale only axis,
        xmin=0,
        xmax=2.5,
        xlabel style={font=\color{white!15!black}},
        xlabel={$I(X; T)$ (bits)},
        ymin=0,
        ymax=0.71,
        ylabel style={font=\color{white!15!black}},
        ylabel={$I(Y; T)$ (bits)},
        axis background/.style={fill=white},
        xmajorgrids,
        ymajorgrids,
        legend style={at={(0.435,0.0)}, anchor=south west, legend cell align=left, align=left, draw=white!15!black}
        ]
        \addplot [color=mycolor1, line width=2.5pt, mark size=3.0pt, mark=+, mark options={solid, mycolor1}]
          table[row sep=crcr]{%
        0.0000    0.0000\\
            0.4778    0.3465\\
            0.7902    0.5329\\
            0.8445    0.5587\\
            0.8777    0.5727\\
            0.8995    0.5810\\
            0.9145    0.5862\\
            0.9253    0.5897\\
            0.9333    0.5921\\
            0.9396    0.5938\\
            0.9549    0.5977\\
            1.0000    0.6087\\
            1.0380    0.6174\\
            1.0702    0.6243\\
            1.0980    0.6300\\
            1.3868    0.6714\\
            1.5669    0.6861\\
            1.7019    0.6939\\
            1.8107    0.6988\\
            1.9032    0.7022\\
            1.9829    0.7047\\
            2.0533    0.7065\\
            2.1167    0.7080\\
            2.1737    0.7092\\
        };
        \addlegendentry{BA Alg.}
        \addplot [color=mycolor4, line width=2.5pt, mark size=3.0pt, mark=square, mark options={solid, rotate=180, mycolor4}]
          table[row sep=crcr]{%
         0    0.0000\\
            0.1111    0.0783\\
            0.2222    0.1552\\
            0.3333    0.2304\\
            0.4444    0.3035\\
            0.5556    0.3741\\
            0.6667    0.4413\\
            0.7778    0.5039\\
            0.8889    0.5598\\
            1.0000    0.6026\\
        };
        \addlegendentry{$I_1(q)$}
        \addplot [color=mycolor5, line width=2.5pt, mark size=3.0pt, mark=diamond, mark options={solid, rotate=180, mycolor5}]
          table[row sep=crcr]{%
         0         0\\
            0.3684    0.0742\\
            0.7368    0.2411\\
            1.1053    0.6041\\
            1.4737    0.5926\\
            1.8421    0.6247\\
            2.2105    0.6734\\
            2.5    0.6929\\
        };
        \addlegendentry{$I_2(\Delta)$}
         \addplot [color=mycolor2, line width=2.5pt, mark size=3.0pt, mark=o, mark options={solid, rotate=180, mycolor2}]
          table[row sep=crcr]{%
        0.0003    0.0000  \\
            0.4709    0.1603\\
            0.7844    0.2805\\
            0.8537    0.3097\\
            0.8758    0.3192\\
            0.8973    0.3286\\
            0.9182    0.3379\\
            0.9182    0.3379\\
            0.9387    0.3470\\
            0.9387    0.3470\\
            0.9588    0.3561\\
            0.9973    0.3736\\
            1.0342    0.3906\\
            1.0692    0.4069\\
            1.1028    0.4224\\
            1.3844    0.6132\\
            1.5706    0.6411\\
            1.7060    0.6521\\
            1.8141    0.6583\\
            1.9056    0.6625\\
            1.9766    0.6653\\
            2.0504    0.6678\\
            2.1098    0.6698\\
            2.1805    0.6716\\
        };
        \addlegendentry{$\max\{I_3(\alpha_{\text{lb}_1}) , I_4(\alpha_{\text{lb}_2}) \}$}
        \addplot [color=mycolor3, line width=2.5pt, mark size=3.0pt, mark=triangle, mark options={solid, rotate=180, mycolor3}]
          table[row sep=crcr]{%
        0.0791    0.0549\\
            0.1518    0.0916\\
            0.2558    0.1409\\
            0.3771    0.1920\\
            0.4614    0.2241\\
            0.5463    0.2536\\
            0.6095    0.2737\\
            0.6508    0.2861\\
            0.6921    0.2976\\
            0.7336    0.3083\\
            0.7544    0.3133\\
            0.7958    0.3228\\
            0.8162    0.3271\\
            0.8363    0.3313\\
            1.2685    0.3917\\
            1.9336    0.4449\\
        };
        \addlegendentry{Information Dropout}
        \end{axis}
        \end{tikzpicture}%
        \caption{$\beta=\sqrt{2}$.}
        \label{fig:all_beta_sqrt2}
        \end{subfigure}
 \caption{The three achievable schemes compared with the BA algorithm and the information dropout method in terms of the objective mutual information $I(Y; T)$ and the constraint $I(X;T)$ for Bernoulli source and univariate mixture Gaussian observation when $\beta\in \{0.6, 1, \sqrt{2}\}$.}
     \label{fig:all_beta_one_dim}
\end{figure}


Next, we provide a comprehensive analysis of the performance of the three proposed achievable schemes as the signal-to-noise ratio (SNR) parameter $\beta$ varies. We present the comparisons in Figure~\ref{fig:all_beta_one_dim}, where we evaluate the three schemes proposed in Proposition~\ref{prop:one-bit}--\ref{theo:main-results} against the BA algorithm and the information dropout method for different values of $\beta \in \{0.6, 1, \sqrt{2}\}$. 
 For a fair comparison, in the information dropout method we use single-layer neural networks for both $f_1$ and $f_2$, i.e., $f_1(X) = \sigma(w_1 X) + 1$ and $f_2(X) = \sigma(w_2 X)$, where $\sigma(t) = (1+ \exp(-t))^{-1}$ denotes the logistic sigmoid function. A bias term $b = 1$ is introduced into $f_1(x)$ to avoid problems when calculating the conditional probability $f_{T|X}(t|x)$. The parameters can be optimized either by gradient descent or by brute search over $w_1$ and $w_2$ spaces based on problem \eqref{eq:obj_func_2}.

The simulations provide compelling insights, revealing that the combination of the three proposed schemes closely approximates the performance of the BA algorithm and yields better results compared to the information dropout method \cite{achille2018information}.
The information dropout method shows comparable performance to the proposed approach in the small $R$ region, but deteriorates for larger $R$.
This also shows that within the information dropout framework, the single hidden layer NN model, despite being universal approximators with a sufficiently large number of neurons \cite{hornik1991approximation}, is less efficient in solving the IB problem. 
This is also (empirically) supported by the fact that a certain (large) value of mutual information $I(X; T)$ \emph{cannot} be achieved with the single-layer information dropout approach in Figure~\ref{fig:beta_one_dim}.


For the scheme using two-level quantization in Proposition~\ref{prop:one-bit}, recall that the observation denoted as $X = \beta Y + \epsilon$ in \eqref{eq:def_model_scalar}, a larger SNR $\beta$ leads to a larger separation between the means of the mixture Gaussian distribution. 
In this case, the two-level quantization (indicator function) already provides a good estimate of $Y$. As $\beta$ increases, the simulations show that $I_1(q)$ approaches the performance of the BA algorithm. Additionally, in the region with a smaller constraint on $I(X; T)$, it is observed that $I_1(q)$ outperforms other methods such as $I_2(\Delta)$, indicating that two-level quantization combined with a random variable following a Bernoulli distribution performs better than other methods, such as deterministic quantization with a quantization level $L = 2$, when $I(X; T) \leq 1$ bit. This observation is due to the fact that for a small value of $I(X; T)$, it is more effective to directly estimate the source $Y$ directly, and the two-level quantization function can provide a reliable estimate in such cases.

In contrast, for the scheme using deterministic quantization, $I_2(\Delta)$ converges to the BA algorithm as $I(X; T)$ increases. 
Furthermore, the gap between the BA algorithm and $I_2(\Delta)$ is relatively small compared to $\max\{I_3(\alpha_{\text{lb}_1}), I_4(\alpha_{\text{lb}_2})\}$ when $\beta \in \{0.6, 1\}$. However, when $\beta$ is large (e.g., $\beta = \sqrt{2}$), $I_2(\Delta)$ performs similarly to $\max\{I_3(\alpha_{\text{lb}_1}), I_4(\alpha_{\text{lb}_2})\}$.

It is also worth noting that the scheme using ``soft'' quantization is sensitive to the value of $\beta$ because it is derived through variational optimization, where a Bernoulli distribution is introduced as the variational distribution. As $\beta$ increases, the introduced distribution becomes closer to the variational distribution, reducing the gap between them. Therefore, when $\beta$ is small (e.g., $\beta = 0.6$), the penalty incurred by introducing the variational distribution is already significant, resulting in a lower rate $I(Y; T)$. Conversely, as $\beta$ increases, $\max\{I_3(\alpha_{\text{lb}_1}), I_4(\alpha_{\text{lb}_2})\}$ approaches the performance of the BA algorithm, even performing better than $I_2(\Delta)$ when $\beta=\sqrt{2}$ for large $R$.

\subsection{Simulation on multivariate mixture Gaussian IB problem}
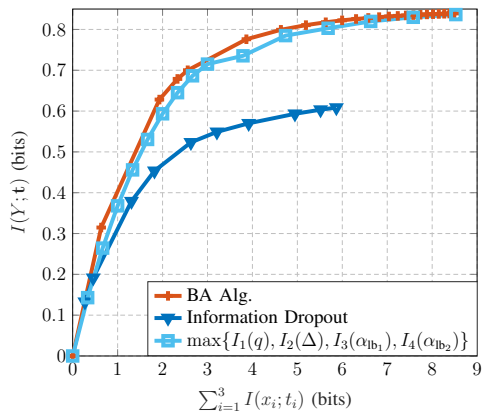
\begin{figure}[ht]
 \centering
\begin{tikzpicture}[scale=0.6]
\pgfplotsset{every major grid/.append style={densely dashed}}
\begin{axis}[%
width=3.528in,
height=3.029in,
at={(1.011in,0.767in)},
scale only axis,
xmin=0,
xmax=9,
xlabel style={font=\color{white!15!black}},
xlabel={$\sum_{i=1}^3 I(x_i; t_i)$ (bits)},
ymin=0,
ymax=0.85,
ylabel style={font=\color{white!15!black}},
ylabel={$I(Y; \mathbf{t})$ (bits)},
axis background/.style={fill=white},
xmajorgrids,
ymajorgrids,
legend style={at={(0.185,0.0)}, anchor=south west, legend cell align=left, align=left, draw=white!15!black}
]
\addplot [color=mycolor1, line width=2.5pt, mark size=3.0pt, mark=+, mark options={solid, mycolor1}]
  table[row sep=crcr]{%
0.0000    0.0000    \\
    0.6346    0.3147\\
    1.9297    0.6282\\
    2.3322    0.6788\\
    2.5606    0.7000\\
    3.8656    0.7760\\
    4.6335    0.7991\\
    5.1905    0.8106\\
    5.6321    0.8176\\
    5.9960    0.8223\\
    6.3063    0.8257\\
    6.5787    0.8282\\
    6.8200    0.8302\\
    7.0374    0.8318\\
    7.2348    0.8332\\
    7.4160    0.8343\\
    7.5830    0.8352\\
    7.7383    0.8360\\
    7.8834    0.8368\\
    8.0192    0.8374\\
    8.1473    0.8379\\
    8.2682    0.8384\\
    8.3828    0.8389\\
    8.4916    0.8393\\
};
\addlegendentry{BA Alg.}

\addplot [color=mycolor3, line width=2.5pt, mark size=3.0pt, mark=triangle, mark options={solid, rotate=180, mycolor3}]
  table[row sep=crcr]{%
0.2721    0.1327    \\
    0.4573    0.1914\\
    1.3083    0.3791\\
    1.8205    0.4535\\
    2.6312    0.5233\\
    3.2098    0.5490\\
    3.9163    0.5700\\
    4.9462    0.5933\\
    5.5165    0.6034\\
    5.8676    0.6085\\
};
\addlegendentry{Information Dropout}
\addplot [color=mycolor4, line width=2.5pt, mark size=3.0pt, mark=square, mark options={solid, rotate=180, mycolor4}]
  table[row sep=crcr]{%
 0    0.0000        \\
    0.3333    0.1429\\
    0.6667    0.2643\\
    1.0000    0.3678\\
    1.3333    0.4559\\
    1.6667    0.5305\\
    2.0000    0.5933\\
    2.3333    0.6451\\
    2.6667    0.6864\\
    3.0000    0.7151\\
    3.7895    0.7354\\
    4.7368    0.7849\\
    5.6842    0.8029\\
    6.6316    0.8197\\
    7.5789    0.8298\\
    8.5263    0.8364\\
};
\addlegendentry{$\max\{I_1(q), I_2(\Delta), I_3(\alpha_{\text{lb}_1}) , I_4(\alpha_{\text{lb}_2}) \}$}
\end{axis}
\end{tikzpicture}%
    \caption{The three methods compared with the respect to the objective mutual information $I(Y; \mathbf{t})$ and the constraint $\sum_{i=1}^3 I(x_i; t_i)$ for Bernoulli source and three-dimensional mixture multivariate Gaussian observation when $\boldsymbol{\beta}=[0.9, 1, 1.1]^{\rm T}$.}
    \label{fig:rand_p_3}
\end{figure}

Next, Figure~\ref{fig:rand_p_3} extends the above experiments to multivariate setting with $\mathbf{\beta} = [0.9, 1.0, 1.1]^{\rm T}$, by solving the IB problem in an entry-wise manner. Moreover, for the rate allocation in~\eqref{eq:rate allocation}, we set $R_1 = R_2 = R_3 = \frac{R}{3}$ in this section when we consider the case of $d_0=3$. From simulation, we consistently observe a close match between our proposed unified lower bound in Theorem \ref{theo:main-results-2} and the numerically optimal BA solution for all $R$ range. It indicates the good performance of our proposed methods in the multivariate mixture Gaussian IB problem.

\subsection{Application to Gaussian mixture classification with information leakage}

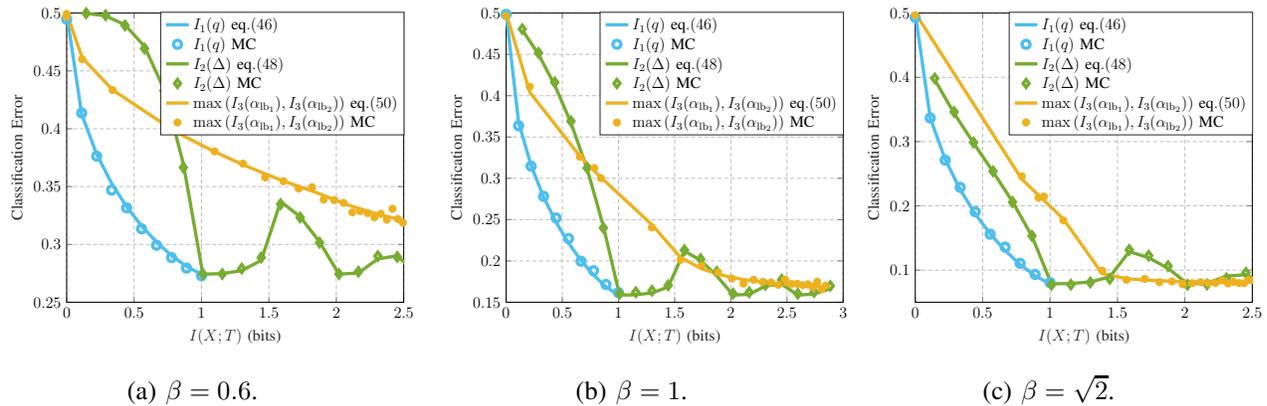
\begin{figure}[htb]
\centering
        \begin{subfigure}[b]{0.3\textwidth}
            \begin{tikzpicture}[scale=0.5]
            \pgfplotsset{every major grid/.append style={densely dashed}}
        \begin{axis}[%
        width=3.528in,
        height=3.029in,
        at={(1.011in,0.767in)},
        scale only axis,
        xmin=0,
        xmax=2.5,
        xlabel style={font=\color{white!15!black}},
        xlabel={$I(X; T)$ (bits)},
        ymin=0.25,
        ymax=0.5,
        ylabel style={font=\color{white!15!black}},
        ylabel={Classification Error},
        axis background/.style={fill=white},
        xmajorgrids,
        ymajorgrids,
        legend style={at={(0.278, 0.578)}, anchor=south west, legend cell align=left, align=left, draw=white!15!black}
        ]
        \addplot [color=mycolor4, line width=2.5pt]
  table[row sep=crcr]{%
0	0.499779544060303\\
0.111111111111111	0.412558218293543\\
0.222222222222222	0.378036195986585\\
0.333333333333333	0.352791407423213\\
0.444444444444444	0.332637773217223\\
0.555555555555555	0.315973112603746\\
0.666666666666667	0.302015711210671\\
0.777777777777778	0.290372020739398\\
0.888888888888889	0.280932243798087\\
1	0.274253131205636\\
};
\addlegendentry{$I_1(q) ~\text{eq}.(46)$}

\addplot [color=mycolor4, line width=2.5pt, only marks, mark size=3.0pt, mark=o, mark options={solid, mycolor4}]
  table[row sep=crcr]{%
0	0.49475\\
0.111111111111111	0.4136\\
0.222222222222222	0.37625\\
0.333333333333333	0.34695\\
0.444444444444444	0.3316\\
0.555555555555555	0.31345\\
0.666666666666667	0.2993\\
0.777777777777778	0.28855\\
0.888888888888889	0.2795\\
1	0.2729\\
};
\addlegendentry{$I_1(q) ~\text{MC}$}

\addplot [color=mycolor5, line width=2.5pt]
  table[row sep=crcr]{%
0.144269504088896	0.499783084466616\\
0.288539008177793	0.497600115329128\\
0.432808512266689	0.489852078734856\\
0.577078016355585	0.47118283892512\\
0.721347520444482	0.434307691728253\\
0.865617024533378	0.369758211923734\\
1.00988652862227	0.274253672678103\\
1.15415603271117	0.274725283029626\\
1.29842553680007	0.277383463583397\\
1.44269504088896	0.286133158622636\\
1.58696454497786	0.336951680458498\\
1.73123404906676	0.324072822302355\\
1.87550355315565	0.302171924350469\\
2.01977305724455	0.274256821961319\\
2.16404256133344	0.275286235698208\\
2.30831206542234	0.287513324934679\\
2.45258156951124	0.290144787410922\\
2.59685107360013	0.274254398877182\\
2.74112057768903	0.275775963047195\\
2.88539008177793	0.283562842808178\\
};
\addlegendentry{$I_2(\Delta) ~\text{eq}.(48)$}

\addplot [color=mycolor5, line width=2.5pt, only marks, mark size=3.0pt, mark=diamond, mark options={solid, mycolor5}]
  table[row sep=crcr]{%
0.144269504088896	0.49985\\
0.288539008177793	0.49815\\
0.432808512266689	0.4893\\
0.577078016355585	0.46935\\
0.721347520444482	0.4329\\
0.865617024533378	0.36655\\
1.00988652862227	0.2745\\
1.15415603271117	0.2744\\
1.29842553680007	0.27915\\
1.44269504088896	0.28865\\
1.58696454497786	0.3348\\
1.73123404906676	0.32325\\
1.87550355315565	0.3015\\
2.01977305724455	0.2744\\
2.16404256133344	0.2764\\
2.30831206542234	0.2899\\
2.45258156951124	0.28915\\
2.59685107360013	0.27435\\
2.74112057768903	0.27735\\
2.88539008177793	0.2828\\
};
\addlegendentry{$I_2(\Delta) ~\text{MC}$}

\addplot [color=mycolor2, line width=2.5pt]
  table[row sep=crcr]{%
0	0.500000000012091\\
0	0.500000000012091\\
0	0.500000000012091\\
0	0.500000000012091\\
0.114002998680255	0.461567249477031\\
0.339961014602799	0.43362186403067\\
0.80729498398942	0.397387872705982\\
1.09716500977113	0.380080302638198\\
1.30741509067472	0.368967148380725\\
1.47283089622864	0.360959312994021\\
1.60749734345604	0.354900442480609\\
1.72114059216659	0.350105932401724\\
1.81985598854554	0.346176598684244\\
1.90673336403522	0.342408045697381\\
1.9845050748607	0.338640003072072\\
2.05456070535312	0.335549485371382\\
2.11875057117985	0.332940442006109\\
2.17769143370411	0.330712739316117\\
2.23212022981408	0.328785117379143\\
2.28288069995413	0.327090119177679\\
2.33022273097674	0.325591815411626\\
2.37474875528803	0.324250274394368\\
2.41666209004643	0.323043521796699\\
2.45625276035645	0.321950618077179\\
2.49389424731899	0.32095150694215\\
};
\addlegendentry{$\max{(I_3(\alpha_{\rm lb_1}), I_3(\alpha_{\rm lb_2}))} ~\text{eq}.(50)$}

\addplot [color=mycolor2, line width=2.5pt, only marks, mark size=3.0pt, mark=asterisk, mark options={solid, mycolor2}]
  table[row sep=crcr]{%
0	0.5036\\
0	0.49675\\
0	0.5066\\
0	0.49925\\
0.114002998680255	0.4602\\
0.339961014602799	0.4334\\
0.80729498398942	0.3997\\
1.09716500977113	0.38045\\
1.30741509067472	0.36985\\
1.47283089622864	0.35785\\
1.60749734345604	0.35465\\
1.72114059216659	0.34825\\
1.81985598854554	0.34945\\
1.90673336403522	0.3387\\
1.9845050748607	0.3383\\
2.05456070535312	0.3359\\
2.11875057117985	0.3277\\
2.17769143370411	0.3287\\
2.23212022981408	0.32705\\
2.28288069995413	0.3234\\
2.33022273097674	0.3264\\
2.37474875528803	0.32135\\
2.41666209004643	0.3308\\
2.45625276035645	0.322\\
2.49389424731899	0.3187\\
};
\addlegendentry{$\max{(I_3(\alpha_{\rm lb_1}), I_3(\alpha_{\rm lb_2}))} ~\text{MC}$}
        \end{axis}
        \end{tikzpicture}%
        \caption{$\beta=0.6$.}
        \label{fig:scalar_beta_06}
        \end{subfigure}
        ~~~~
        \begin{subfigure}[b]{0.3\textwidth}
            \begin{tikzpicture}[scale=0.5]
            \pgfplotsset{every major grid/.append style={densely dashed}}
        \begin{axis}[%
        width=3.528in,
        height=3.029in,
        at={(1.011in,0.767in)},
        scale only axis,
        xmin=0,
        xmax=3,
        xlabel style={font=\color{white!15!black}},
        xlabel={$I(X; T)$ (bits)},
        ymin=0.15,
        ymax=0.5,
        ylabel style={font=\color{white!15!black}},
        ylabel={Classification Error},
        axis background/.style={fill=white},
        xmajorgrids,
        ymajorgrids,
        legend style={at={(0.278, 0.578)}, anchor=south west, legend cell align=left, align=left, draw=white!15!black}
        ]
        \addplot [color=mycolor4, line width=2.5pt]
  table[row sep=crcr]{%
0	0.499666655521417\\
0.111111111111111	0.367782037674714\\
0.222222222222222	0.315582375731676\\
0.333333333333333	0.277410526553364\\
0.444444444444444	0.246936851427318\\
0.555555555555555	0.221738743312903\\
0.666666666666667	0.200634196544362\\
0.777777777777778	0.183028139142351\\
0.888888888888889	0.168754549931013\\
1	0.158655274277189\\
};
\addlegendentry{$I_1(q) ~\text{eq}.(46)$}

\addplot [color=mycolor4, line width=2.5pt, only marks, mark size=3.0pt, mark=o, mark options={solid, mycolor4}]
  table[row sep=crcr]{%
0	0.4984\\
0.111111111111111	0.36365\\
0.222222222222222	0.3147\\
0.333333333333333	0.2781\\
0.444444444444444	0.252\\
0.555555555555555	0.2271\\
0.666666666666667	0.1996\\
0.777777777777778	0.18815\\
0.888888888888889	0.17155\\
1	0.1616\\
};
\addlegendentry{$I_1(q) ~\text{MC}$}

\addplot [color=mycolor5, line width=2.5pt]
  table[row sep=crcr]{%
0.144269504088896	0.479586453222068\\
0.288539008177793	0.450016131569332\\
0.432808512266689	0.412787139789554\\
0.577078016355585	0.367210180280151\\
0.721347520444482	0.311477914790254\\
0.865617024533378	0.242063590385247\\
1.00988652862227	0.158655712471965\\
1.15415603271117	0.159045448053978\\
1.29842553680007	0.161244231921461\\
1.44269504088896	0.168509335123827\\
1.58696454497786	0.212002698920247\\
1.73123404906676	0.2007080695099\\
1.87550355315565	0.181961169680267\\
2.01977305724455	0.158658314639107\\
2.16404256133344	0.159509156888061\\
2.30831206542234	0.1696595790653\\
2.45258156951124	0.171856168674092\\
2.59685107360013	0.158656312500931\\
2.74112057768903	0.159914119871718\\
2.88539008177793	0.16637044281925\\
};
\addlegendentry{$I_2(\Delta) ~\text{eq}.(48)$}

\addplot [color=mycolor5, line width=2.5pt, only marks, mark size=3.0pt, mark=diamond, mark options={solid, mycolor5}]
  table[row sep=crcr]{%
0.144269504088896	0.4806\\
0.288539008177793	0.45185\\
0.432808512266689	0.41625\\
0.577078016355585	0.3692\\
0.721347520444482	0.3125\\
0.865617024533378	0.24\\
1.00988652862227	0.1606\\
1.15415603271117	0.16215\\
1.29842553680007	0.164\\
1.44269504088896	0.17045\\
1.58696454497786	0.21305\\
1.73123404906676	0.20225\\
1.87550355315565	0.18685\\
2.01977305724455	0.16045\\
2.16404256133344	0.16205\\
2.30831206542234	0.17105\\
2.45258156951124	0.17735\\
2.59685107360013	0.1603\\
2.74112057768903	0.16235\\
2.88539008177793	0.1698\\
};
\addlegendentry{$I_2(\Delta) ~\text{MC}$}

\addplot [color=mycolor2, line width=2.5pt]
  table[row sep=crcr]{%
0	0.500000000001134\\
0	0.500000000001134\\
0.210559046480231	0.404161295315946\\
0.65612107749372	0.326960756999615\\
0.786094724699554	0.308857526539917\\
0.844271277156787	0.301042590284405\\
1.29713465701329	0.243972975274596\\
1.55318914424261	0.204127375609763\\
1.73941970659625	0.190576757961458\\
1.88710956469081	0.183922838531014\\
2.00872658841071	0.180037288144034\\
2.11243503740855	0.177488025187126\\
2.20372823352108	0.175669027309145\\
2.28445190388856	0.174316627142804\\
2.3569464839616	0.173266680123756\\
2.42316729763905	0.172420403467228\\
2.48358083711705	0.171728083100806\\
2.53957090841506	0.171145204736799\\
2.59147631321695	0.170649197212492\\
2.63982129265944	0.170221362315364\\
2.68544551085108	0.169844774052706\\
2.72820800954494	0.169513547647083\\
2.76852191008644	0.169218911609061\\
2.80685084172017	0.168953395440041\\
2.84337904470547	0.168712636890749\\
};
\addlegendentry{$\max{(I_3(\alpha_{\rm lb_1}), I_3(\alpha_{\rm lb_2}))} ~\text{eq}.(50)$}

\addplot [color=mycolor2, line width=2.5pt, only marks, mark size=3.0pt, mark=asterisk, mark options={solid, mycolor2}]
  table[row sep=crcr]{%
0	0.5036\\
0	0.49675\\
0.210559046480231	0.4109\\
0.65612107749372	0.32605\\
0.786094724699554	0.31235\\
0.844271277156787	0.3002\\
1.29713465701329	0.2405\\
1.55318914424261	0.20135\\
1.73941970659625	0.1943\\
1.88710956469081	0.1862\\
2.00872658841071	0.17875\\
2.11243503740855	0.1725\\
2.20372823352108	0.17725\\
2.28445190388856	0.1747\\
2.3569464839616	0.17395\\
2.42316729763905	0.1711\\
2.48358083711705	0.1775\\
2.53957090841506	0.1723\\
2.59147631321695	0.1723\\
2.63982129265944	0.1707\\
2.68544551085108	0.17165\\
2.72820800954494	0.1685\\
2.76852191008644	0.17485\\
2.80685084172017	0.16605\\
2.84337904470547	0.17005\\
};
\addlegendentry{$\max{(I_3(\alpha_{\rm lb_1}), I_3(\alpha_{\rm lb_2}))} ~\text{MC}$}
        \end{axis}
        \end{tikzpicture}%
        \caption{$\beta=1$.}
        \label{fig:scalar_beta_01}
        \end{subfigure}
        ~~
        \begin{subfigure}[b]{0.3\textwidth}
            \begin{tikzpicture}[scale=0.5]
            \pgfplotsset{every major grid/.append style={densely dashed}}
        \begin{axis}[%
        width=3.528in,
        height=3.029in,
        at={(1.011in,0.767in)},
        scale only axis,
        xmin=0,
        xmax=2.5,
        xlabel style={font=\color{white!15!black}},
        xlabel={$I(X; T)$ (bits)},
        ymin=0.05,
        ymax=0.5,
        ylabel style={font=\color{white!15!black}},
        ylabel={Classification Error},
        axis background/.style={fill=white},
        xmajorgrids,
        ymajorgrids,
        legend style={at={(0.278, 0.578)}, anchor=south west, legend cell align=left, align=left, draw=white!15!black}
        ]
        \addplot [color=mycolor4, line width=2.5pt]
  table[row sep=crcr]{%
0	0.499588525003443\\
0.111111111111111	0.336792300193569\\
0.222222222222222	0.272357887451397\\
0.333333333333333	0.225239178372949\\
0.444444444444444	0.187622969996251\\
0.555555555555555	0.156518835637337\\
0.666666666666667	0.130467726748843\\
0.777777777777778	0.108735096168946\\
0.888888888888889	0.0911160100614699\\
1	0.0786496286395833\\
};
\addlegendentry{$I_1(q) ~\text{eq}.(46)$}

\addplot [color=mycolor4, line width=2.5pt, only marks, mark size=3.0pt, mark=o, mark options={solid, mycolor4}]
  table[row sep=crcr]{%
0	0.4932\\
0.111111111111111	0.33675\\
0.222222222222222	0.2711\\
0.333333333333333	0.22875\\
0.444444444444444	0.1908\\
0.555555555555555	0.1559\\
0.666666666666667	0.1358\\
0.777777777777778	0.1105\\
0.888888888888889	0.09325\\
1	0.08045\\
};
\addlegendentry{$I_1(q) ~\text{MC}$}

\addplot [color=mycolor5, line width=2.5pt]
  table[row sep=crcr]{%
0.144269504088896	0.396372885715774\\
0.288539008177793	0.343841455722058\\
0.432808512266689	0.297802254530086\\
0.577078016355585	0.253134353937415\\
0.721347520444482	0.206461459361139\\
0.865617024533378	0.15265868899\\
1.00988652862227	0.0786501380808615\\
1.15415603271117	0.0791032575998983\\
1.29842553680007	0.0816153316081668\\
1.44269504088896	0.0894611582446619\\
1.58696454497786	0.128752102482781\\
1.73123404906676	0.119322013920949\\
1.87550355315565	0.102668130727416\\
2.01977305724455	0.0786531718153224\\
2.16404256133344	0.0796392041698296\\
2.30831206542234	0.0906493983878761\\
2.45258156951124	0.0928836926125854\\
2.59685107360013	0.0786508376457654\\
2.74112057768903	0.0801044927923543\\
2.88539008177793	0.0872154688281702\\
};
\addlegendentry{$I_2(\Delta) ~\text{eq}.(48)$}

\addplot [color=mycolor5, line width=2.5pt, only marks, mark size=3.0pt, mark=diamond, mark options={solid, mycolor5}]
  table[row sep=crcr]{%
0.144269504088896	0.3984\\
0.288539008177793	0.3462\\
0.432808512266689	0.29855\\
0.577078016355585	0.2538\\
0.721347520444482	0.2059\\
0.865617024533378	0.15325\\
1.00988652862227	0.0772\\
1.15415603271117	0.0773\\
1.29842553680007	0.0808\\
1.44269504088896	0.08645\\
1.58696454497786	0.1315\\
1.73123404906676	0.12255\\
1.87550355315565	0.1063\\
2.01977305724455	0.0772\\
2.16404256133344	0.07785\\
2.30831206542234	0.0872\\
2.45258156951124	0.0959\\
2.59685107360013	0.07735\\
2.74112057768903	0.0785\\
2.88539008177793	0.0884\\
};
\addlegendentry{$I_2(\Delta) ~\text{MC}$}

\addplot [color=mycolor2, line width=2.5pt]
  table[row sep=crcr]{%
0	0.500000000000091\\
0	0.500000000000091\\
0.790010679488529	0.24154309755638\\
0.91437860552295	0.216633807186716\\
0.954494425507215	0.208588471696238\\
1.09792987216179	0.179806237302644\\
1.38690105833974	0.0944042102082987\\
1.56697942742025	0.0867300175239209\\
1.70212795993419	0.0844449491054197\\
1.81108057797773	0.083383557467571\\
1.90341020310563	0.0827610712040976\\
1.98335588044765	0.082348088893307\\
2.05344969929154	0.0820524947063605\\
2.1169489766851	0.0818248069169309\\
2.17398747365761	0.0816459543277416\\
2.22657712717342	0.0814987240879987\\
2.275078881155	0.0813756484427494\\
2.32008550443042	0.0812708736898983\\
2.36217190243042	0.0811801266537437\\
2.40151963951326	0.0811009235386332\\
2.43895397300169	0.08103013890706\\
2.47419996013059	0.0809672016011671\\
2.50755916828383	0.0809106776217417\\
2.53932486005861	0.0808593992598296\\
2.56991725055214	0.0808121988977206\\
};
\addlegendentry{$\max{(I_3(\alpha_{\rm lb_1}), I_3(\alpha_{\rm lb_2}))} ~\text{eq}.(50)$}

\addplot [color=mycolor2, line width=2.5pt, only marks, mark size=3.0pt, mark=asterisk, mark options={solid, mycolor2}]
  table[row sep=crcr]{%
0	0.5036\\
0	0.49675\\
0.790010679488529	0.24615\\
0.91437860552295	0.2125\\
0.954494425507215	0.21455\\
1.09792987216179	0.17735\\
1.38690105833974	0.09925\\
1.56697942742025	0.08485\\
1.70212795993419	0.0865\\
1.81108057797773	0.08125\\
1.90341020310563	0.0824\\
1.98335588044765	0.07785\\
2.05344969929154	0.07965\\
2.1169489766851	0.08135\\
2.17398747365761	0.07975\\
2.22657712717342	0.07985\\
2.275078881155	0.08395\\
2.32008550443042	0.07915\\
2.36217190243042	0.08185\\
2.40151963951326	0.07985\\
2.43895397300169	0.0796\\
2.47419996013059	0.08495\\
2.50755916828383	0.08265\\
2.53932486005861	0.0809\\
2.56991725055214	0.0819\\
};
\addlegendentry{$\max{(I_3(\alpha_{\rm lb_1}), I_3(\alpha_{\rm lb_2}))} ~\text{MC}$}
        \end{axis}
        \end{tikzpicture}%
        \caption{$\beta=\sqrt{2}$.}
        \label{fig:scalar_beta_sqrt2}
        \end{subfigure}
 \caption{The three achievable schemes compared in terms of the classification error ${\rm{Pr}}(\widehat{Y} \neq Y)$ and the information leakage $I(X;T)$ for Bernoulli source and univariate mixture Gaussian observation when $\beta\in \{0.6, 1, \sqrt{2}\}$. Monte Carlo (MC) simulations are obtained by averaging over  independent runs. }
     \label{fig:scalar_mixture_gaussian_sim}
\end{figure}

\subsubsection{Simulation on the univariate observations}
In this section, we perform simulations for the binary classification problem with scalar observations to validate Propositions~\ref{prop:one-bit_2}~--~\ref{prop:soft_quantization}. 
As shown in Fig.~\ref{fig:scalar_mixture_gaussian_sim}, the solid line represents the closed-form misclassification error rates given in Propositions~\ref{prop:one-bit_2}~--~\ref{prop:soft_quantization}, while the marker points denote results obtained by Monte Carlo (MC) simulations over independent runs. 
It can be seen that the marker points perfectly match the corresponding solid line, confirming the results in Propositions~\ref{prop:one-bit_2}--\ref{prop:soft_quantization} on the precise tradeoff between misclassification error rates and information leakage.

\subsubsection{Simulation on the multivariate observations}
\begin{figure}[htb]
\centering
        \begin{subfigure}[b]{0.3\textwidth}
            \begin{tikzpicture}[scale=0.5]
            \pgfplotsset{every major grid/.append style={densely dashed}}
        \begin{axis}[%
        width=3.528in,
        height=3.029in,
        at={(1.011in,0.767in)},
        scale only axis,
        xmin=0,
        xmax=5,
        xlabel style={font=\color{white!15!black}},
        xlabel={$I(\x; \boldsymbol{t})$ (bits)},
        ymin=0,
        ymax=0.5,
        ylabel style={font=\color{white!15!black}},
        ylabel={Classification Error},
        axis background/.style={fill=white},
        xmajorgrids,
        ymajorgrids,
        legend style={at={(0.435, 0.715)}, anchor=south west, legend cell align=left, align=left, draw=white!15!black}
        ]
        \addplot [color=mycolor4, line width=2.5pt, mark size=3.0pt, mark=square, mark options={solid, rotate=180, mycolor4}]
          table[row sep=crcr]{%
        0.0056    0.4972\\
    0.2993    0.3286\\
    0.5491    0.2548\\
    0.7591    0.2156\\
    0.9211    0.1827\\
    1.1512    0.1566\\
    1.3757    0.1343\\
    1.4918    0.1180\\
    1.6686    0.1033\\
    1.8184    0.0937\\
        };
        \addlegendentry{$I_1(q)$}
        \addplot [color=mycolor5, line width=2.5pt, mark size=3.0pt, mark=diamond, mark options={solid, rotate=180, mycolor5}]
          table[row sep=crcr]{%
         0.0001    0.5000\\
    0.4784    0.3631\\
    1.0093    0.1998\\
    1.6961    0.1008\\
    2.1874    0.0851\\
    2.7694    0.0746\\
    3.2647    0.0667\\
    3.8428    0.0627\\
    4.3977    0.0616\\
    4.9622    0.0569\\
        };
        \addlegendentry{$I_2(\Delta)$}
        
        \addplot [color=mycolor2, line width=2.5pt, mark size=3.0pt, mark=o, mark options={solid, rotate=180, mycolor2}]
          table[row sep=crcr]{%
        0.0135    0.4913\\
    0.5896    0.2147 \\
       1.1201    0.1584\\
     1.5728    0.1198\\
    2.0641    0.1001\\
    2.7000    0.0861\\
    3.3854    0.0794\\
    3.5052    0.0707\\
    3.5389    0.0693\\
    4.0812    0.0665\\
        };
        \addlegendentry{$\max\{I_3(\alpha_{\text{lb}_1}) , I_4(\alpha_{\text{lb}_2}) \}$}
        \addplot [color=mycolor3, line width=2.5pt, mark size=3.0pt, mark=triangle, mark options={solid, rotate=180, mycolor3}]
          table[row sep=crcr]{%
        0.6477    0.3448\\
    0.6525    0.3400\\
    0.9825    0.3259\\
    1.8509    0.2921\\
    2.5625    0.2758\\
    3.1745    0.2743\\
    4.0150    0.2717\\
        };
        \addlegendentry{Information Dropout}
        \end{axis}
        \end{tikzpicture}%
        \caption{Class $7$ versus class $9$.}
        \label{fig: MNIST_79}
        \end{subfigure}
        ~~~~
        \begin{subfigure}[b]{0.3\textwidth}
            \begin{tikzpicture}[scale=0.5]
    \pgfplotsset{every major grid/.append style={densely dashed}}
\begin{axis}[%
        width=3.528in,
        height=3.029in,
        at={(1.011in,0.767in)},
        scale only axis,
        xmin=0,
        xmax=4,
        xlabel style={font=\color{white!15!black}},
        xlabel={$I(\x; \boldsymbol{t})$ (bits)},
        ymin=0,
        ymax=0.5,
        ylabel style={font=\color{white!15!black}},
        ylabel={Classification Error},
        axis background/.style={fill=white},
        xmajorgrids,
        ymajorgrids,
        legend style={at={(0.435, 0.715)}, anchor=south west, legend cell align=left, align=left, draw=white!15!black}
]
\addplot [color=mycolor4, line width=2.5pt, mark size=3.0pt, mark=square, mark options={solid, rotate=180, mycolor4}]
  table[row sep=crcr]{%
0.0008    0.4863\\
    0.2458    0.3037\\
    0.5228    0.2327\\
    0.6690    0.1871\\
    0.8625    0.1593\\
    1.0779    0.1265\\
    1.2794    0.1099\\
    1.3667    0.0847\\
    1.5620    0.0639\\
    1.6994    0.0509\\
};
\addlegendentry{$I_1(q)$}
\addplot [color=mycolor5, line width=2.5pt, mark size=3.0pt, mark=diamond, mark options={solid, rotate=180, mycolor5}]
  table[row sep=crcr]{%
 0.0001    0.5000\\
    0.6773    0.2426\\
    1.1851    0.1263\\
    1.6733    0.0579\\
    2.1504    0.0374\\
    2.7057    0.0315\\
    3.1895    0.0282\\
    4.0   0.0256\\
};
\addlegendentry{$I_2(\Delta)$}
\addplot [color=mycolor2, line width=2.5pt, mark size=3.0pt, mark=o, mark options={solid, rotate=180, mycolor2}]
  table[row sep=crcr]{%
0.0001    0.5000\\
    0.5357    0.1891\\
    1.1979    0.0540\\
    1.4644    0.0524\\
    1.9280    0.0468\\
    2.59522   0.03818\\
    3.1390    0.0365\\
    3.3371    0.0341\\
    4.0000    0.0320\\
};
\addlegendentry{$\max\{I_3(\alpha_{\text{lb}_1}) , I_4(\alpha_{\text{lb}_2}) \}$}
\addplot [color=mycolor3, line width=2.5pt, mark size=3.0pt, mark=triangle, mark options={solid, rotate=180, mycolor3}]
  table[row sep=crcr]{%
0.6406    0.3095\\
    0.6276    0.3188\\
    0.9146    0.2928\\
    1.3502    0.2795\\
    1.8213    0.2691\\
    2.3661    0.2436\\
    3.0547    0.2392\\
    3.2003    0.2342\\
    3.7574    0.2298\\
    3.8086    0.2358\\
    4.0000    0.2346\\
};
\addlegendentry{Information Dropout}
\end{axis}
\end{tikzpicture}%
        \caption{Class $4$ versus class $1$.}
        \label{fig:41_MNIST}
        \end{subfigure}
        ~~
        \begin{subfigure}[b]{0.3\textwidth}
        \begin{tikzpicture}[scale=0.5]
        \pgfplotsset{every major grid/.append style={densely dashed}}
        \begin{axis}[%
         width=3.528in,
        height=3.029in,
        at={(1.011in,0.767in)},
        scale only axis,
        xmin=0,
        xmax=3,
        xlabel style={font=\color{white!15!black}},
        xlabel={$I(\x; \boldsymbol{t})$ (bits)},
        ymin=0,
        ymax=0.5,
        ylabel style={font=\color{white!15!black}},
        ylabel={Classification Error},
        axis background/.style={fill=white},
        xmajorgrids,
        ymajorgrids,
        legend style={at={(0.435, 0.715)}, anchor=south west, legend cell align=left, align=left, draw=white!15!black}
        ]
        \addplot [color=mycolor4, line width=2.5pt, mark size=3.0pt, mark=square, mark options={solid, rotate=180, mycolor4}]
          table[row sep=crcr]{%
         0.0002    0.4990\\
    0.2815    0.3501\\
    0.5033    0.2844\\
    0.8156    0.2401\\
    1.0349    0.2059\\
    1.2639    0.1750\\
    1.4043    0.1487\\
    1.6412    0.1290\\
    1.8299    0.1122\\
    1.9948    0.1010\\
        };
        \addlegendentry{$I_1(q)$}
        \addplot [color=mycolor5, line width=2.5pt, mark size=3.0pt, mark=diamond, mark options={solid, rotate=180, mycolor5}]
          table[row sep=crcr]{%
         0.0001    0.5000\\
    0.3773    0.3573\\
    0.5596    0.2499\\
    1.0838    0.1272\\
    2.1508    0.1042\\
    2.9313    0.1041\\
    3.0    0.1007\\
        };
        \addlegendentry{$I_2(\Delta)$}
         \addplot [color=mycolor2, line width=2.5pt, mark size=3.0pt, mark=o, mark options={solid, rotate=180, mycolor2}]
          table[row sep=crcr]{%
       0.0001    0.5000\\
       0.0037    0.4965\\
       0.9214    0.1543\\
    0.9851    0.1304\\
    1.2964    0.1122\\
    1.3722    0.1062\\
    1.6481    0.1050\\
    2.1786    0.1015\\
    2.2183    0.1010\\
    2.6957    0.1007\\
        };
        \addlegendentry{$\max\{I_3(\alpha_{\text{lb}_1}) , I_4(\alpha_{\text{lb}_2}) \}$}
        \addplot [color=mycolor3, line width=2.5pt, mark size=3.0pt, mark=triangle, mark options={solid, rotate=180, mycolor3}]
          table[row sep=crcr]{%
        0.4950    0.3725\\
    0.7132    0.3718\\
    1.0773    0.3279\\
    1.5009    0.3132\\
    2.0516    0.3099\\
    2.6279    0.3077\\
    2.7072    0.3016\\
    2.7391    0.3002\\
        };
        \addlegendentry{Information Dropout}
        \end{axis}
        \end{tikzpicture}%
        \caption{Even class versus odd class.}
        \label{fig:even_odd_MNIST}
        \end{subfigure}
 \caption{The three achievable schemes compared to the information dropout method with the respect to the classification error $\rm{Pr}(\widehat{Y} \neq Y)$ and the constraint $I(\x; \boldsymbol{t})$ for dimension-reduced MNIST data.}
     \label{fig:beta_MNIST_all}
\end{figure}
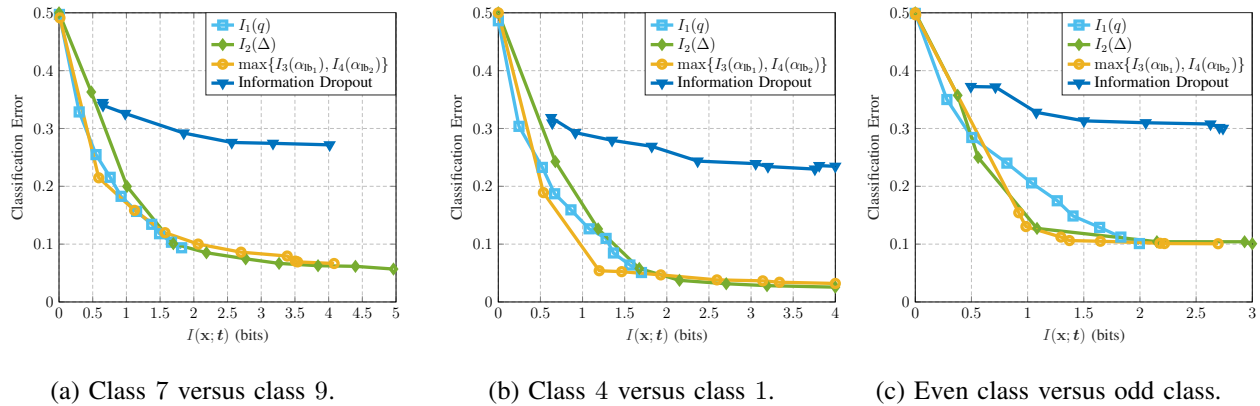

We further provide simulations on multivariate IB problem given in~\eqref{eq:obj_func_entry}, with symmetric Bernoulli $Y$ and $\mathbf{x}|Y \sim \mathcal N(\bbeta Y, 1)$, where we construct the Markov chain $Y \rightarrow {\mathbf{x}} \rightarrow \mathbf{t}$. 
The logistic regression output of the intermediate representation $\mathbf{t}$ is used as the estimator for $Y$, defined as 
\begin{equation}
\widehat{Y}=
    \begin{cases}
        1 & \text{if } \sigma(\mathbf{w}^{\rm T} \mathbf{t} + {b}) \geq 0.5,\\
        -1 & \text{if } \sigma(\mathbf{w}^{\rm T} \mathbf{t} + {b}) < 0.5,
    \end{cases}
\end{equation}
where $\sigma(\cdot)$ is the sigmoid function, $\mathbf{w}$ is the weight vector, and ${b}$ is the bias term.
These logistic regression parameters $\mathbf{w}$ and $\mathbf{b}$ are determined by training on a training data. 
And the misclassification error rate can then be obtained from {an independent test set}.

Precisely, here we apply the analytic IB schemes derived from Propositions~\ref{prop:one-bit} through \ref{theo:main-results} (in fact, their multivariate versions as described in Section~\ref{sec: application}) to real-world data from the MNIST database~\cite{lecun1998gradient}. 
Due to the computational complexity in estimating mutual information of high-dimensional random vectors, we first reduce the dimensionality of the vectorized MNIST images by randomly projecting them through a Gaussian matrix (which is known to preserve the Euclidean distances between high-dimensional data vectors, see for example the popular Johnson--Lindenstrauss lemma~\cite{johnson1984extensions} and an overview of randomized sketching methods in~\cite{drineas2016randnla}).
This results in features of dimension $d_0 = 3$. 
{The implementations of other iterative algorithms require the estimation of the conditional probability $p_{\mathbf{x}|Y}(\mathbf{x}|y)$ from data samples and are therefore not included here for the sake of fair comparison.}
In this experiment, we only compare the three schemes proposed in propositions~\ref{prop:one-bit} to \ref{theo:main-results} with the information dropout approach.

Fig.~\ref{fig:beta_MNIST_all} illustrates the ``accuracy-complexity'' tradeoff between the proposed analytical IB scheme and the information dropout approach. 
This comparison is based on the classification error $\rm{Pr}(\widehat{Y} \neq Y)$ plotted against the information leakage budget $I(\mathbf{x}; \mathbf{t})$ on the reduced MNIST features\footnote{See Appendix \ref{sec:MNIST_processing} for details on MNIST data preprocessing.}. 
We use the jackknife approach \cite{Jackknife2011zeng} to numerically estimate the mutual information $I(\mathbf{x}; \mathbf{t})$ from the available MNIST image samples. For better visualization, linear interpolation is used to estimate the maximum of the two lower bounds proposed in Proposition~\ref{theo:main-results}. Notably, the proposed analytical IB scheme consistently shows advantageous performance on real (and non-Gaussian) data, 
suggesting a potentially broader applicability of the proposed approach.

\section{Conclusion}
\label{sec: conclusion}
In conclusion, this paper has contributed by deriving achievable solutions for the information bottleneck (IB) problem with Bernoulli sources and Gaussian mixture data, using both soft and deterministic quantization schemes. Using the Blahut-Arimoto algorithm, an approximately optimal solution is obtained, and the results have been extended to the vector-mixed Gaussian observation problem. Through extensive experiments conducted on the proposed achievable schemes under various signal-to-noise ratios (SNRs), our theoretical framework has been robustly validated. Looking ahead, an intriguing avenue for future research is to determine the distribution of the input $Y$ for the observation model $X = \beta Y + \epsilon$ as defined in \eqref{eq:def_model_scalar}, maximizing the IB while ensuring that $I(X; T) \leq R$ and subject to a unit variance constraint. In particular, it has been conjectured in previous work \cite{zaidientropy2020} that the optimal $Y$ is discrete. The insights gained from the present study, particularly with respect to the IB for Gaussian mixture observations, may serve as a valuable tool in delineating the precise low SNR range where the symmetric binary input under consideration proves to be optimal.


\appendices
\section{Proof of Proposition \ref{prop:one-bit}}
\label{sec:proof_one_bit}
According to the findings in \cite{zaidientropy2020}, the optimal design of the representation of $\overline{X} $ for DSBS $Y$ and $\overline{X} $ is \emph{explicitly} given by:
\begin{align}
T = \overline{X} \oplus \overline{N}, \text{where $\overline{N} \sim {\rm Bern}(q)$ for some $q \in [0,1]$},
\end{align}
which aligns with the form in \eqref{eq:t} and $\oplus$ denotes the exclusive `or' operation. Hence, the parameter $q$ can be obtained by setting $I(\overline{X}; T)$ as $R$ in \eqref{eq:q}. 
By the above construction, we denote the achieved 
$
I(Y;T) $ by $I_1(q)$, which is equal to 
$
 \ln 2 - H(p(1-q) + q(1-p)). 
$
Note that
$p$ represents the miss detection probability, i.e., $\mathbb{P}(\overline{X}  = 1|Y = -1)$, so $p = \int_{0}^{\infty} \frac{1}{\sqrt{2 \pi}} \exp(-\frac{(x + \beta)^2}{2}) dx$.
This concludes the proof of Proposition~\ref{prop:one-bit}.

\section{Proof of Proposition \ref{theo:main-results}}
\label{sec:proof_main}
Since it is challenging to directly solve \eqref{eq:V}, we derive a lower bound to \eqref{eq:V} by introducing an upper bound to $I(X; T)$ to obtain $\alpha$. Therefore, we first compute $I(X; T)$.
Since $\widehat{X}$ is a one-to-one mapping of $X$, it is evident that $I(X; T)$ and $I(\widehat{X}; T)$ are equal.
Since $T|\widehat{X}$ is a Gaussian distribution with unit variance, the conditional differential entropy is $h(T|\widehat{X}) = \frac{1}{2} \ln( 2\pi e)$.
In addition, we can compute the pdf of $\widehat{X}$ as 
\begin{align}\label{eq:p_yhat}
&p_{\widehat{X}} (\widehat{x}) \!=\! p_X(x) \frac{\partial x}{\partial \widehat{x}}, \nonumber \\
    \!&=\!  \frac{1}{ \beta \sqrt{2 \pi}} \exp{(\!-\frac{(1/\beta \tanh^{-1}(\widehat{x}))^2 \!+\!\beta^2}{2})} \frac{1}{(1 \!-\! \widehat{x}^2)^{1.5}}.
\end{align}

According to the information inequality~\cite{hanineq}, for any probability distribution $q_T(t)$, an upper bound to $h(T)$ is given by
\begin{align}
    h(T) = -\int p_T(t) \ln(p_T(t)) dt \leq -\int p_T(t) \ln(q_T(t)) dt.
    \label{eq:ht}
\end{align}
Then by \eqref{eq:ht}, an upper bound of $I(\widehat{X}; T)$ based on the variational distribution $q_T(t)$ is derived as \eqref{eq:upper bound of I hat based on q}.
Moreover, the distribution of $T$ is much complicated due to the distribution of $\widehat{X}$ in \eqref{eq:p_yhat}. Therefore, instead of introducing variational distribution of $T$, we come up with the variational distribution of $\widehat{X}$. Since $\widehat{X}$ is the MMSE estimation of $Y$ given observation $X$, for simplicity,  we design the variational distribution of $\widehat{X}$ as Bernoulli distribution, i.e., $q_{\widehat{X}}(\widehat{X}=-1) = q_{\widehat{X}}(\widehat{X}=1) = \frac{1}{2}$. Intuitively speaking, the less the noise power of $X$ is, the closer the variational distribution $q_{\widehat{X}}$ gets to the true distribution $p_{\widehat{X}}$.
Therefore, the variational distribution of $T$ is given by \eqref{eq:qt}.
Hence, by taking~\eqref{eq:qt} into~\eqref{eq:upper bound of I hat based on q}, an upper bound to $I(\widehat{X}; T)$ is given by 
\begin{subequations}
    \begin{align} 
    &I(\widehat{X}; T) \leq -\int_{-\infty}^{\infty} \left(\int_{-1}^{1} p_{T|\widehat{X}} (t |\widehat{x}) p_{\widehat{X}} (\widehat{x}) d \widehat{x}\right) \ln q_T(t) dt  -\frac{1}{2} \ln( 2\pi e)  \\
    &= \frac{{\alpha}^2 -1}{2} + {\int_{-1}^{1} p_{\widehat{X}} (\widehat{x}) \left(\int_{-\infty}^{\infty} p_{T|\widehat{X}}(t|\widehat{x}) \frac{t^2}{2} dt \right) d\widehat{x}} \nonumber \\
    &\quad - \int_{-\infty}^{\infty} \left(\int_{-1}^{1} p_{T|\widehat{X}} (t |\widehat{x}) p_{\widehat{X}} (\widehat{x}) d\widehat{x}\right) \ln(\cosh(\alpha t)) dt.
\label{eq:three_parts}
\end{align}
\end{subequations}
Since $T|\widehat{X}$ follows a Gaussian distribution $\mathcal{N}(\alpha \widehat{X}, 1)$, then ${\int_{-1}^{1} p_{\widehat{X}} (\widehat{x}) \left(\int_{-\infty}^{\infty} p_{T|\widehat{X}}(t|\widehat{x}) \frac{t^2}{2} dt \right) d\widehat{x}}$ in \eqref{eq:three_parts} is given by 
    \begin{align}
\int_{-1}^{1} p_{\widehat{X}} (\widehat{x}) &\left(\int_{-\infty}^{\infty} p_{T|\widehat{X}}(t|\widehat{x}) \frac{t^2}{2} dt \right) d\widehat{x}  = \int_{-1}^1 p_{\widehat{X}} (\widehat{x}) \frac{1 + {\alpha}^2 \widehat{x}^2}{2} d\widehat{x}. \label{eq:2_term}
\end{align}
By taking \eqref{eq:2_term} into \eqref{eq:three_parts}, and based on the notations of $f(\beta)$ and $g(\beta)$, an upper bound to $I(\widehat{X}; T)$ based on variational distribution is given by \eqref{eq:UB}.

Therefore, in the following we will propose {\bf two lower bounds} of $\ln(\cosh{\alpha t})$ 
  to derive a loosen upper bound to $I(\widehat{X}; T)$ with respect to~\eqref{eq:UB}.   

{\bf First lower bound of $\ln(\cosh{\alpha t})$:}
According to the inequality $\ln(\cosh(x)) \geq \sqrt{1 + x^2} -1, ~\forall~ x \geq 0$, and some important inequalities, i.e., the convexity of the function, and Jensen's inequality,  $(d)$ in \eqref{eq:UB} is upper bounded by \eqref{eq:first upper bound of Ixt}. $\alpha$ can be obtained by forcing the RHS of \eqref{eq:first upper bound of Ixt} to equal $R$, i.e., 
\begin{align}
     \frac{{\alpha}^2}{2} (1 + f(\beta)) -\sqrt{1 + {\alpha}^4  (g(\beta))^2} + 1=R.
    \label{eq:first upper bound equation}
\end{align}

The next step is to solve the equation~\eqref{eq:first upper bound equation}.
Assuming that $x = \alpha^2$
, $a = \frac{(1 +f(\beta))^2 - 4 (g(\beta))^2}{4}$, $b = (1 - R)(1 +f(\beta))$, $c = R^2 - 2R$. 
and $\Delta = b^2 - 4 ac$. First in order to check whether there exists a real solution, we need to check whether $\Delta$ is always non-negative when $R \geq 0$. Then we have 
\begin{subequations}
  \begin{align}
     \Delta &= (1 +f(\beta))^2 + 4 (g(\beta))^2 (R^2 - 2R)  \\
    &\geq (1 +f(\beta))^2 + 4 (g(\beta))^2 (-1)  \\
    &= (1 +f(\beta) - 2 (g(\beta)))(1 +f(\beta) + 2(g(\beta))). \label{eq:delta}
\end{align}  
\end{subequations}
Note that the term $1 +f(\beta) + 2(g(\beta))$ in \eqref{eq:delta} is always non-negative, and based on $\int_{-1}^0 2p_{\widehat{X}} (\widehat{x}) d \widehat{x} = 1$, the term $1 +f(\beta) - 2 (g(\beta))$ can be further developed as 
\begin{subequations}
    \begin{align}
   1 +f(\beta) \!-\! 2 (g(\beta)) \!&=\! 1 \!+\! 2\int_{-1}^0 p_{\widehat{X}} (\widehat{x}) \widehat{x}^2 d \widehat{x} \!+\! 4\int_{-1}^0 p_{\widehat{X}} (\widehat{x}) \widehat{x} d \widehat{x}, \\
   &=1 + 2\int_{-1}^0 p_{\widehat{X}} (\widehat{x})  \left[(\widehat{x}+1)^2 -1\right] d \widehat{x}, \\
   &> 1 + 2\int_{-1}^0 p_{\widehat{X}} (\widehat{x}) (-1) d \widehat{x},  \\
   &= 0. \label{eq:inequality}
\end{align}
\end{subequations}
Hence, the   term  $1 +f(\beta) - 2 (g(\beta))$ is always positive, and thus   $\Delta \geq 0$ holds when $R \geq 0$. Therefore, there always exists some real solution of \eqref{eq:first upper bound equation}.

Secondly, we need to check whether there exists a positive solution in problem \eqref{eq:first upper bound equation}. From \eqref{eq:inequality}, it can be seen that $a$ is always positive. 
 When $0 \leq R \leq 1$, $b$ is also positive. In this way, we need to compare $-b$ and $\sqrt{\Delta}$, so we have 
\begin{align}\label{eq:b_delta}
     b^2 - \Delta &= (R^2 - 2R)\underbrace{\left[\right(1 +f(\beta)))^2 - 4 (g(\beta))^2]}_{(e)}.
\end{align}
Therefore, since $(e)$ in \eqref{eq:b_delta}  is always positive, when $0\leq R \leq 2$, $R^2 - 2R$ is non-positive, it results in $|b| \leq \sqrt{\Delta}$ while $R \geq 2$,  it comes to $|b| \geq \sqrt{\Delta}$. 

As a result, when $R \leq 1$,  we have $|b| \leq \sqrt{\Delta}$ and  $-b +\sqrt{\Delta} \geq 0$; thus there   exists one positive and real solution of \eqref{eq:first upper bound equation}, which is 
\begin{align}
    \alpha_{\text{lb}_1} = \sqrt{\frac{-b + \sqrt{\Delta}}{2 a}}.
\end{align}
 In addition, when
  $R> 1$, we have  $-b$ is positive and $\sqrt{\Delta}$ is also positive; thus there always exists some positive solution of \eqref{eq:first upper bound equation}. However, it may exist two positive solutions. Since the larger correlation factor $\alpha$ will result in the larger $I(Y; T)$, we will choose the larger solution when two positive solutions occur. Therefore, the solution is also 
\begin{align}
    \alpha_{\text{lb}_1} = \sqrt{\frac{-b + \sqrt{\Delta}}{2 a}}. 
\end{align}

{\bf Second lower bound of $\ln(\cosh{\alpha t})$:}  Based on $\ln(\cosh(x)) \geq x - \ln2, ~\forall~ x \geq 0$, an upper bound to $(d)$ in \eqref{eq:UB} is given by \eqref{eq:LB_2_derivation_2}. According to the upper bound on $\mathbb{P} (S \geq - \alpha \hat{x})$ in \eqref{eq:x_exp_bound}, the bound is further relaxed as \eqref{eq:UB_2_constraint_UB}. 
 $\alpha$ can be obtained by forcing the RHS of \eqref{eq:UB_2_constraint_UB} to equal $R$, i.e., 
\begin{align}
     \alpha^2\left[\frac{1}{2} + \frac{f(\beta)}{2} -g(\beta)\right]  + \ln2\overset{\Delta}{=}R.
    \label{eq: Bound_2_equality}
\end{align}
According to \eqref{eq:inequality}, when $R \geq \ln2$, there exists a positive solution to \eqref{eq: Bound_2_equality}, which is 
\begin{align}\label{eq:alpha_2}
    \alpha_{\text{lb}_2} = \sqrt{\frac{R - \ln2}{\frac{1}{2} + \frac{f(\beta)}{2} - g(\beta)}}
\end{align}

In the end, through \eqref{eq:I_y_t} we can compute the lower bound on $I(Y; T)$, i.e., $I_3(\alpha_{\text{lb}_1})$, and $I_4(\alpha_{\text{lb}_2})$, respectively.
\section{Achievability Proof of~\eqref{eq:obj_func_entry}}
\label{sec:achievability proof}
In order to prove that a solution of~\eqref{eq:obj_func_entry} is also a solution of~\eqref{eq:obj_func_1}, we only need to prove 
\begin{align}
  I(\x; \bt) \leq   \sum_{i\in [1:d_0]} I(x_i;t_i), \label{eq:app to be proved}
\end{align}
where $t_i~|~x_i \sim \mathcal N \left(\alpha_i \tanh(\beta x_i), 1\right)$. 
This is because by~\eqref{eq:obj_func_entry}, we have $\sum_{i\in [1:d_0]} I(x_i;t_i)=\sum_{i\in[1:d_0]}R_i=R$. If~\eqref{eq:app to be proved} holds, we also have 
$I(\x; \bt)\leq R$, coinciding with the secrecy constraint in~\eqref{eq:security constr}.
In the rest of this section, we will prove~\eqref{eq:app to be proved}. 

By our construction in~\eqref{eq:obj_func_entry}, it can be seen that 
for each $i\in [1:d_0]$, we have the following Markov chain
\begin{align}
    (x_1,t_1,x_2,t_2,\ldots,x_{i-1},t_{i-1},x_{i+1},t_{i+1},\ldots,x_{d_0},t_{d_0}) \longrightarrow x_i \longrightarrow t_i. \label{eq:markov chain}
\end{align}
By the chain rule of mutual information,  we have 
\begin{align}
    I(\x; \bt)= I(\x;t_1)+ I(\x;t_2| t_1)+ \cdots+ I(\x;t_{d_0}|t_1,\ldots,t_{d_0-1}).\label{eq:chain rule of MI}
\end{align}
We then focus on each term on the RHS of~\eqref{eq:chain rule of MI}. For each $i\in [1:d_0]$, we have 
\begin{subequations}
    \begin{align}
    &I(\x;t_i| t_1,\ldots,t_{i-1}) \nonumber \\
    &= I(x_i;t_i|t_1,\ldots,t_{i-1})+I(x_1,\ldots,x_{i-1},x_{i+1},\ldots,x_{d_0};t_i|x_i,t_1,\ldots,t_{i-1})\\
    & =I(x_i;t_i|t_1,\ldots,t_{i-1}) \label{eq:by markov} \\
    &\leq I(x_i,t_1,\ldots,t_{i-1};t_i) \\
    &=I(x_i;t_i) +I(t_1,\ldots,t_{i-1};t_i|x_i) \\
    &=I(x_i;t_i), \label{eq:by markov again}
\end{align}
\end{subequations}
where~\eqref{eq:by markov} and~\eqref{eq:by markov again} come from the Markov chain~\eqref{eq:markov chain}.
By taking~\eqref{eq:by markov again} into~\eqref{eq:chain rule of MI}, we can directly prove~\eqref{eq:app to be proved}.

\section{proof of Proposition \ref{prop:one-bit_2}}
\label{sec:proof_one_bit_2}
Based on the formulated  Markov chain, $Y \rightarrow \overline{X} \rightarrow T$, where $\overline{X} = \mathbbm{1}_{X \geq 0}$ and $T = \overline{X} \oplus \overline{N}$, and given the estimator as 
\begin{equation}
\widehat{Y}=
    \begin{cases}
        1 & \text{if } T = 1,\\
        -1 & \text{if } T = 0,
    \end{cases}
\end{equation}
the classification error of the this scheme is defined as 
    \begin{align}
         {\rm Pr}(Y \neq \widehat{Y}) &=  \frac12  P_{T|Y}(t=1|y=-1) + \frac12  P_{T|Y}(t=0|y=1) \nonumber \\
         &=\frac12 \sum_{\overline{X} \in \{0, 1\}} P_{T, \overline{X}|Y}(t=1, \overline{X}|y=-1) + \frac12 \sum_{\overline{X} \in \{0, 1\}}  P_{T, \overline{X} |Y}(t=0, \overline{X}  |y=1) \\
         &= \frac12 \sum_{\overline{X} \in \{0, 1\}} P_{T|\overline{X}}(t=1|\overline{X}) P_{\overline{X}|Y} ( \overline{X}|y=-1)  \nonumber \\
         &\quad + \frac12 \sum_{\overline{X} \in \{0, 1\}} P_{T|\overline{X}}(t=0|\overline{X}) P_{\overline{X}|Y} ( \overline{X}|y=1)   \label{eq:error_binary_chain}\\
         &= (1- p)q  + p (1- q),
    \end{align}
    where \eqref{eq:error_binary_chain} holds due the Markov chain.

\section{Proof of Proposition~\ref{prop:derterm_Q_2}}
\label{sec:proof_determ_Q_2}
Based on the Markov chain $Y \rightarrow {X} \rightarrow T$, where $T = \widehat{Q}(X)$, and given the estimator as 
\begin{equation}
\widehat{Y}=
    \begin{cases}
        1 & \text{if } T \geq 0,\\
        -1 & \text{if } T < 0,
    \end{cases}
\end{equation}
the classification error of he multi-level deterministic quantization is defined as 
    \begin{align}
        {\rm Pr}(Y \neq \widehat{Y}) &= \frac{1}{2} {\rm Pr} (\widehat{Y}=1|Y =-1) + \frac{1}{2} {\rm Pr} (\widehat{Y}=-1|Y =1) \nonumber \\
        &= \frac{1}{2} {\rm Pr} ( T \geq 0|Y =-1) + \frac{1}{2} {\rm Pr} (T<0|Y =1) \nonumber \\
        &= \frac{1}{2}\sum_{T \geq 0} \mathbb{P}(T |Y =-1) + \frac{1}{2}\sum_{T < 0} \mathbb{P}(T |Y =1).
    \end{align}
Assuming that with the quantization points for $T$ $t_1 \leq t_2  \cdots \leq t_{L}$, the $s$ index indicates the subscript of the quantization point which itself is less than zero, while the next one of which is larger than zero, i.e., $t_s <0$ and $t_{s+1} \geq 0$, then the classification error is given by 
\begin{align}
    {\rm Pr}(Y \neq \widehat{Y}) = \frac{1}{2}\sum_{j=s+1}^L \mathbb{P}(T = t_j|Y =-1) +  \frac{1}{2}\sum_{j=1}^s \mathbb{P}(T = t_j|Y =1),
\end{align}
where the conditional probability $\mathbb{P}(T = t_j|Y)$ is defined in \eqref{eq:det_p_t_y}. Hence the classification error can be further derived as 
\begin{align}
     {\rm Pr}(Y \neq \widehat{Y}) &= \frac{1}{2} \left(Q(q_0 - \beta ) - Q(q_s - \beta)\right) + \frac{1}{2} \left(Q(q_s + \beta) - Q(q_L + \beta) \right) \nonumber \\
     &=\frac12 \left(1 - Q(q_s - \beta) + Q(q_s + \beta)\right) \nonumber \\
     &= \frac12 \left(Q(-q_s + \beta) + Q(q_s + \beta)\right) \label{eq:error_multi_a},
\end{align}
where \eqref{eq:error_multi_a} holds according to the property of $Q$ function, $Q(x) = 1- Q(-x)$.

\section{Proof of Proposition~\ref{prop:soft_quantization}}
\label{sec:proof_soft_quantization}
Based on the Markov chain, $Y \rightarrow {X} \rightarrow T$, where $T = \alpha \widehat{X} + \widehat{N} =\alpha \tanh{(\beta X)} + \widehat{N}$, and given the estimator as 
\begin{equation}
\widehat{Y}=
    \begin{cases}
        1 & \text{if } T \geq 0,\\
        -1 & \text{if } T < 0,
    \end{cases}
\end{equation}
the classification error of the this scheme is defined as 
    \begin{align}
         {\rm Pr}(Y \neq \widehat{Y}) &=  \frac{1}{2} {\rm Pr} (\widehat{Y}=1|Y =-1) + \frac{1}{2} {\rm Pr} (\widehat{Y}=-1|Y =1) \nonumber \\
         &= \frac{1}{2} {\rm Pr} (T \geq 0|Y =-1) + \frac{1}{2} {\rm Pr} (T <0|Y =1) \nonumber \\
         & = \frac{1}{2} \int_{0}^{\infty} p_{T|Y}(t|y =-1) + \frac{1}{2} \int_{-\infty}^{0} p_{T|Y}(t|y =1), \label{eq:error_a} \end{align}
    where $p_{T|Y}$ is defined in \eqref{eq:conditional proba}.
    Therefore, \eqref{eq:error_a} can be further derived as 
    \begin{align}
        {\rm Pr}(Y \neq \widehat{Y}) &= \frac{1}{4 \pi} \int_{0}^{\infty} \int_{-\infty}^{\infty}e^{-\frac{(t - \alpha \tanh{(\beta x)})^2 + (x + \beta )^2 }{2}}  dxdt + \frac{1}{4 \pi} \int_{-\infty}^{0} \int_{-\infty}^{\infty}e^{-\frac{(t - \alpha \tanh{(\beta x)})^2 + (x - \beta )^2 }{2}}  dxdt\nonumber \\
        &=\frac{1}{4 \pi} \int_{0}^{\infty} \int_{-\infty}^{\infty}e^{-\frac{(t - \alpha \tanh{(\beta x)})^2 + (x + \beta )^2 }{2}}  dxdt + \frac{1}{4 \pi} \int_{0}^{\infty} \int_{-\infty}^{\infty}e^{-\frac{(t + \alpha \tanh{(\beta x)})^2 + (x - \beta )^2 }{2}}  dxdt \nonumber \\
        &= \frac{1}{2 \pi} \int_{0}^{\infty} \int_{-\infty}^{\infty} e^{-\frac{t^2 + \alpha^2 \tanh^2{(\beta x)} + x^2 +\beta^2}{2}} \cosh(t \alpha \tanh{(\beta x)} - \beta x)  dxdt \label{eq: error_c} 
    \end{align}

\section{Further Discussions on Limiting Cases in Remark \ref{rem:limiting_cases_soft}}
\label{ch:extreme_points}
In Figure~\ref{fig:beta_one_dim_22}, we present, following the discussions in Remark~\ref{rem:limiting_cases_soft}, numerical behaviors of the two proposed lower bounds at the extreme points where $R$ is rather large and $R=0$, for both $\beta=1$ and $\beta=\sqrt{2}$. 
We observe that:
\begin{itemize}
    \item[(i)] for $R = 0$ nats, we have $\alpha_{\text{lb}_1}= 0$ (per its definition in \eqref{eq:LB_1} as already discussed in Remark~\ref{rem:limiting_cases_soft}, so that $I_3(\alpha_{\text{lb}_1}) = 0$; and
    \item[(ii)] as $R \rightarrow \infty$ nats, we have that both $\alpha_{\text{lb}_1}$ and $\alpha_{\text{lb}_2}$ reach infinity, so that both lower bounds $I_3(\alpha_{\text{lb}_1})$ and $I_4(\alpha_{\text{lb}_2}) $ converge to the optimal point of $I(X; Y)$.
\end{itemize}
This thus provides numerical evidence for the statement made in Remark~\ref{rem:limiting_cases_soft}.

\begin{figure}[htb]
\centering
        \begin{subfigure}[b]{0.4\textwidth}
        \begin{tikzpicture}[scale=0.6]
\pgfplotsset{every major grid/.append style={densely dashed}}
\begin{axis}[%
width=3.528in,
height=3.029in,
at={(1.011in,0.767in)},
scale only axis,
xmin=0,
xmax=40,
xlabel style={font=\color{white!30!black}},
xlabel={$I(X; T)$ (nats)},
ymin=0,
ymax=0.35,
ylabel style={font=\color{white!15!black}},
ylabel={$I(Y; T)$ (nats)},
axis background/.style={fill=white},
xmajorgrids,
ymajorgrids,
legend style={at={(0.725,0. 0)}, anchor=south west, legend cell align=left, align=left, draw=white!15!black}
]
\addplot [color=mycolor1, line width=2.5pt, mark size=3.0pt, mark=+, mark options={solid, mycolor1}]
  table[row sep=crcr]{%
0	0.3368\\
2	0.3368\\
5	0.3368\\
10	0.3368\\
15	0.3368\\
20	0.3368\\
40	0.3368\\
};
\addlegendentry{$I(X; Y)$}

\addplot [color=mycolor2, line width=2.5pt, mark size=3.0pt, mark=o, mark options={solid, mycolor2}]
  table[row sep=crcr]{%
0	8.88178419700125e-16\\
2	0.286880886827405\\
5	0.317537820181277\\
10	0.326024326222184\\
15	0.328984493223333\\
20	0.330552474900515\\
40	0.332837877073753\\
};
\addlegendentry{$I_3(\alpha_{\text{lb}_1})$}

\addplot [color=mycolor3, line width=2.5pt, mark size=3.0pt, mark=triangle, mark options={solid,  rotate =180, mycolor3}]
  table[row sep=crcr]{%
2	0.293794664346162\\
5	0.318475062770588\\
10	0.326276309470357\\
15	0.329106537870612\\
20	0.330625003451726\\
40	0.333054444880549\\
};
\addlegendentry{$I_4(\alpha_{\text{lb}_2})$}
\end{axis}
\end{tikzpicture}%
        \caption{$\beta=1$.}
        \label{fig:beta_1_22}
        \end{subfigure}
        ~
        \begin{subfigure}[b]{0.4\textwidth}
        \begin{tikzpicture}[scale=0.6]
        \pgfplotsset{every major grid/.append style={densely dashed}}
        \begin{axis}[%
        width=3.528in,
        height=3.029in,
        at={(1.011in,0.767in)},
        scale only axis,
        xmin=0,
        xmax=20,
        xlabel style={font=\color{white!30!black}},
        xlabel={$I(X; T)$ (nats)},
        ymin=0,
        ymax=0.51,
        ylabel style={font=\color{white!15!black}},
        ylabel={$I(Y; T)$ (nats)},
        axis background/.style={fill=white},
        xmajorgrids,
        ymajorgrids,
        legend style={at={(0.725,0.0)}, anchor=south west, legend cell align=left, align=left, draw=white!15!black}
        ]
        
        \addplot [color=mycolor1, line width=2.5pt, mark size=3.0pt, mark=+, mark options={solid, mycolor1}]
          table[row sep=crcr]{%
        0	0.5\\
        2	0.5\\
        5	0.5\\
        10	0.5\\
        15	0.5\\
        20	0.5\\
        };
        \addlegendentry{$I(X; Y)$}
       
        \addplot [color=mycolor2, line width=2.5pt, mark size=3.0pt, mark=o, mark options={solid, mycolor2}]
          table[row sep=crcr]{%
        0	0\\
        2	0.4700147298398\\
        5	0.486803275464515\\
        10	0.491878274709773\\
        15	0.49451359957651\\
        20	0.495620362972182\\
        };
        \addlegendentry{$I_3(\alpha_{\text{lb}_1})$}
        
        \addplot [color=mycolor3, line width=2.5pt, mark size=3.0pt, mark=triangle, mark options={solid, rotate=180, mycolor3}]
          table[row sep=crcr]{%
        2	0.474172671520654\\
        5	0.487358859359633\\
        10	0.492065500834975\\
        15	0.494666149778046\\
        20	0.49495284187371\\
        };
        \addlegendentry{$I_4(\alpha_{\text{lb}_2})$}    
        \end{axis}
        \end{tikzpicture}%
        \caption{$\beta=\sqrt{2}$.}
        \label{fig:beta_sqrt2_22}
        \end{subfigure}
        
 \caption{The objective mutual information $I(Y; T)$ versus the constraint $I(X; T)$ for two proposed lower bounds $I_3(\alpha_{\text{lb}_1})$ and $I_4(\alpha_{\text{lb}_2})$ when $\beta \in \{1, \sqrt{2}\}$.}
     \label{fig:beta_one_dim_22}
\end{figure}
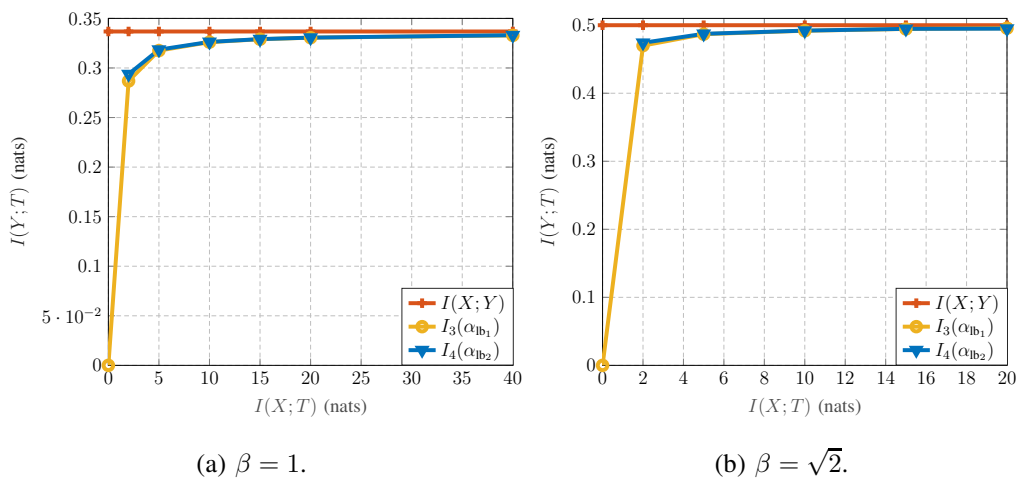

\section{MNIST Data Pre-processing}
\label{sec:MNIST_processing}
Recall that our theoretical results assume that the input data $\x \in \RR^{d_0}$ are drawn from the following symmetric binary Gaussian mixture model
\begin{equation}\label{eq:GMM}
  \mathcal C_1: \x \sim \mathcal N(-\bbeta, \I_{d_0}), \quad \mathcal C_2: \x \sim \mathcal N(+\bbeta, \I_{d_0}).
\end{equation}
For vectorized MNIST images of dimension $p = 784$ composed of ten classes (number $0$ to $9$), here we choose the images of number $7$ versus $9$ to perform binary classification. 
For the sake of computational complexity, we apply a random projection that reduce the $784$-dimensional raw data vector $\tilde \x$ to obtain a three-dimensional feature $\x$, i.e., $\x = \mathbf{W} \tilde \x \in \RR^3$, with the i.i.d.\@ entries of $\mathbf{W} \in \RR^{3 \times 783}$ following a standard Gaussian distribution. 
Then, we collect three-dimensional feature matrices $\X_1 \in \RR^{3 \times n_1}$ and $\X_2 \in \RR^{3 \times n_2}$ of class $\mathcal C_1$ and $\mathcal C_2$, and we perform further pre-processing to make them closer to \eqref{eq:GMM}. 
First, the empirical means of each class are computed as $\hat \bmu_1 = \frac{1}{n_1} \X_1 \one_{n_1}$ and $\hat \bmu_2 =\frac{1}{n_2} \X_2 \one_{n_2}$. 
We then compute the empirical covariances as $\hat \C_1 = \frac1{n_1} (\X_1 - \hat \mu_1 \one_{n_1}^\T) (\X_1 - \hat \mu_1 \one_{n_1}^\T)^\T$ and similarly for $\X_2$. 
Finally, whitened features matrices are obtained via
\begin{equation}\label{eq:white}
    \tilde \X_1 = \frac12 (\hat \bmu_1 - \hat \bmu_2) + \hat \C_1^{-\frac12} (\X_1 - \hat \bmu_1 \one_{n_1}^\T), 
\end{equation}
for class $\mathcal C_1$ and similarly $\tilde \X_2$ for class $\mathcal C_1$. {In the simulation, we choose $2000$ samples of each class to estimate mutual information using Jackknife approach.}

\bibliographystyle{IEEEtran}
\bibliography{IEEEabrv,liao}

\end{document}